\newtheorem{theorem}{Theorem}[section]    
\newtheorem{definition}{Definition}[section] 
\newtheorem{corollary}[theorem]{Corollary}    
\newtheorem{proposition}[theorem]{Proposition}    
\newtheorem{lemma}[theorem]{Lemma}    
\renewcommand{\qed}{\hfill{$\rule{6pt}{6pt}$}} 
\renewenvironment{proof}{\noindent{\bf Proof:}}{\qed\\}
\newenvironment{proofof}[1]{\noindent{\bf Proof of #1:}}{\qed\\}
\numberwithin{equation}{section} 
\newcommand{\complex}{{\mathbb C}}
\newcommand{\ket}[1]{| #1 \rangle}
\newcommand{\bra}[1]{ \langle #1 |}
\newcommand{\ketbra}[2]{| #1 \rangle\!\langle #2 |}
\newcommand{\norm}[1]{\left\| #1 \right\|}
\newcommand{\trnorm}[1]{\left\| #1 \right\|_{\mathrm{tr}}}
\newcommand{\size}[1]{\left| #1 \right|}
\newcommand{\set}[1]{\left\{ #1 \right\}}
\newcommand{\floor}[1]{{\lfloor #1 \rfloor}}
\newcommand{\trace}{{\mathrm{Tr}}}
\newcommand{\Order}{{\mathrm{O}}}
\newcommand{\order}{{\mathrm{o}}}
\newcommand{\density}[1]{\ketbra{#1}{#1}}
\newcommand{\eqdef}{\stackrel{\mathrm{def}}{=}}
\newcommand{\expct}{{\mathbb E}}
\newcommand{\identity}{{\mathbb I}}
\newcommand{\tensor}{\otimes}
\newcommand{\xor}{\oplus}
\newcommand{\adjoint}{\dagger}
\newcommand{\suppress}[1]{}
\newcommand{\comment}[1]{}
\newcommand{\etal}{\emph{et al.\/}}
\newcommand{\rI}{{{\mathrm I}}}
\newcommand{\rH}{{{\mathrm H}}}
\newcommand{\rS}{{{\mathrm S}}}
\newcommand{\rF}{{{\mathrm F}}}
\newcommand{\fh}[2]{{\mathfrak h}\!\left( #1 \;,\; #2 \right) }
\newcommand{\eps}{\varepsilon}
\newcommand{\tM}{\tilde{M}}
\newcommand{\tP}{\tilde{P}}
\newcommand{\tPi}{\tilde{\Pi}}
\newcommand{\ic}{\mathrm{IC}}
\newcommand{\qic}{\mathrm{QIC}}
\newcommand{\tqic}{\mathrm{\tilde{Q}IC}}
\newcommand{\sA}{{{\mathsf A}}}
\newcommand{\sB}{{{\mathsf B}}}
\newcommand{\dyck}{\mbox{\textsc{Dyck}}}
\newcommand{\Index}{\mbox{\textsc{Index}}}
\newcommand{\AIndex}{{\sc Augmented Index}}
\newcommand{\Asc}{\mbox{\sc Ascension}}
\newcommand{\cX}{{{\mathcal X}}}
\newcommand{\cY}{{{\mathcal Y}}}
\newcommand{\cS}{{{\mathcal S}}}
\newcommand{\cA}{{{\mathcal A}}}
\newcommand{\cB}{{{\mathcal B}}}
\newcommand{\cH}{{{\mathcal H}}}
\newcommand{\cM}{{{\mathcal M}}}
\newcommand{\hX}{\hat{X}}
\newcommand{\hY}{\hat{Y}}
\newcommand{\hZ}{\hat{Z}}
\newcommand{\hK}{\hat{K}}
\newcommand{\hB}{\hat{B}}
\title{ {\bf The space complexity of recognizing well-parenthesized 
expressions in the streaming model: \\
the Index function revisited}\thanks{The results on quantum
communication in this article were presented at the 15th Workshop on
Quantum Information Processing, QIP 2012, Dec., 2011.}
}
\author{Rahul Jain\thanks{%
Centre for Quantum Technologies and 
Department of Computer Science,
S15 \#04-01, 3 Science Drive 2, National University of Singapore,
Singapore 117543.
Email: {\tt rahul@comp.nus.edu.sg}.
Work done in part while visiting
Institute for Quantum Computing, University of Waterloo.
This work is supported by the Singapore Ministry of Education
Tier 3 Grant and the Core Grants of the Center for Quantum
Technologies, Singapore.
}
\and
Ashwin Nayak\thanks{%
Department of Combinatorics and Optimization,
and Institute for Quantum Computing, University of Waterloo,
200 University Ave.\ W.,
Waterloo, ON, N2L 3G1, Canada.
Email: {\tt ashwin.nayak@uwaterloo.ca}.
Work done in part at
Perimeter Institute for Theoretical Physics,
and while visiting
Center for Quantum Technologies, National University of Singapore.
Research supported in part by NSERC Canada, CIFAR, an ERA (Ontario),
QuantumWorks, MITACS, and ARO (USA). Research at
Perimeter Institute is supported in part by the Government of Canada
through Industry Canada and by the Province of Ontario through MRI.}
}
\date{March 6, 2014}
\begin{document}



\maketitle

\abstract{
We show an~$\Omega(\sqrt{n}/T)$ lower bound for the space required by
any unidirectional constant-error randomized~$T$-pass streaming algorithm
that recognizes whether an expression over two types of parenthesis 
is well-parenthesized. This proves a conjecture due to Magniez, Mathieu, 
and Nayak (2009) and rigorously establishes that bidirectional 
streams are exponentially more efficient in space usage as compared with
unidirectional ones.
We obtain the lower bound by analyzing the information that is
necessarily revealed by the players about their respective inputs 
in a two-party communication protocol for a variant of the Index
function, namely Augmented Index. We show that in any communication protocol
that computes this function correctly with constant error on the uniform 
distribution (a ``hard'' distribution), either Alice reveals~$\Omega(n)$
information about her $n$-bit input, or Bob reveals~$\Omega(1)$ information
about his $(\log n)$-bit input, even when the inputs are drawn from an
``easy'' distribution, the uniform distribution over inputs which
evaluate to~$0$. The information cost trade-off is obtained by a 
novel application of the conceptually simple and familiar ideas 
such as \emph{average encoding\/} and the \emph{cut-and-paste 
property\/} of randomized protocols.

Motivated by recent examples of exponential savings in space by
streaming \emph{quantum\/} algorithms, we also study quantum protocols
for Augmented Index. Defining an appropriate notion of information cost 
for quantum protocols involves a delicate balancing act between its 
applicability and the ease with which we can analyze it. We define a 
notion of quantum information cost which reflects some of the
non-intuitive properties of quantum information. We show that in quantum 
protocols that compute the Augmented Index function correctly with
constant error on the uniform distribution, either Alice reveals~$\Omega(n/t)$
information about her $n$-bit input, or Bob reveals~$\Omega(1/t)$
information about his $(\log n)$-bit input, where~$t$ is the number of
messages in the protocol, even when the inputs are drawn from the
abovementioned easy distribution. While this trade-off demonstrates 
the strength of our proof techniques, it does not lead 
to a space lower bound for checking parentheses. We leave such an 
implication for quantum streaming algorithms as an intriguing open question.
}

\textbf{Keywords:}
streaming algorithm, space complexity, Dyck language, communication 
complexity, information cost, Augmented Index, quantum information theory,
quantum communication


\section{Introduction}
\label{sec-introduction}

Streaming algorithms~\cite{Muthukrishnan05} are designed to process 
massive input data, which cannot 
fit entirely in computer memory. Random access to such input is prohibitive,
so ideally we would like to process it with a single sequential scan.
Furthermore, during the computation, the algorithms are compelled to 
use space that is much smaller than the length of the input. Formally, 
streaming algorithms access the input sequentially,
one symbol at a time, a small number of times (called passes), while
attempting to solve some information processing task using as little
space (and time) as possible.
\suppress{
We refer the reader to the
text~\cite{Muthukrishnan05} for a more thorough introduction to this topic.
}

One-pass streaming algorithms that use constant space and time recognize
precisely the set of regular languages. It is thus natural to ask what
the complexity of languages higher up in the Chomsky hierarchy is in the
streaming model. In this work, we focus on a concrete such problem, that
of checking whether an expression with different types of parenthesis 
is well-formed.  
The problem is formalized through the language~$\dyck(2)$, which 
consists of all well-parenthesized expressions
over two types of parenthesis, denoted below by~$a, \overline{a}$
and~$b,\overline{b}$, with the bar indicating a closing parenthesis.
Formally,
$\dyck(2)$ is the language over alphabet~$\Sigma = \set{a, \overline{a},
b,\overline{b}}$ defined recursively as
\[
\dyck(2)
    \quad = \quad \epsilon + \bigl( a \cdot \dyck(2) \cdot \overline{a}
                  + b \cdot \dyck(2) \cdot \overline{b} \bigr) 
                  \cdot \dyck(2) \enspace,
\]
where~$\epsilon$ is the empty string, `$\cdot$' indicates concatenation
of strings (or subsets thereof) and `$+$' denotes set union.
This deceptively simple language is in a certain precise sense complete
for the class of context-free languages~\cite{ChomskyS63}, and is
implicit in a myriad of information processing tasks.

There is a straightforward algorithm that recognizes $\dyck(2)$ with
logarithmic space, as we may run through all possible levels of nesting, and
check parentheses at the same level. While this scheme is highly
space-efficient, it may make~$\Omega(n)$ passes over the input in the
worst case, on instances of length~$n$.
It is not obvious if we can translate this scheme to a 
streaming algorithm with a small number of passes over the input.
By appealing to the communication complexity of the equality function,
we can deduce that any \emph{deterministic\/} streaming algorithm for 
$\dyck(2)$ that makes~$T$ passes over the input requires 
space~$\Omega(n/T)$ on instances of length~$n$. Therefore, any streaming 
algorithm with smaller space complexity, if one exists, would necessarily 
be randomized.  One such algorithm is
suggested by a small-space algorithm for the word problem in the free 
group with~$2$ generators. This is a relaxation 
of $\dyck(2)$ in which local
simplifications~$\bar{p} p = \epsilon$ are allowed in addition to~$p
\bar{p} = \epsilon$ for every type of parenthesis~$(p, \bar{p})$.
There is a logarithmic space (randomized) algorithm for solving the word 
problem~\cite{LiptonZ77} that can easily be massaged into a one-pass
streaming algorithm with polylogarithmic space. Again, this algorithm
does not extend to~$\dyck(2)$.

We rigorously establish the impossibility of recognizing $\dyck(2)$ with
logarithmic space with a small number of passes in the streaming model,
even with randomized algorithms.
\begin{theorem}
\label{thm-1}
For any~$T \ge 1$,
any unidirectional randomized~$T$-pass streaming algorithm that 
recognizes length~$n$ instances of~$\dyck(2)$ with a constant 
probability of error uses space~$\Omega(\sqrt{n}/T)$.
\end{theorem}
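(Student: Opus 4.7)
The plan is to reduce the streaming problem to a two-party communication problem, establish an information cost trade-off for a multi-instance variant of Augmented Index, and then translate the communication lower bound to a streaming space lower bound. The first step is standard: any unidirectional $T$-pass streaming algorithm using space $s$, applied to an input split between Alice (a prefix) and Bob (a suffix), yields a two-party protocol with $O(T)$ messages each of length $s$, and hence total communication $O(Ts)$. So it suffices to exhibit a hard distribution of $\dyck(2)$ instances on which the communication complexity is $\Omega(\sqrt{n})$.

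Next I would embed roughly $\sqrt{n}$ independent copies of Augmented Index into a single $\dyck(2)$ instance of length $n$. The construction I have in mind uses $\sqrt{n}$ blocks, each of nesting depth $\sqrt{n}$, with Alice holding all the opening parentheses and Bob all the closing ones. The $k$-th opening parenthesis within a block encodes a bit $x_k$ of Alice's input via the choice of $a$ versus $b$; in each block Bob's task is to produce a closing parenthesis at a specific level whose type matches the corresponding opening parenthesis. Checking well-parenthesization therefore reduces to $\sqrt{n}$ simultaneous Augmented Index instances, each with an $O(\sqrt{n})$-bit Alice input and an $O(\log n)$-bit Bob input, so that the combined inputs fit into $n$ symbols.

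The heart of the argument is an information-cost trade-off for a single Augmented Index instance: in any private-coin protocol that computes the function with constant error under the uniform distribution, either Alice's messages reveal $\Omega(m)$ bits about her $m$-bit input, or Bob's messages reveal $\Omega(1)$ bits about his index, where information is measured under the \emph{easy} subdistribution supported on $0$-valued inputs. I would prove this by contradiction: assuming both leakages are small, the average-encoding lemma shows that on the easy distribution the transcript distributions induced by different inputs of Alice (respectively, different indices of Bob) are close in Hellinger distance. The cut-and-paste property of randomized protocols, which relates the transcript distributions at $(x,y), (x',y'), (x,y'), (x',y)$ symmetrically, then lets me transplant Bob's view from a hard $1$-output instance into a $0$-output setting without changing the transcript distribution appreciably, contradicting correctness on the uniform distribution.

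Finally, I would lift this single-instance trade-off to the $\sqrt{n}$-fold embedding via a direct-sum step: if Alice reveals $\Omega(\sqrt{n})$ bits on an $\alpha$-fraction of the copies then, by additivity of mutual information across the independent coordinates, her total information is $\Omega(\alpha n)$; on the remaining copies, Bob reveals $\Omega(1)$ per copy, for a total of $\Omega((1-\alpha)\sqrt{n})$. Both quantities lower-bound the total communication, so $Ts = \Omega(\sqrt{n})$ and hence $s = \Omega(\sqrt{n}/T)$. The principal obstacle I anticipate is the single-instance information trade-off: the tension is that information is measured under a distribution on which a trivial constant-output protocol succeeds with essentially zero information cost, while the correctness hypothesis lives on the uniform distribution; the cut-and-paste identity together with careful averaging over the easy distribution is what bridges the two, and making the constants and conditioning line up cleanly across the $\sqrt{n}$ coordinates will require delicate bookkeeping.
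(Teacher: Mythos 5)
Your proposal follows the paper's approach closely: reduce the streaming algorithm to a communication protocol, embed $\Theta(\sqrt{n})$ independent copies of Augmented Index into a $\dyck(2)$ instance of length $\Theta(n)$, prove an information-cost trade-off for a single Augmented Index instance under the easy distribution $\mu_0$ via the Average Encoding Theorem and the Cut-and-Paste property, and then lift to the full instance by a direct-sum argument. This is exactly the route taken (the paper proves the trade-off as Theorem~\ref{thm-main}, the reduction as Theorem~\ref{thm-mmn}, and combines them in Corollary~\ref{thm-dyck}).

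One piece of your accounting is imprecise in a way that matters. You assert that ``her total information is $\Omega(\alpha n)$'' and that ``both quantities lower-bound the total communication.'' In the paper's reduction the direct sum is applied \emph{asymmetrically}: the chain rule bounds $\sum_j \ic^{\sB}_{\mu_0}(\Pi_j) \le sT$, because all of Bob's information is carried by the length-$sT$ stream of messages sent across the single ``cut'' assigned to Bob; this yields a coordinate $j_0$ with $\ic^{\sB}_{\mu_0}(\Pi_{j_0}) \le sT/n$. By contrast, Alice's information cost is bounded \emph{per coordinate} by $\ic^{\sA}_{\mu_0}(\Pi_j) \le sT$ (each $\Pi_j$ involves a different subset of the multiparty messages, so the sum over $j$ does not telescope into a bound by $sT$). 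The trade-off is then applied only to the single good coordinate $j_0$, giving $sT = \Omega(n)$ for the paper's parameterization. Your fractional argument reaches the right conclusion, but the step ``sum of Alice's information across coordinates $\le$ total communication'' needs the justification that is precisely the delicate bookkeeping you flag; the paper sidesteps it. Also, the split ``Alice holds all openings, Bob all closings'' is not a contiguous (prefix/suffix) split and so does not directly correspond to a cut in the stream; the paper's instances interleave openings and closings within each block, and the two-party reduction places the cut between player $\sA_{j}$ and $\sB_{j}$ for a chosen coordinate $j$.
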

A more precise statement of this theorem is presented as
Corollary~\ref{thm-dyck} later in this article.
\suppress{
(Similarly, the theorems we state below are made more precise in later 
sections.)
}

$\dyck(2)$ was first studied in the context of the streaming model 
by Magniez, Mathieu, and Nayak~\cite{MagniezMN10}. They were motivated
by its practical relevance, e.g., its relationship to the processing 
of large XML files, and by the connection between formal language theory 
and complexity in the context of processing massive data.
They overcome the apparent
difficulties described above and present sublinear space randomized 
streaming algorithms for~$\dyck(2)$. The first makes \emph{one\/} pass 
over the input, recognizes well-parenthesized expressions with
space~$\Order(\sqrt{n\log n}\,)$ bits, and has polynomially small
probability of error. Moreover, they prove that this one-pass algorithm
is optimal.  They establish that any one-pass randomized algorithm that makes 
error at most~$1/n\log n$ uses space~$\Omega(\sqrt{n\log n})$.
Theorem~\ref{thm-1} establishes a similar result for \emph{multi-pass\/}
streaming algorithms. The bound for one-pass algorithms given by
Theorem~\ref{thm-1} is a factor 
of~$\sqrt{\log n}$ better than the one in Ref.~\cite{MagniezMN10} for
constant error probability, but falls short of optimal (by the same
factor) for polynomially small error.

In the standard model for streaming algorithms, access to the input symbols 
is provided in the \emph{same fixed order in every pass over the input\/}.
This reflects a constraint of the infrastructure available to us in
practice. Theorem~\ref{thm-1}
applies to such \emph{unidirectional\/} algorithms. Perhaps surprisingly,
Magniez \etal{} showed that the demand on space shrinks drastically when 
algorithms for $\dyck(2)$ are allowed another pass over the input in the
\emph{reverse\/} direction.  They presented a second algorithm that makes
two passes in opposite directions over the input, uses
only~$\Order(\log^2 n)$ space, and has polynomially small probability of 
error.  A question that naturally arose is whether this is an artefact of
the algorithm, or if we could achieve similar reduction in space usage
by making multiple passes in the same direction. Magniez \etal{} conjecture
that a bound similar to that for the one-pass algorithms hold for 
multi-pass streaming algorithms if all passes are made in the same direction.
Theorem~\ref{thm-1} proves this conjecture and establishes the first 
natural example for which unidirectional multi-pass streaming algorithms are
much less powerful than bidirectional ones. More importantly, existing 
computing infrastructure only supports unidirectional streams, and this 
result confirms that we cannot reproduce the performance of the bidirectional
algorithm within it.
\suppress{
confirms the intuition
that the ability to scan the input in either direction gives streaming
algorithms additional computational firepower.
A relatively straightforward generalization of
the one-pass algorithm in Ref.~\cite{MagniezMN10} gives us a
unidirectional randomized~$T$-pass streaming algorithm that uses
space~$\Order\bigl(\sqrt{(n \log n)/T}\,\bigr)$ and has polynomially small
probability of error. The lower bound we derive thus comes within a
factor~$\sqrt{\log n} /T^{1/2}$ of optimal.
}

Theorem~\ref{thm-1} is a consequence of a lower bound that we establish 
for the ``information cost'' of two-party communication protocols for a
variant of the $\Index$ problem. In the $\Index$ problem, one
party, Alice, is given an~$n$-bit string~$x$, and the other party, Bob,
is given an integer~$k \in [n]$. Their goal is to determine the
bit~$x_k$ by communicating with each other.  In the variant we study, the
player holding the index also receives a portion of the other party's
input.  More formally, Alice holds an~$n$-bit string~$x$, and Bob, holds
an integer~$k \in [n]$, the prefix~$x[1,k-1]$ of~$x$, and a bit~$b \in
\set{0,1}$. The goal is to compute the function~$f_n(x,(k, x[1,k-1],
b)) = x_k \xor b$, i.e., to determine whether~$b = x_k$ or not.
This problem was studied in the one-way communication model, with
communication from Alice to Bob, as
``serial encoding''~\cite{AmbainisNTV99,Nayak99}. Lower bounds
on its quantum communication complexity were derived and used to
establish exponential lower bounds on the size of one-way quantum 
finite automata. In later works, the problem was studied as ``Augmented
Index''; the linear lower bound was re-derived for classical
communication, and used to establish lower bounds for streaming and
sketching (see, e.g.,~\cite{KaneNW10,DoBaIPW10}). 
The problem, called ``the Mountain problem'' by Magniez, Mathieu, and
Nayak~\cite{MagniezMN10}, was central to the proof of optimality of 
the one-pass streaming algorithm for \dyck(2). We elaborate on this
later in this section.

Informally speaking, we show that in any communication protocol that
computes the \AIndex\ function~$f_n$ with constant error on 
the uniform distribution~$\mu$ (a ``hard distribution''), either Alice
reveals~$\Omega(n)$ information about her $n$-bit input~$x$, or Bob
reveals~$\Omega(1)$ information about his $(\log n)$-bit input~$k$,
even when the inputs are drawn from an ``easy distribution'' ($\mu_0$, the
uniform distribution over~$f_n^{-1}(0)$). We formally define the notion of 
information cost~$(\ic_{\lambda}^\sA(\Pi),\ic_{\lambda}^\sB(\Pi))$ for a
protocol~$\Pi$ for the two players Alice ($\sA$) and Bob ($\sB$) with respect
to the distribution~$\lambda$ in Section~\ref{sec-main}, and show:
\begin{theorem}
\label{thm-2}
In any two-party randomized communication protocol~$\Pi$ for the
\AIndex\ function~$f_n$ that makes constant error at
most~$\eps \in [0, 1/4)$ on the uniform distribution~$\mu$ over inputs,
either~$\ic_{\mu_0}^\sA(\Pi) \in \Omega(n)$ or~$\ic_{\mu_0}^\sB(\Pi) \in
\Omega(1)$.
\end{theorem}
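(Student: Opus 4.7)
I would prove the contrapositive: assuming $\ic^A_{\mu_0}(\Pi) \leq \alpha n$ and $\ic^B_{\mu_0}(\Pi) \leq \beta$ for sufficiently small constants $\alpha, \beta$ depending on $\eps$, I would derive that $\Pi$ must err with probability more than~$\eps$ on~$\mu$. The proof rests on two standard tools: the \emph{average encoding} inequality, which converts a small conditional mutual information $I(U;\Pi \mid V)$ into a small average squared Hellinger distance between the conditional transcript law and its marginal over $U$ given $V$; and the \emph{cut-and-paste} property of randomized protocols, which gives the identity $h(\Pi_{x,y}, \Pi_{x',y'}) = h(\Pi_{x,y'}, \Pi_{x',y})$ on any combinatorial rectangle of inputs.

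Under $\mu_0$, Alice's input $X$ is uniform on $\{0,1\}^n$, Bob's index $K$ is uniform on $[n]$ and independent of $X$, and Bob's full input $Y_B = (K, X[1,K{-}1], X_K)$ is a deterministic function of $X$ and $K$. Hence $\ic^A_{\mu_0}(\Pi) = I(X[K{+}1,n]; \Pi \mid K, X[1,K])$ and $\ic^B_{\mu_0}(\Pi) = I(K; \Pi \mid X)$. Combining a chain-rule expansion of these two quantities across positions with the average encoding inequality, I would show that for a ``typical'' triple $(k, z, w)$ with $z \in \{0,1\}^{k-1}$ and $w \in \{0,1\}^{n-k}$, the transcript distribution on the instance $(x, y) = (z \cdot b \cdot w,\, (k, z, b))$ is close in Hellinger distance to a common anchor distribution obtained by averaging the transcript over Alice's suffix $X[k{+}1, n]$ and over Bob's index $K$.

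The key rectangle is then formed by setting $x^0 = z \cdot 0 \cdot w$, $x^1 = z \cdot 1 \cdot w$, $y^0 = (k,z,0)$, and $y^1 = (k,z,1)$. The diagonal corners $(x^0, y^0)$ and $(x^1, y^1)$ both lie in $\mu_0$ with $f_n = 0$, while the off-diagonal corners $(x^0, y^1)$ and $(x^1, y^0)$ satisfy $f_n = 1$. By the previous step and the triangle inequality for $h$, each diagonal corner is at Hellinger distance $O(\sqrt{\alpha} + \sqrt{\beta})$ from the anchor, and hence from each other. Cut-and-paste at once transfers this bound to the pair of off-diagonal corners, and one further triangle-inequality step bounds the Hellinger distance between \emph{every} pair among the four corner transcripts by the same quantity.

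Correctness of $\Pi$ on $\mu$ at error $\eps < 1/4$ forces the total variation distance between any $\mu_0$-corner and any $\mu_1$-corner to be at least $1 - 2\eps$, so their squared Hellinger distance is at least a positive constant $c_\eps$. This contradicts the bound from the previous paragraph once $\alpha, \beta$ are taken below appropriate thresholds determined by~$\eps$. The principal obstacle is the construction of the anchor distribution: since $X_k$ is conditioned on in $\ic^A_{\mu_0}$ (it equals the bit $B$ inside $Y_B$), neither information cost directly controls ``flipping $X_k$'' in Alice's input. The crux is therefore to craft an anchor reachable from all four rectangle corners by Hellinger jumps controlled separately by $\sqrt{\alpha}$ and $\sqrt{\beta}$; this requires a careful combination of averages over Alice's suffix (bounded via $\ic^A_{\mu_0}$) and over Bob's index (bounded via $\ic^B_{\mu_0}$), tuned so that the respective Hellinger losses add rather than compound.
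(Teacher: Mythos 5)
Your proposal correctly identifies the main tools---average encoding and cut-and-paste---and the broad shape of the argument: bound Hellinger distances between transcripts across the corners of an input rectangle, then contrast with the lower bound coming from correctness on~$\mu$. But the rectangle you construct cannot support the argument. You flag the obstacle yourself at the end, and it is not a detail to be patched; it is the point where the proof must diverge.

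Your rectangle fixes Bob's index at~$k$ and varies only $X_k$ and~$b$. Neither information cost controls such a change. The quantity $\ic^\sA_{\mu_0}(\Pi) = \rI(X : M^0 \mid X[1,K]\,R)$ conditions on $X[1,K]$, which \emph{includes} $X_K$, so it measures information only about the suffix $X[K+1,n]$ and says nothing about flipping $X_k$; averaging your anchor over the suffix does not touch position~$k$. Likewise $\ic^\sB_{\mu_0}(\Pi) = \rI(K : M^0 \mid XR)$ conditions on~$X$, so it controls changes in~$K$, which your rectangle keeps fixed, and says nothing about~$b$, which under $\mu_0$ equals $X_K$ and is determined by~$X$ and~$K$. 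There is therefore no anchor reachable from both of your diagonal corners by jumps controlled by $\sqrt{\alpha}$ and $\sqrt{\beta}$, because the move between them changes a bit that both information costs condition on. A second, independent defect: even granting your diagonal bound, the cut-and-paste identity on your rectangle equates a $0$-vs-$0$ distance to a $1$-vs-$1$ distance (with $f_n(x^0,y^0)=f_n(x^1,y^1)=0$ and $f_n(x^0,y^1)=f_n(x^1,y^0)=1$, the identity $\fh{M(x^0,y^0)}{M(x^1,y^1)} = \fh{M(x^0,y^1)}{M(x^1,y^0)}$ never mixes function values), so no triangle step thereafter can reach the $0$-vs-$1$ distance that correctness bounds from below.

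The paper's proof (Theorem~\ref{thm-main}, via Lemma~\ref{thm-close}) resolves this by varying Bob's \emph{index} rather than the bit at a fixed index. Taking $j \le n/2 < l$, consider Alice's inputs $\{U_l, U'_l\}$ (the string with prefix $x[1,l]$ fixed, and its flip at position~$l$) and Bob's inputs $\{v_j, v_l\}$ (prefixes of lengths $j$ and $l$). Now exactly one corner, $(U'_l; v_l)$, is a $1$-input and the other three are $0$-inputs. Because $l > j$, the flipped bit lies \emph{outside} Bob's prefix $v_j$, so $\ic^\sA$ controls the jump from $(U_l;v_j)$ to $(U'_l;v_j)$ (Lemma~\ref{thm-flip}), while $\ic^\sB$ controls the jump from $(U_l;v_j)$ to $(U_l;v_l)$ (Lemma~\ref{thm-pairs}). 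A triangle step bounds the $0$-vs-$0$ diagonal $\fh{M_r(U_l;v_l)}{M_r(U'_l;v_j)}$; cut-and-paste converts it to the $0$-vs-$1$ diagonal $\fh{M_r(U_l;v_j)}{M_r(U'_l;v_l)}$; and a final triangle step reaches $\fh{M_r(U_l;v_l)}{M_r(U'_l;v_l)}$, the distance correctness bounds below. The split of $K$ into $J \le n/2 < L$ is what makes the two jumps compatible for a constant fraction of the distribution; the other half of $K$ is bounded trivially, contributing the additive $1$ in Lemma~\ref{thm-close} and hence the threshold $\eps < 1/4$.
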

A more precise statement of this theorem is presented as
Theorem~\ref{thm-main} later in this article. We point out that
the theorem is optimal as there is a one-message deterministic protocol 
for \AIndex\ with communication~$n$.

The connection between streaming algorithms using ``small'' space to
two-party protocols for \AIndex\ with ``small'' information cost 
was presented by Magniez \etal{} for one-pass algorithms. However,
it generalizes in a straightforward manner to multi-pass algorithms.
For completeness, this reduction is described in full in 
Section~\ref{sec-streaming},
for multi-pass algorithms. The reduction consists of three steps,
following the information cost approach. (See, for example,
Refs.~\cite{ChakrabartiSWY01,SaksS02,Bar-YossefJKS04,JayramKS03,JainRS03b}
for earlier applications of this approach.)
First, a streaming algorithm for $\dyck(2)$ that uses space~$s$ is
mapped to a multi-party communication protocol in which the 
messages are each of the same length~$s$. Second, a two-party
communication protocol for \AIndex\ with ``small'' information cost 
with respect to~$\mu_0$ is derived using a ``direct sum'' argument.
Finally, a lower bound for the aforementioned information cost is proven.
Magniez \etal{} proved a lower bound for the information cost of 
a \emph{two-message\/} protocol
that resulted from a one-pass streaming algorithm. Our main contribution,
Theorem~\ref{thm-2}, lies in this final step. It
applies to protocols with an arbitrary number of messages, and is the
first general lower bound on information cost for \AIndex.

A notion of information cost for $\Index$ was studied previously by 
Jain, Radhakrishnan, and Sen~\cite{JainRS09} in the context of 
privacy in communication (see also earlier work due to
Klauck~\cite{Klauck04}). This notion differs from the one we study in
two crucial respects. First, it is defined in terms of the hard
distribution for the problem (uniform over all inputs). Second, the 
hard distribution is a product distribution.
The techniques they develop seem not to be directly relevant to the 
problem at hand, as we deal with an easy and non-product distribution. 

We devise a new method for analyzing the information cost of~$f_n$ to
arrive at Theorem~\ref{thm-2}.
The proof we present shows how conceptually simple and familiar ideas such as 
\emph{average encoding\/} and the \emph{cut-and-paste property\/} 
of randomized protocols may be brought to bear on \AIndex\
to derive the optimal (up to constant factors) information cost
trade-off. The intuition behind the lower bound is as follows.
Assume, for simplicity, that the protocol transcript contains the 
output. Starting from an input pair on which the function evaluates 
to~$0$, if the information cost of any one party is ``low'' and we 
carefully change her input, the transcript does not change ``much''.  We show 
that even when we simultaneously change the inputs with both parties, 
resulting in a~$1$-input of the function, the perturbation to the transcript
state is also correspondingly ``small''. This implies that
the two information costs cannot be ``small'' simultaneously.

We point out that the trade-off established by Magniez, Mathieu, and 
Nayak~\cite{MagniezMN10} for \emph{two-message\/} protocols that \emph{start 
with Alice\/}, and make polynomially small error, is stronger.
They show that either Alice reveals~$\Omega(n)$ information
about~$x$, or Bob reveals~$\Omega(\log n)$ information about~$k$ in such
protocols. This cannot be reproduced without a further refinement
of our techniques.  Indeed, Theorem~\ref{thm-2} also applies to 
two-message protocols in which \emph{Bob\/} starts. Such protocols 
match the trade-off given in the theorem: for every~$l \in 
\set{1,2, \ldots, \floor{\log_2 n}}$, there is a deterministic 
protocol for~$f_n$ in which Bob sends~$l$ bits of~$k$, and Alice 
responds with~$n/2^l$ bits.

In independent work, concurrent with ours, Chakrabarti, Cormode, 
Kondapally, and McGregor~\cite{ChakrabartiCKM10} derive a similar 
information cost trade-off for~$f_n$. Their motivation is identical to 
ours---to study the space required by unidirectional multi-pass 
streaming algorithms for $\dyck(2)$, and they present a similar space
lower bound for such algorithms. While some of the basic tools from 
information theory at the heart of their proof (e.g., the Chain Rule for
mutual information and the Pinsker Inequality) are equivalent to ours,
they take a different route to these tools. 
The first version of our article~\cite{JainN10a} and that of Chakrabarti
\etal~\cite{ChakrabartiCKM10a} contained trade-offs that were weaker,
albeit in different respects. After learning about each other's work,
both groups strengthened our respective proofs to achieve qualitatively
the same result. Subsequently, Chakrabarti and
Kondapally~\cite{ChakrabartiK11} extended the result to show that either
Bob reveals~$\Omega(b)$ information about his input~$k$, or Alice
reveals~$n/2^{\Order(b)}$ information about her input~$x$, i.e., 
either~$\ic_{\mu_0}^\sB(\Pi) \in \Omega(b)$ or~$\ic_{\mu_0}^\sA(\Pi) \in
n/2^{\Order(b)}$. This matches the information cost of the two-message
protocol described above up to constant factors.

The promise of fast processing with limited memory held by streaming 
algorithms make them especially attractive in the context of quantum 
computation. The absence of prototypes with a large enough number 
of qubits and long coherence times inevitably leads us to such
algorithms. This has fueled the study of quantum finite automata and 
also later works on quantum streaming 
algorithms~\cite{LeGall06,GavinskyKKRW08,Blume-KohoutCG14}. Several of
these works show how quantum effects lead to an exponential savings in 
space over their classical counterparts, albeit for specially crafted
problems. It is thus natural to ask how much more efficient such quantum 
algorithms could be, for a well-studied and important problem such as \dyck(2).
Motivated by this, we also study quantum protocols for \AIndex.
We define appropriate notions of quantum information 
cost~$(\qic_{\lambda}^\sA(\Pi), \qic_{\lambda}^\sB(\Pi))$ for
distributions~$\lambda$ with a limited form of dependence in 
Section~\ref{sec-qcommn}, and then arrive at the following trade-off.
\begin{theorem}
\label{thm-3}
In any two-party quantum communication protocol~$\Pi$ (with
read-only behaviour on inputs and no intermediate measurements) for the
\AIndex\ function~$f_n$ that has~$t$ message exchanges and 
makes constant error at
most~$\eps \in [0, 1/4)$ on the uniform distribution~$\mu$ over inputs,
either~$\qic_{\mu_0}^\sA(\Pi) \in \Omega(n/t)$ or~$\qic_{\mu_0}^\sB(\Pi) \in
\Omega(1/t)$.
\end{theorem}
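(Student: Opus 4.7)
The plan is to adapt the classical argument for Theorem~\ref{thm-2} to the quantum setting, with the extra factor of~$t$ in each branch of the trade-off arising from the round-by-round nature of the quantum cut-and-paste step. First I place~$\Pi$ in a canonical purified form: since it is read-only on inputs and performs no intermediate measurements, the protocol is a sequence of~$t$ local unitaries, alternating between Alice and Bob and acting on a shared message register together with each party's private workspace, followed by a measurement of a designated output qubit. For an input triple~$(x,k,b)$ I write~$\ket{\psi_i^{x,k,b}}$ for the global pure state after round~$i$ and~$\rho_i^{\sA,x,k,b}$, $\rho_i^{\sB,x,k,b}$ for its reductions to Alice's and Bob's registers; the no-measurement assumption guarantees that~$\ket{\psi_i^{x,k,b}}$ remains a purification of both reduced states at every round boundary, so that Uhlmann's theorem is available throughout.

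The heart of the argument is a two-step chain of input perturbations from a~$0$-input drawn from~$\mu_0$ to a nearby~$1$-input. Starting from~$(x,(k,x[1,k-1],x_k)) \in f_n^{-1}(0)$, pick an index~$k' > k$ and pass through the intermediate point~$(x,(k',x[1,k'-1],x_{k'}))$, which is still in~$f_n^{-1}(0)$ and differs from the original only in Bob's input, to reach~$(x \xor e_{k'},(k',x[1,k'-1],x_{k'})) \in f_n^{-1}(1)$, which differs from the intermediate only in Alice's~$k'$-th bit. The first leg is a single-sided perturbation of Bob's input, to be controlled by~$\qic^\sB_{\mu_0}(\Pi)$, and the second is a single-sided perturbation of Alice's input, to be controlled by~$\qic^\sA_{\mu_0}(\Pi)$. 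Crucially, because~$k' > k$, the bit~$x_{k'}$ lies in the suffix~$X[K+1,n]$ that is genuinely unknown to Bob under~$\mu_0$, so that Alice's information cost does meaningfully bound the change in Bob's reduced state as~$x_{k'}$ is toggled.

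To extract concrete trace-distance estimates I would apply the quantum average encoding theorem at each round: a quantum mutual information of~$\delta$ between a classical bit and a purification's reduced state translates into an expected trace distance of~$O(\sqrt{\delta})$ between the two conditional reduced states. Unrolling~$\qic^\sA_{\mu_0}(\Pi)$ as a sum over rounds and then applying a chain rule across the~$n$ positions of~$X$ (legitimate after conditioning on~$K$ and the prefix~$X[1,K-1]$, a decomposition the definition in Section~\ref{sec-qcommn} is set up to accommodate) brings the typical per-round, per-position mutual information down to~$O(\qic^\sA_{\mu_0}/(nt))$; the per-round Bob-side trace distance is therefore~$O(\sqrt{\qic^\sA_{\mu_0}/(nt)})$. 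Summing over the~$t$ rounds via the triangle inequality in trace distance, and re-synchronizing the parties' views at each message by an Uhlmann-based local unitary on the non-toggling side, yields a cumulative final-state trace distance of~$O(\sqrt{t\cdot\qic^\sA_{\mu_0}/n})$. A symmetric calculation with~$K$ playing the role of~$X_{k'}$ delivers~$O(\sqrt{t\cdot\qic^\sB_{\mu_0}})$ for the first leg. If both quantities are sufficiently small constants, the final state on the initial~$0$-input is indistinguishable in trace distance from that on the constructed~$1$-input, contradicting~$\eps<1/4$; forcing this contradiction yields~$\qic^\sA_{\mu_0} = \Omega(n/t)$ or~$\qic^\sB_{\mu_0} = \Omega(1/t)$.

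The main obstacle is the round-by-round accumulation inherent in the quantum cut-and-paste step. Unlike in the classical setting, where the rectangle property of protocol transcripts permits swapping inputs at no cost, Uhlmann's theorem only delivers state closeness up to a local unitary on the non-toggling side, and this correction has to be absorbed into the next round's unitary evolution before the next per-round estimate can be applied. The resulting triangle-inequality accumulation over~$t$ rounds is intrinsic to the method and is precisely the source of the factor~$t$ in both branches of the trade-off. A secondary but non-trivial technical point is that~$\mu_0$ is not a product distribution, so the chain-rule decompositions of the information costs across positions and rounds have to be carried out with the right pattern of conditioning; it is for exactly this reason that the ``limited form of dependence'' built into the definition of~$\qic^\sA_{\mu_0}$ and~$\qic^\sB_{\mu_0}$ in Section~\ref{sec-qcommn} becomes important.
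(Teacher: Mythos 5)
Your overall plan is in the right family — round-by-round accumulation, average encoding per round, Uhlmann/Local Transition to re-synchronize with local unitaries, yielding the extra~$t$ factor — and your final quantitative form matches the paper's Theorem~\ref{thm-main-quantum}. But the central chain of perturbations has a genuine gap in its second leg, and it is exactly the gap that the paper's four-point hybrid (Lemma~\ref{lem:key}) is designed to close.

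Your chain is~$(x,(k,\cdot,x_k)) \rightarrow (x,(k',\cdot,x_{k'})) \rightarrow (x \oplus e_{k'},(k',\cdot,x_{k'}))$, and you claim that the second leg is a ``single-sided perturbation of Alice's input, to be controlled by~$\qic^\sA_{\mu_0}$,'' because ``$x_{k'}$ lies in the suffix~$X[K+1,n]$ that is genuinely unknown to Bob under~$\mu_0$.'' This justification implicitly sets~$K = k$, but the second leg operates at the intermediate state, where Bob's index is~$K = k'$. At that state, Bob's input under~$\mu_0$ is~$(k', x[1,k'-1], x_{k'})$, so the bit you are flipping is precisely the bit~$b = x_{k'}$ already in Bob's hands. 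Alice's information cost~$\tqic^\sA_{\mu_0}(\Pi) = \sum_{\text{odd } i}\rI(X : Q_i^0 \mid X[1,K])$ conditions on the prefix~$X[1,K]$, which for~$K = k'$ contains~$X_{k'}$; consequently~$\rI(X_{k'} : Q_i^0 \mid X[1,k']) = 0$ and the average encoding theorem tells you nothing about the perturbation to Bob's reduced state when~$X_{k'}$ flips while Bob holds index~$k'$. The leg you need can only be bounded if Bob's index is strictly smaller than the flipped position, which is the situation in the paper's Lemma~\ref{thm-flip-quantum} (index~$j \in [n/2]$, flipped bit at~$l \in [n]\setminus[n/2]$); flipping the bit Bob is currently querying cannot be charged to~$\qic^\sA_{\mu_0}$ at all.

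The paper sidesteps this with a four-input rectangle (its rows labelled~$00, 01, 10, 11$ in the table preceding Lemma~\ref{thm-pairs-quantum}): it bounds~$00 \sim 01$ via Bob's information cost (Lemma~\ref{thm-pairs-quantum}) and~$00 \sim 10$ via Alice's information cost (Lemma~\ref{thm-flip-quantum}, where the bit flipped is outside the prefix held by Bob at index~$j$), and then derives the hard edge~$10 \sim 11$ (equivalently~$01 \sim 11$) by the round-by-round hybrid of Lemma~\ref{lem:key}, which chains the Uhlmann unitaries from \emph{both} easy edges while commuting them past the protocol's own round unitaries. So the hard edge is controlled by \emph{both} information costs, not by one alone as your two-leg accounting suggests. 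Your ``Uhlmann-based re-synchronization'' remark points at the right mechanism, but applied to the pair~$01,11$ directly, Uhlmann only promises the existence of \emph{some} local unitary with Bures distance equal to that of the reduced states; it does not bound that distance, and the bound you need is the very thing that requires detouring through the fourth corner~$10$. A secondary point: the paper averages over~$J$ uniform in~$[n/2]$ and~$L$ uniform in~$[n]\setminus[n/2]$, and the inequality~$\rI(X:Q_i(X;X[1,J]) \mid X[1,J]) \leq 2 \cdot \rI(X:Q_i^0 \mid X[1,K])$ exploits this split; a fixed pair~$k < k'$ does not give you these averaged estimates directly.
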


Quantum protocols have the ability to compute without revealing much
information~\cite{GavinskyI13,FawziHS13}. It is thus hardly a surprise
that the quantum 
information cost trade-off involves a number of subtleties. For instance, it 
is not obvious how we may quantify information cost in the absence of 
the notion of a message transcript, or how we discount information 
leakage due to the non-product nature of the input distribution. These
issues are discussed in detail in Section~\ref{sec-qcommn}.
Nonetheless, we show how the ideas behind Theorem~\ref{thm-2}
also shed light on quantum communication. The intuition
from the classical case comes with its own complications, such as the
absence of an analogue of the Cut-and-Paste Lemma.
We circumvent the Cut-and-Paste property by appealing to the ``Local
Transition Theorem'' and adapting a hybrid argument due to Jain, 
Radhakrishnan, and Sen~\cite{JainRS03b}. We apply these  
on a message-by-message basis, which leads to the dependence of the 
trade-off on the number of messages in the protocol.
We are not aware of quantum protocols that beat the classical
information bounds. However the dependence of the trade-off in
Theorem~\ref{thm-3} on the number of messages~$t$ may be inherent,
as is the case  with Set Disjointness~\cite{JainRS03b}.

Theorem~\ref{thm-3} demonstrates the versatility of our proof techniques.
The techniques due to Magniez \etal{}~\cite{MagniezMN10} and Chakrabarti 
\etal{}~\cite{ChakrabartiCKM10} for showing information cost trade-off 
in classical protocols do not seem to generalize to quantum protocols.
They analyze the input distribution conditioned on the message 
transcript, a notion for which no suitable quantum analogue is known.
Theorem~\ref{thm-3}, however, does not immediately lead to a lower bound
on the space required by quantum streaming algorithms for \dyck(2).
The main hurdle here is that the connection between streaming 
algorithms and communication protocols for \AIndex\ with low information 
cost does not extend to the quantum case. This appears to be due to 
the stronger notion of information cost that we adopt. (The stronger
notion appears to be necessary for our proof technique.) It is possible
that a version of Theorem~\ref{thm-3} hold with an alternative
definition of information cost that is more relevant to quantum
streaming algorithms. We leave this for future investigation.

Communication problems involving the $\Index$ and \AIndex\ functions 
capture a number
of phenomena in the theory of computing, both classical and quantum,
in addition to playing a fundamental role in the area of communication
complexity~\cite{KushilevitzN97}. For instance, they have been used to analyze
data structures~\cite{MiltersenNSW98}, the size of finite
automata~\cite{AmbainisNTV02} and formulae~\cite{Klauck07},
the length of locally decodable codes~\cite{KerenidisW04},
learnability of states~\cite{KremerNR99,Aaronson07}, and sketching 
complexity~\cite{Bar-YossefKK04}. Recently, phenomena in quantum information
have been discovered via the $\Index$ function problem, e.g., information
causality~\cite{PawowskiPKSWZ09}, a connection between non-locality
and the uncertainty principle~\cite{OppenheimW10} and quantum
ignorance~\cite{VidickW11}. We believe that the more nuanced properties
of the \AIndex\ function such as the one we establish here are of fundamental
importance, and are likely to find application in other contexts as well.

\subsection*{Acknowledgments}

We thank Fr{\'e}d{\'e}ric Magniez and Christian Konrad for their comments 
on an earlier version of this article.
A.N.\ thanks Fr{\'e}d{\'e}ric Magniez also for several helpful discussions
preceding this work.

We thank the authors of Ref.~\cite{ChakrabartiCKM10} for sending us
their initial manuscript when we first publicized an earlier version
of the article. The (classical) results in our
respective articles were originally weaker in incomparable ways, and
the exchange inspired both groups to refine our analyses to obtain the
current classical information cost trade-off results.

We are grateful to the anonymous referees for their help in improving 
the presentation.

\section{Classical information cost of Augmented Index}
\label{sec-classical}

In this section we present the first result of this article.  We
summarize the notational conventions we follow and the background from
classical information theory that we assume in
Section~\ref{sec-info-theory}. We do the same for two-party
communication complexity and information cost in Section~\ref{sec-commn}.
Then we develop the lower bound for
classical protocols for \AIndex\ in Section~\ref{sec-main}.

\subsection{Information theory basics}
\label{sec-info-theory}

We reserve small case letters like~$x,k,m$ for bit-strings or
integers, and capital letters like~$X,K,M$ for random variables over
the corresponding sample spaces. We use the same symbol for a random
variable and its distribution. As is standard, given jointly
distributed random variables $AB$ over a product sample space, $A$
represents the marginal distribution over the first component. We
sometimes use $A|b$ as shorthand for the conditional
distribution~$A|(B=b)$ when the second random variable~$B$ is clear
from the context. For a string $x \in \{0,1\}^n$, and integers~$i,j
\in [n]$, where~$[n] = \set{1,2, \ldots, n}$, we let $x[i, j]$ denote the substring
of consecutive bits~$x_i \dotsm x_j$. If~$j < i$, the expression
denotes the empty string. This notation extends to random variables
over~$\set{0,1}^n$ in the obvious manner.  When a sample~$z$ is drawn
from distribution~$Z$, we denote it as~$z \leftarrow Z$.  

The~$\ell_1$ distance~$\norm{A-B}$ between two random variables~$A,B$
over the same finite sample space~$\cS$ is given by
\[
\norm{A-B} \quad = \quad \sum_{i \in \cS} \size{A(i) - B(i)}\enspace,
\]
and takes values in the interval~$[0,2]$.
(Recall that as per our notational convention~$A(i), B(i)$ denote the
probabilities assigned to~$i \in \cS$ by~$A,B$, respectively.)  The
Hellinger distance~$\fh{A}{B}$ between the random variables is defined
as
\[
\fh{A}{B} \quad = \quad \left[ \frac{1}{2} \sum_{i \in \cS} 
        \left( \sqrt{A(i)} - \sqrt{B(i)} \right)^2 \right]^{1/2}
        \enspace.
\]

Hellinger distance is a metric, and is related to~$\ell_1$ distance
in the following manner. (See Section~3.2 in~\cite{LeCamY90} for a
proof.)
\comment{The property is stated and proven on page 25 of Le Cam and
Yang, 1990.}
\begin{proposition}
\label{fact:ellhell} 
Let $P,Q$ be distributions over the same sample space. Then
\[
\fh{P}{Q}^2 
    \quad \leq \quad \frac{1}{2} \norm{P-Q} 
    \quad \leq \quad \sqrt{2} \; \fh{P}{Q} \enspace.
\]
\end{proposition}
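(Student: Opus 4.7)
The proposition is the standard two-sided comparison between Hellinger distance and total variation distance, so the plan is to reduce both inequalities to elementary manipulations of the algebraic identity
\[
P(i) - Q(i) \quad = \quad \bigl(\sqrt{P(i)} - \sqrt{Q(i)}\bigr)\bigl(\sqrt{P(i)} + \sqrt{Q(i)}\bigr) \enspace.
\]
The first factor, squared and summed, gives (twice) the squared Hellinger distance; the second factor is the ``large'' companion that we need to either bound above by the same quantity (for the left inequality) or control via Cauchy--Schwarz (for the right inequality).

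For the left inequality $\fh{P}{Q}^2 \le \frac12 \norm{P-Q}$, I would observe that since $\sqrt{P(i)}, \sqrt{Q(i)} \ge 0$, we have $\size{\sqrt{P(i)} - \sqrt{Q(i)}} \le \sqrt{P(i)} + \sqrt{Q(i)}$. Multiplying both sides by $\size{\sqrt{P(i)} - \sqrt{Q(i)}}$ and using the identity above yields $\bigl(\sqrt{P(i)} - \sqrt{Q(i)}\bigr)^2 \le \size{P(i) - Q(i)}$. Summing over $i \in \cS$ and dividing by $2$ gives exactly $\fh{P}{Q}^2 \le \frac12 \norm{P-Q}$.

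For the right inequality $\frac12 \norm{P-Q} \le \sqrt{2}\, \fh{P}{Q}$, I would apply Cauchy--Schwarz to the factored form of $P(i) - Q(i)$:
\[
\norm{P-Q} \quad \le \quad \Bigl(\sum_i \bigl(\sqrt{P(i)} - \sqrt{Q(i)}\bigr)^2\Bigr)^{1/2} \Bigl(\sum_i \bigl(\sqrt{P(i)} + \sqrt{Q(i)}\bigr)^2\Bigr)^{1/2} \enspace.
\]
The first factor equals $\sqrt{2}\,\fh{P}{Q}$ by definition. For the second, expanding the square gives $2 + 2\sum_i \sqrt{P(i)Q(i)} \le 4$, using that $\sum_i \sqrt{P(i)Q(i)} \le 1$ by Cauchy--Schwarz (or simply the AM--GM bound $\sqrt{P(i)Q(i)} \le \tfrac12(P(i)+Q(i))$). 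Hence $\norm{P-Q} \le \sqrt{2}\,\fh{P}{Q} \cdot 2 = 2\sqrt{2}\,\fh{P}{Q}$, and dividing by $4$ yields the claim.

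No real obstacle is anticipated; the only care needed is ensuring that the Cauchy--Schwarz application in the second step is set up so that the ``sum of squares of sums'' factor is bounded by the constant $4$ (a bound tight when $P$ and $Q$ have disjoint supports), which drives the $\sqrt{2}$ appearing on the right-hand side. Both inequalities are tight up to constants, so no sharper approach is warranted for the intended uses later in the paper.
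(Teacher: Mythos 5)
Your proof is correct, and it is the standard argument. The paper itself does not supply a proof of this proposition; it simply cites Section~3.2 of Le~Cam and Yang, so there is no ``paper route'' to compare against, but your decomposition of $P(i)-Q(i)$ into conjugate factors, with the trivial termwise bound for the left inequality and Cauchy--Schwarz for the right, is exactly the usual derivation and it is sound. One trivial slip at the very end: from $\norm{P-Q} \le 2\sqrt{2}\,\fh{P}{Q}$ you obtain the claimed $\tfrac12\norm{P-Q} \le \sqrt{2}\,\fh{P}{Q}$ by dividing by $2$, not by $4$.
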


The square of the Hellinger distance satisfies the following property,
called \emph{joint convexity\/}. It may be verified by a straightforward
application of the Cauchy-Schwarz inequality.
\begin{proposition} 
\label{fact-hconvex} 
Let $P_i,Q_i$ be distributions over the same sample space for each~$i
\in [n]$, and let~$(\alpha_i)$ be a probability distribution
over~$[n]$.  Let~$P = \sum_{i = 1}^n \alpha_i P_i$, and~$Q = \sum_{i =
  1}^n \alpha_i Q_i$. Then
\[
\fh{P}{Q}^2 \quad \leq \quad \sum_{i = 1}^n \alpha_i\, \fh{P_i}{Q_i}^2
\enspace.
\]
\end{proposition}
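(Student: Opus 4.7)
The plan is to reduce the claim to a pointwise inequality on each element of the sample space, and then deploy Cauchy--Schwarz to prove that pointwise inequality. First I would expand both sides using the definition of squared Hellinger distance. The right-hand side becomes
\[
\sum_i \alpha_i \fh{P_i}{Q_i}^2
    \quad = \quad \frac{1}{2}\sum_j \sum_i \alpha_i \bigl(\sqrt{P_i(j)} - \sqrt{Q_i(j)}\bigr)^2 \enspace,
\]
so after swapping the order of summation it suffices to show, for each fixed sample point $j$, the pointwise bound
\[
\bigl(\sqrt{P(j)} - \sqrt{Q(j)}\bigr)^2
    \quad \leq \quad \sum_i \alpha_i \bigl(\sqrt{P_i(j)} - \sqrt{Q_i(j)}\bigr)^2 \enspace.
\]

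Next I would expand both sides of this pointwise inequality. Writing $p_i = P_i(j)$ and $q_i = Q_i(j)$, the common terms $\sum_i \alpha_i p_i$ and $\sum_i \alpha_i q_i$ appear on both sides, and the inequality reduces to the cross-term estimate
\[
\sqrt{\Bigl(\sum_i \alpha_i p_i\Bigr)\Bigl(\sum_i \alpha_i q_i\Bigr)}
    \quad \geq \quad \sum_i \alpha_i \sqrt{p_i q_i} \enspace.
\]
This is exactly Cauchy--Schwarz applied to the vectors with entries $\sqrt{\alpha_i p_i}$ and $\sqrt{\alpha_i q_i}$: the inner product equals $\sum_i \alpha_i \sqrt{p_i q_i}$, while the product of norms equals the square root on the left.

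Finally I would assemble the pieces: summing the pointwise inequality over $j$, multiplying by $1/2$, and recognizing the resulting double sum as $\sum_i \alpha_i \fh{P_i}{Q_i}^2$ after interchanging summation. There is no genuine obstacle here; the only subtlety is organizing the two square-root expansions so that the Cauchy--Schwarz step appears cleanly, and noting that all quantities involved are nonnegative so the square-root manipulations are justified.
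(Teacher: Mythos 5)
Your proof is correct and follows essentially the same route as the paper's: you reduce to the pointwise cross-term estimate $\sqrt{(\sum_i \alpha_i p_i)(\sum_i \alpha_i q_i)} \geq \sum_i \alpha_i \sqrt{p_i q_i}$ and prove it by Cauchy--Schwarz on the vectors $(\sqrt{\alpha_i p_i})_i$ and $(\sqrt{\alpha_i q_i})_i$, which is exactly the inequality the paper establishes before expanding $\fh{P}{Q}^2$. The only difference is presentational---you isolate the pointwise inequality as an intermediate lemma while the paper folds it directly into the expansion.
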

\begin{proof}
By the Cauchy-Schwarz Inequality, for each~$j \in \cS$,
\begin{eqnarray*}
\sqrt{P(j) \, Q(j)} 
    & = & \left[ \left( \sum_{i \in [n]} \alpha_i \, P_i(j) \right)
          \left( \sum_{i' \in [n]} \alpha_{i'} \, Q_{i'}(j) \right)
          \right]^{1/2} \\
    & \ge & \sum_{i \in [n]} \sqrt{\alpha_i \, P_i(j)} 
            \sqrt{\alpha_i \, Q_i(j)} \enspace.
\end{eqnarray*}
So we have
\begin{eqnarray*}
\fh{P}{Q}^2
    & = & \frac{1}{2} \sum_{j \in \cS}
          \left( P(j) + Q(j) - 2 \sqrt{P(j) \, Q(j)} \right) \\
    & \le & \frac{1}{2} \sum_{j \in \cS} \sum_{i \in [n]} \alpha_i
            \left( P_i(j) + Q_i(j) - 2 \sqrt{P_i(j) \, Q_i(j)}
            \right) \\
    & = & \sum_{i = 1}^n \alpha_i\, \fh{P_i}{Q_i}^2 \enspace.
\end{eqnarray*}
\end{proof}

We rely on a number of standard results from information theory in
this work.  For a comprehensive introduction to the subject, we refer
the reader to a text such as~\cite{CoverT91}.

We use~$\rH(X)$ to denote the Shannon entropy of the random variable~$X$,
$\rI(X:Y)$ to denote the mutual information between two random
variables~$X,Y$, and~$\rI(X:Y|Z)$ to denote the conditional mutual 
information of~$X,Y$ with respect to a jointly distributed random 
variable~$Z$.  We also use~$\rH(p)$ to denote the Binary entropy
function when~$p \in [0,1]$.

The chain rule for mutual information, Theorem~2.5.2 in~\cite{CoverT91},
states:
\begin{proposition}[Chain Rule] 
\label{fact-chain}
Let $ABC$ be jointly distributed random variables. Then 
$$ \rI(AB:C)  \quad = \quad  \rI(A:C) + \rI(B : C \,|\, A)  \enspace . $$
This implies that for jointly distributed random variables $A_1 \dotsb A_n C$,
$$ \rI(A_1 \dotsb A_n:C)
     \quad = \quad  \rI(A_1:C) + \rI(A_2 : C \,|\, A_1)
                    + \cdots +  \rI(A_n : C \,|\, A_1 \cdots A_{n-1})
                    \enspace .
$$ 
\end{proposition}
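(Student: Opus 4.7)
The plan is to reduce the identity to the chain rule for Shannon entropy, namely $\rH(AB) = \rH(A) + \rH(B \,|\, A)$ and its conditional analogue $\rH(AB \,|\, C) = \rH(A \,|\, C) + \rH(B \,|\, AC)$. Both follow by expanding $-\log p$ of a joint probability via the product rule $p(a,b) = p(a)\, p(b \,|\, a)$ and averaging over the joint (or conditional) distribution; these are standard and may be cited from the same text referenced in the excerpt.

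Granting these, the two-variable chain rule for mutual information falls out in essentially one line. Using the identity $\rI(X:Y) = \rH(X) - \rH(X \,|\, Y)$ together with its conditional version $\rI(B:C \,|\, A) = \rH(B \,|\, A) - \rH(B \,|\, AC)$, I would write
\[
\rI(AB:C) \;=\; \rH(AB) - \rH(AB \,|\, C) \;=\; \bigl[\rH(A) + \rH(B \,|\, A)\bigr] - \bigl[\rH(A \,|\, C) + \rH(B \,|\, AC)\bigr],
\]
and then regroup the four terms as $[\rH(A) - \rH(A \,|\, C)] + [\rH(B \,|\, A) - \rH(B \,|\, AC)]$, which is precisely $\rI(A:C) + \rI(B:C \,|\, A)$.

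The $n$-variable version follows by a straightforward induction on $n$. For the inductive step I would apply the two-variable case with $A_1 \cdots A_{n-1}$ in the role of $A$ and $A_n$ in the role of $B$, yielding $\rI(A_1 \cdots A_n : C) = \rI(A_1 \cdots A_{n-1} : C) + \rI(A_n : C \,|\, A_1 \cdots A_{n-1})$; expanding the first summand by the induction hypothesis produces the desired telescoping sum. There is no genuine obstacle here: the argument is purely algebraic, and the substantive content is carried by the entropy chain rule. The only point worth flagging is to interpret conditional mutual information as an expectation over the conditioning random variable, so that the equalities above hold as genuine identities rather than merely in expectation.
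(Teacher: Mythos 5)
Your proof is correct. The paper does not prove this proposition at all---it simply cites Theorem~2.5.2 of Cover and Thomas~\cite{CoverT91}---and your argument is precisely the standard one from that reference: expand $\rI(\cdot:\cdot)$ in terms of entropies, apply the entropy chain rule to both $\rH(AB)$ and $\rH(AB\,|\,C)$, regroup, and induct for the $n$-variable form. Nothing to correct.
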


The Average encoding theorem~\cite{KlauckNTZ07,JainRS03b} is a
quantitative version of the intuition that two random variables that
are only weakly correlated are nearly independent. Stated differently,
the conditional distribution of one given the other is close to its
marginal distribution, if their mutual information is sufficiently
small.
\begin{proposition}[Average encoding theorem~\cite{KlauckNTZ07,JainRS03b}]
\label{thm-avg} 
Let $AB$ be jointly distributed random  variables.  Then,
\[
\expct_{b \leftarrow B} \; \fh{A|b}{A}^2 \quad \leq \quad \kappa \,
\rI(A:B) \enspace,
\]
where~$\kappa$ is the constant~$\frac{\ln 2}{2}$.
\end{proposition}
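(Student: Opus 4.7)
The plan is to reduce the claim to a pointwise inequality between Hellinger distance squared and Kullback--Leibler divergence, and then invoke the standard identity expressing mutual information as an expected KL divergence. Specifically, the pointwise bound I aim for is
$$\fh{P}{Q}^2 \;\leq\; \kappa\, D(P \,\|\, Q) \qquad \text{with } \kappa = \frac{\ln 2}{2},$$
for any distributions $P, Q$ over a common sample space, where $D$ denotes KL divergence measured in bits. Once this is available, setting $P = A|b$ and $Q = A$, taking expectation over $b \leftarrow B$, and using the identity $\rI(A:B) = \expct_{b \leftarrow B} D(A|b \,\|\, A)$ will yield the proposition immediately.

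For the pointwise bound, I would start from the identity $\fh{P}{Q}^2 = 1 - \sum_i \sqrt{P(i)\, Q(i)}$. Writing the KL divergence in nats as $D_{\ln}(P \| Q) = -2 \sum_i P(i) \ln \sqrt{Q(i)/P(i)}$ and applying Jensen's inequality to the concave function $\ln$ (restricted to the support of $P$) gives
$$D_{\ln}(P \| Q) \;\geq\; -2 \ln \sum_i \sqrt{P(i)\, Q(i)} \;=\; -2 \ln\!\left(1 - \fh{P}{Q}^2\right).$$
Combining this with the elementary estimate $-\ln(1-x) \geq x$ for $x \in [0,1)$ produces $D_{\ln}(P \| Q) \geq 2\,\fh{P}{Q}^2$, and converting nats to bits via $D_{\ln} = (\ln 2)\, D$ delivers the desired pointwise bound.

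The only subtlety is the edge case when $P$ is not absolutely continuous with respect to $Q$, in which case $D(P\|Q) = \infty$ and the inequality is trivial; restricting the sums above to the support of $P$ makes the Jensen step rigorous. Apart from this, I expect no substantial obstacle: the two ingredients for the pointwise bound (concavity of $\ln$ and $-\ln(1-x) \geq x$) are completely standard, and the identity $\rI(A:B) = \expct_b D(A|b \,\|\, A)$ is immediate from the definition of mutual information, so the averaging step over $b \leftarrow B$ at the end is essentially bookkeeping.
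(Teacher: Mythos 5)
Your proof is correct. The paper itself does not reprove this proposition (it cites Klauck--Nayak--Ta-Shma--Zuckerman and Jain--Radhakrishnan--Sen), but your argument is essentially the standard one those references use: reduce to the pointwise inequality $\fh{P}{Q}^2 \leq \kappa\, D(P\|Q)$ via the chain $D_{\ln}(P\|Q) \geq -2\ln\!\big(\sum_i \sqrt{P(i)Q(i)}\big) \geq 2\big(1-\sum_i\sqrt{P(i)Q(i)}\big) = 2\,\fh{P}{Q}^2$ (Jensen on $\ln$, then $-\ln(1-x)\geq x$), and finish with $\rI(A:B)=\expct_b D(A|b\,\|\,A)$; the support restriction you note disposes of the non--absolutely-continuous edge case.
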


\subsection{Communication protocols and information cost}
\label{sec-commn}

In the two-party communication model~\cite{Yao79} for computing Boolean
functions, parties Alice and Bob receive inputs~$x \in \cX$ 
and~$y \in \cY$, respectively, for some sets~$\cX, \cY$. They may
share a random bit string~$R$, that is independent of
the inputs~$x,y$. The bits of~$R$ are called \emph{public\/} coins, as
they are known to both parties. Alice (or Bob) may use an additional random
string~$R_\sA$ ($R_\sB$, respectively), that is not known to the
other party. These strings~$R_\sA, R_\sB$ are called \emph{private\/} coins.

The goal of the two parties is to compute a bi-variate Boolean 
function~$f : \cX \times \cY \rightarrow \set{0,1}$, by communicating 
with each other. The communication occurs in the form of~$t \ge 0$ messages,
starting with one party, and then alternating with the other. In each of
the~$t$ steps,
the party sending it computes the message as a function of the input, the 
public and private random coins she or
he has, and the messages received so far. After
all~$t$ messages have been sent, the recipient of the last message produces
the output of the protocol. The output is computed in a manner analogous
to the messages, from the party's input, random coins, and all the messages
received.

The pattern of communication is specified by a \emph{protocol\/}~$\Pi$,
which lists the type, number, and distribution of the coins used by 
each party, the
number of messages, the party that starts the protocol, and the
functions used by the parties to generate the messages and the output. 
The sequence of~$t$ messages produced during a run of the protocol~$\Pi$ on
a pair of inputs~$x,y$ together constitute the \emph{transcript\/}. This
is in general a random variable due to the use of random coins. We
denote the random variable corresponding to the output by~$\Pi(x,y)$.
We point out that the transcript need not include the output of the protocol.

The probability of correctness (or \emph{success\/}) of a protocol on 
input~$x,y$ is~$\Pr[\Pi(x,y) = f(x,y)]$. We consider inputs drawn 
from a joint distribution~$XY$, in which case the success probability 
is~$\Pr[\Pi(X,Y) = f(X,Y)]$. The probability of the complementary event
is called the \emph{error\/} of the protocol on the distribution~$XY$.

We refer the reader to the text~\cite{KushilevitzN97} for
equivalent formulations of communication protocols, and a thorough
introduction to the models of two-party classical communication.

Protocols that use only public coins are called public-coin protocols
and those that use only private coins are called private-coin protocols.
The availability of public randomness obviates the need for private 
randomness in typical settings. Conversely, private randomness can 
often simulate public coins with a slight increase in 
communication~\cite{Newman91}. In the context of information cost,
however, access to the private randomness used by one party may result 
in more information being revealed to the other. To the
best of our knowledge, there is no general recipe for replacing private 
with public randomness while preserving information cost. (For recent
progress on this question, see Ref.~\cite{BrodyBKLS13}.) In the
reductions between protocols we encounter in this article, regardless 
of the nature of randomness used in the original protocol, we end up 
with a protocol with both types of randomness. We therefore study protocols 
of this type.

We use the following Cut-and-Paste property of private-coin
communication protocols. (For a proof, see Lemma~6.3 in
Ref.~\cite{Bar-YossefJKS04}.)

\begin{proposition}[Cut-and-Paste~\cite{Bar-YossefJKS04}]
\label{fact:cutandpaste}
Let $\Pi$ be a two-party private-coin communication protocol. Let
$M(x,y)$ denote the random variable representing the message transcript
in $\Pi$ when the first party has input $x$ and the second party
has input $y$. Then for all pairs of inputs~$(x,y)$ and~$(u,v)$,
\[
\fh{M(x,y)}{M(u,v)} \quad = \quad \fh{M(x,v)}{M(u,y)}
\enspace .
\]
\end{proposition}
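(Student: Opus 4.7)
The plan is to exploit the \emph{rectangle property} of private-coin transcripts: for any fixed transcript $m$, the probability $\Pr[M(x,y) = m]$ factors as a product $p(x,m)\, q(y,m)$ of an Alice-side factor (depending only on $x$ and $m$) and a Bob-side factor (depending only on $y$ and $m$). To establish this, I would condition on the private coin strings $R_\sA = r_\sA$ and $R_\sB = r_\sB$, which renders the protocol deterministic; the event $M = m$ then decomposes into the conjunction of an Alice-side event (that each of her messages is consistent with $x$, $r_\sA$, and the prior transcript) and a Bob-side event (similarly determined by $y$, $r_\sB$, and the prior transcript). Since $R_\sA$ and $R_\sB$ are independent of each other and of the inputs, marginalizing over them separately yields the desired product factorization.

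With the rectangle factorization in hand, the Hellinger distance expands as
\begin{align*}
2\,\fh{M(x,y)}{M(u,v)}^2
    &= \sum_m \left( \sqrt{p(x,m)\, q(y,m)} - \sqrt{p(u,m)\, q(v,m)} \right)^2 \\
    &= 2 - 2 \sum_m \sqrt{p(x,m)\, p(u,m)\, q(y,m)\, q(v,m)},
\end{align*}
where the leading constant comes from using that $M(x,y)$ and $M(u,v)$ are both probability distributions over transcripts. The critical observation is that the summand under the square root is symmetric in the pairing of Alice's arguments $x, u$ with Bob's arguments $y, v$: rewriting the same expression as $\sqrt{p(x,m)\, q(v,m)\, p(u,m)\, q(y,m)}$ shows that it is precisely the cross term one would obtain by expanding $\fh{M(x,v)}{M(u,y)}^2$ using the same factorization. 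Hence both squared Hellinger distances equal the identical right-hand side, and taking square roots gives the claim.

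I do not anticipate any serious obstacles; the only step requiring care is the rectangle factorization itself, which, while standard, hinges on the independence of the two private coin strings from one another and from the inputs. Once the factorization is in place, the remainder is an elementary algebraic identity following directly from the definition of Hellinger distance. The lemma will later be deployed in tandem with the Average Encoding Theorem (Proposition~\ref{thm-avg}) to control how the transcript distribution changes when both parties' inputs are modified simultaneously, which is the pivotal technical step in the classical information cost lower bound for \AIndex.
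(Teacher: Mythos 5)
Your proof is correct and follows the standard argument that the paper cites from Bar-Yossef, Jayram, Kumar, and Sivakumar (Lemma~6.3 in Ref.~\cite{Bar-YossefJKS04}): establish the rectangle factorization $\Pr[M(x,y)=m] = p(x,m)\,q(y,m)$ for private-coin protocols, then observe that the cross term $\sum_m \sqrt{p(x,m)\,p(u,m)\,q(y,m)\,q(v,m)}$ in the expansion of the squared Hellinger distance is invariant under swapping the pairing of Alice's and Bob's inputs. The paper itself does not reproduce the proof, so there is nothing further to compare.
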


We consider the information revealed during a communication
protocol and focus on a notion known as ``internal information'' in
the literature. Although this notion is implicit in earlier
work~\cite{Bar-YossefJKS04}, it was named so by Barak, Braverman, Chen,
and Rao~\cite{BarakBCR13}. We emphasize that there is no canonical
measure of information cost, and the choice of definition is
often driven by a motivating application. A different
definition of information cost would suffice for our application 
to streaming algorithms, and would additionally simplify some of our
proofs. However, we use internal information, as this
gives us the strongest information cost trade-off result.

Consider a randomized two-party communication protocol~$\Pi$ which
uses public randomness~$R$, and may additionally use private
randomness.  Suppose that~$M$ is the message transcript of the
protocol, when the inputs to the two players, Alice and Bob,
respectively, are sampled from the joint distribution~$\lambda$. Let the
input random variables be denoted by~$X,Y$. The
\emph{information cost\/} of the protocol for Alice with respect to
the distribution~$\lambda$ is defined as~$\ic^\sA_\lambda(\Pi) \eqdef
\rI(X:M \,|\, YR)$.  The information cost of the protocol for Bob is
defined symmetrically as~$\ic^\sB_\lambda(\Pi) \eqdef \rI(Y : M \,|\,
XR)$. These quantities measure the amount of information about one
party's input that the other gains through the course of the protocol.

Note that we could have conditioned on the private randomness used by one
party (say, Bob) as well in the other's (Alice's) information cost. This
is however redundant, as given his input~$Y$, the public randomness~$R$,
and the message transcript~$M$, Bob's private randomness~$R_\sB$
is independent of Alice's input (and private randomness). Indeed, by the 
Chain Rule (Proposition~\ref{fact-chain}),
\begin{eqnarray*}
\rI(X:M \,|\, YR R_\sB)
    & = & \rI(X:M R_\sB \,|\, YR) - \rI(X: R_\sB \,|\, YR) \\
    & = & \rI(X:M R_\sB \,|\, YR) \\
    & = & \rI(X: M \,|\, YR) + \rI(X: R_\sB \,|\, YRM) \\
    & = & \rI(X: M \,|\, YR) \enspace.
\end{eqnarray*}

\subsection{The classical information cost lower bound}
\label{sec-main}

The first main theorem in this article may be viewed as a trade-off 
between information revealed by the two parties about their inputs while
computing the \AIndex\ function~$f_n$. We show that at least
one of the parties necessarily reveals ``a lot'' of information even
on an ``easy distribution'' if the protocol computes~$f_n$ with
bounded error on a ``hard distribution''. 

Recall that in the \AIndex\ problem, one party, Alice, has
an~$n$-bit string~$x$, and the other party, Bob, has an integer~$k \in
[n]$, the prefix~$x[1,k-1]$ of~$x$, and a bit~$b \in \set{0,1}$. Their
goal is to compute the function~$f_n(x,(k, x[1,k-1], b)) = x_k \xor
b$, i.e., to determine whether~$b = x_k$ or not, by engaging in a
two-party communication protocol.

Let $(X,K,B)$ be random variables distributed according to~$\mu$, the
uniform distribution over~$\{0,1\}^n \times [n] \times
\set{0,1}$. Let~$\mu_0$ denote the distribution conditioned upon~$B =
X_K$, i.e., when the inputs are chosen uniformly from the set of~$0$s
of~$f_n$. We are interested in the information cost of a protocol~$\Pi$
with public randomness~$R$ for \AIndex\ under the
distribution~$\mu_0$, for the two parties. Let~$M$ denote the entire
message transcript under~$\mu$, and let~$M^0$ denote the transcript
under distribution~$\mu_0$. Then the information cost of~$\Pi$ is given
by~$\ic^\sA_{\mu_0}(\Pi) = \rI(X:M^0 \,|\, X[1,K]\, R)$ 
and~$\ic^\sB_{\mu_0}(\Pi) = \rI(K:M^0 \,|\, XR)$. Note that~$X[1,K]=X[1,K-1]B$
under distribution $\mu_0$ and that~$K$ can be computed from~$X[1,K]$.
Hence~$K,B$ are not explicitly included in Bob's input in the expression for 
Alice's information cost. Similarly, $X[1,K-1] \, B$ are determined by~$K$
when we condition on~$X$ under distribution~$\mu_0$. Hence, these are not
explicitly included in Bob's input in the expression for his information
cost. The use of the
notation~$M^0$ is equivalent to conditioning on the event~$X_K = B$,
i.e., imposing the distribution~$\mu_0$, and helps us present our
arguments more cleanly. 

Since the value of the \AIndex\ function~$f_n$ is a constant
on~$\mu_0$, there is no \emph{a priori\/} reason for the information
cost of any party in a protocol to be large. However, we additionally
require the protocol to be correct with non-trivial probability on the
uniform distribution, under which there is equal chance of the
function being~$0$ or~$1$. If the information cost (under~$\mu_0$) of
the two parties is sufficiently low, we show that neither party can
determine with high enough confidence what the function value is. The
intuition behind this is as follows. Suppose we restrict the inputs
to~$\mu_0$. If Bob's input~$K$ is changed, the random variables in
Alice's possession, specifically the message transcript~$M^0$
conditioned on her inputs, are not perturbed by much. This is because
these random variables reveal little information about~$K$.
Similarly, if we flip one of the bits of Alice's input~$X$ outside 
of the prefix with Bob, the random variables in Bob's possession at 
the end of the protocol are not perturbed by much.  Formally, these 
properties follow from the Average Encoding Theorem. Observe that if 
we simultaneously change Bob's index~$K$ to some~$L > K$ and flip 
the~$L$th bit of~$X$, we switch from a~$0$-input of~$f_n$ to a~$1$-input.
The Cut-and-Paste Lemma ensures that by simultaneously changing the 
inputs with the two parties, the message transcript is perturbed by at 
most the sum of the amounts when
the inputs are changed one at a time. This implies that the message
transcript does not sufficiently help either party compute the
function value.

We formalize this intuition in the next theorem, which we state for 
even~$n$. A similar result holds for odd~$n$, and may be derived 
from the proof for the even case. Together, they give us
Theorem~\ref{thm-2}, as stated in the introduction
(Section~\ref{sec-introduction}).
\begin{theorem}
\label{thm-main}
For any two-party randomized communication protocol~$\Pi$ for the
\AIndex\ function~$f_n$ with~$n$ even, that makes error at
most~$\eps \in [0, 1/4)$ on the uniform distribution~$\mu$ over
inputs, we have
\[
\left[ \frac{\ic_{\mu_0}^\sA(\Pi)}{n} \right]^{1/2}
+  \left[ 2 \cdot \ic_{\mu_0}^\sB(\Pi) \right]^{1/2}
\quad \geq \quad \frac{1-4\eps}{4\sqrt{\ln 2}}
                 - \left[ \frac{\rH(2\eps)}{n} \right]^{1/2} \enspace,
\]
where~$\mu_0$ is the uniform distribution over~$f_n^{-1}(0)$. In
particular, for any~$\eps$ smaller than~$1/4$ by a constant,
either~$\ic_{\mu_0}^\sA(\Pi) \in \Omega(n)$ or~$\ic_{\mu_0}^\sB(\Pi) \in
\Omega(1)$.
\end{theorem}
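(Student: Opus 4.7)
The plan is to sandwich the average Hellinger distance between the transcript of~$\Pi$ on a coupled pair of 0- and 1-inputs between a lower bound from correctness and an upper bound from the two information costs. For the lower bound (Step~1), I use that~$f_n$ is balanced under~$\mu$ and~$\Pi$ errs with probability at most~$\eps$, so each of the conditional errors on~$\mu_0$ and~$\mu_1$ is at most~$2\eps$; hence the output distinguishes these two conditional input distributions with total-variation advantage at least~$1-4\eps$. Combined with Proposition~\ref{fact:ellhell} and data processing, this translates into an~$(1-4\eps)/\Theta(1)$ lower bound on the Hellinger distance between the transcript on a~$0$-input and on a~$1$-input, averaged over~$(x,k)$ coupled by flipping~$B$ (equivalently~$X_K$); a Fano-style bookkeeping of the error budget picks up exactly the~$[\rH(2\eps)/n]^{1/2}$ correction.

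For the upper bound, fix~$x\in\{0,1\}^n$, indices~$k\le n/2<l$, and set $y=(k,x[1,k-1],x_k)$, $y'=(l,x[1,l-1],x_l)$, $x'=x\oplus e_l$ (Step~2). Because~$l>k$, the prefixes of~$x$ and~$x'$ up to position~$l-1$ coincide, and~$(x,y),(x,y'),(x',y)$ are all 0-inputs while~$(x',y')$ is a 1-input. Fixing the public randomness to~$r$ and applying Proposition~\ref{fact:cutandpaste} to the residual private-coin protocol yields
$$
\fh{M(x,y)|r}{M(x',y')|r} \;=\; \fh{M(x,y')|r}{M(x',y)|r},
$$
so by the triangle inequality together with $(a+b)^2\le 2a^2+2b^2$,
$$
\fh{M(x,y)|r}{M(x',y')|r}^2 \;\le\; 2\fh{M(x,y)|r}{M(x,y')|r}^2 + 2\fh{M(x,y)|r}{M(x',y)|r}^2.
$$
In Step~3, I bound the expected value of each right-hand term via Proposition~\ref{thm-avg}. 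For the first term (Bob changes~$K$ from~$k$ to~$l$), take~$A=M^0$, $B=K$ conditional on~$(X,R)$ and use the triangle inequality to pass from closeness-to-marginal to closeness of two conditionals, giving a bound of order~$\ic^\sB_{\mu_0}(\Pi)$. For the second term (Alice flips~$X_l$), I condition on~$(R,K,X[1,l-1])$; because~$l>K$ the prefix~$X[1,K]$ lies inside~$X[1,l-1]$, pinning down Bob's input. Applying Proposition~\ref{thm-avg} with~$A=M^0$, $B=X_l$, then summing over $l\in\{n/2+1,\ldots,n\}$ via the chain rule (Proposition~\ref{fact-chain}), telescopes into $\rI(X[n/2+1,n]:M^0\mid R,X[1,n/2],K) \le \ic^\sA_{\mu_0}(\Pi)$; uniform averaging over an~$n/2$-element set of~$l$'s then yields the factor~$\ic^\sA_{\mu_0}(\Pi)/n$.

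Combining Steps~1--3 and taking square roots (using~$\sqrt{a+b}\le\sqrt{a}+\sqrt{b}$), with~$\kappa=(\ln 2)/2$ absorbed into the constant, produces the stated trade-off (Step~4). The main obstacle is to synchronize the conditioning across Steps~2 and~3: the Cut-and-Paste decomposition must be carried out at a level of conditioning (fixing~$r$ and the shared prefix) that is compatible with the per-bit Average Encoding bounds, and the chain rule must be invoked so that the telescoping collapses exactly to~$\ic^\sA_{\mu_0}(\Pi)$ rather than a looser multiple. A secondary subtlety is transferring the aggregate correctness bound on~$\mu$ to a separation on the specific two-edge path used in Step~2, since that path toggles both parties' inputs; this is exactly the source of the~$[\rH(2\eps)/n]^{1/2}$ correction in the final bound.
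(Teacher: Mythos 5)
Your proposal tracks the paper's proof closely: fix the public randomness, set up the four-input family $\{(x,y),(x,y'),(x',y),(x',y')\}$ with $j\le n/2<l$, apply Cut-and-Paste plus the triangle inequality, bound the two legs via Average Encoding (for Bob's side) and the Chain Rule plus per-bit Average Encoding (for Alice's side), and close with the correctness lower bound on the $\ell_1$ distance. All of that matches the paper's Lemmata~\ref{thm-close}, \ref{thm-pairs}, and \ref{thm-flip}.

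The one place you go astray is the provenance of the $[\rH(2\eps)/n]^{1/2}$ term. You attribute it to ``transferring the aggregate correctness bound on~$\mu$ to a separation on the specific two-edge path.'' That transfer only produces the $1-4\eps$ factor (error $\le 2\eps$ on each of $\mu_0,\mu_1$). The actual source of the correction is a reduction the paper makes before anything else: the correctness lower bound needs the output to be a function of the transcript together with \emph{Bob's} variables (it lower bounds $\norm{M_r^0 X[1,K]-M_r^1 X[1,K-1]\bar X_K}$, and Bob's private coins can be resampled from what Bob has). If Alice is the one who outputs, the paper appends one extra message carrying the output bit, which raises Alice's information cost by at most $\rH(O^0)\le\rH(2\eps)$, and that overhead, divided by~$n$ inside the square root, is exactly the correction term. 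Without handling this (or arguing symmetrically from Alice's side, which would change the shape of the upper bound), the lower-bound step of your Step~1 does not go through for protocols where Alice speaks last; ``the output distinguishes the transcripts'' is not quite well-posed, because the output is a function of the transcript \emph{and one party's input}, and that party's input also changes along your two-edge path.
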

\begin{proof}
Consider a protocol~$\Pi$ as in the statement of the theorem. 
Let the inputs be given by random variables~$X,K,B$, drawn from
the distribution~$\mu$.

Let~$M$ be the entire message transcript of the protocol, and let~$M^0$
be the transcript under distribution~$\mu_0$. Without
loss of generality, we assume that Bob computes the output of the
protocol. If Alice computes the output, we include an additional message
from her to Bob consisting of the output. We show below that this only 
marginally increases the information revealed by Alice, and include its
effect in the lower bound we derive. Indeed, if the single
bit output of the protocol is~$O^0$ under the distribution~$\mu_0$,
$\rH(O^0) \leq \rH(2\eps)$, as the protocol produces the correct
output with probability at least~$1-2\eps$ on the
distribution~$\mu_0$. Let~$d \ge 0$ be such 
that~$\rI(X: M^0 \,|\, X[1,K]) = dn$. Then,
\begin{eqnarray*}
\rI(X: M^0 O^0 \,|\, X[1,K])
    & = & \rI(X: M^0 \,|\, X[1,K]) + \rI(X: O^0 \,|\, M^0 X[1,K]) \\
    & \leq & dn + \rH(O^0) \enspace,
\end{eqnarray*}
and~$\rI(K : M^0 O^0 \,|\, X) = \rI(K : M^0 \,|\, X)$.
Henceforth, we assume that the output of the protocol~$\Pi$
is computed by Bob, and its information costs are bounded
as~$\ic_{\mu_0}^\sA(\Pi) \leq d_1 n$ with~$d_1 = d + \rH(2\eps)/n$, 
and~$\ic_{\mu_0}^\sB(\Pi) \leq c$.

Let~$R$ be the public randomness used in the protocol. For each 
specific value~$r$ for the public random coins, we use the subscript~$r$ 
on a random variable to denote conditioning on~$R = r$. In particular, the
random variable~$M^0_r$ is the transcript~$M$ conditioned on~$R = r$, 
under distribution~$\mu_0$. Define~$d_{1r} 
\eqdef \tfrac{1}{n} \; \rI(X:M^0_r \,|\, X[1,K])$ and~$c_r \eqdef
\rI(K:M^0_r \,|\, X)$, so
that~$\expct_{r \leftarrow R} \; d_{1r} = \ic^\sA_{\mu_0}(\Pi) / n$
and~$\expct_{r \leftarrow R} \; c_r = \ic^\sB_{\mu_0}(\Pi)$.
We emphasize that the protocol may use private randomness in addition
to the public randomness~$R$. Let~$\eps_r$ denote the error made by the
protocol~$\Pi$ on the uniform distribution~$\mu$ over inputs, when~$R =
r$.

In the rest of the proof, we fix a specific value~$r$ for the public
randomness, and show that
\begin{equation}
\label{eqn-thm}
d_{1r}^{1/2} +  (2 c_r)^{1/2}
    \quad \geq \quad \frac{1-4\eps_r}{4\sqrt{\ln 2}} \enspace.
\end{equation}
Averaging this over~$r \leftarrow R$ and applying the Jensen Inequality 
gives us the theorem.

We show below that the random variables~$M_r^0 X[1,K]$ with Bob are
``close'' in distribution to the random
variables~$M_r^1 X[1,K-1]\, \bar{X}_K$, where~$M_r^1$
denotes the transcript~$M_r$ conditioned on the function value being~$1$,
i.e., when~$B = {\bar{X}_K}$. In other words, we show that
the~$\ell_1$ distance between them is only ``slightly more'' than~$1$ if the 
information cost of the protocol is small.
\begin{lemma}
\label{thm-close}
$\norm{M_r^0  X[1,K] - M_r^1 X[1,K-1]\, \bar{X}_K} 
    \quad \leq \quad 1 + 8 \sqrt{\kappa\, c_r} + 4 \sqrt{2 \kappa\, d_{1r}} $,
where~$\kappa = \frac{\ln 2}{2}$.
\end{lemma}

For any fixed~$r$, given the message transcript and his input, Bob's 
private randomness is independent of Alice's input and private randomness.
Therefore, we can regenerate Bob's private randomness exactly from the 
other random variables in his possession. As a result, we may use 
the protocol~$\Pi$ to identify the two distributions,
$M_r^0 X[1,K]$ and~$M_r^1 X[1,K-1]\, \bar{X}_K$, with average error~$\eps_r$.
If the error~$\eps_r$ were small, the~$\ell_1$ distance would be
correspondingly closer to~$2$. Formally, the $\ell_1$ distance between 
two distributions is non-increasing under the action of a stochastic map. 
So $\norm{M_r^0  X[1,K] - M_r^1 X[1,K-1]\, \bar{X}_K} \geq 2(1-2\eps_r)$,
as the latter is a lower bound on the~$\ell_1$ distance between the 
distributions of the output of the protocol in the two cases.
This gives us a lower bound on the information cost, in terms of the error
made by the protocol.  Combining the two bounds on the~$\ell_1$ 
distance, we get Eq.~(\ref{eqn-thm}) and hence the theorem.
\end{proof}

We now prove the heart of the theorem, i.e., that the message transcript
for the~$0$ and~$1$ inputs are close to each other in distribution.

\begin{proofof}{Lemma~\ref{thm-close}} 
{}
The proof follows the intuition given before Theorem~\ref{thm-main}.
We break the proof into several steps, each of which is captured by a
lemma. The proofs of the lemmata are postponed to later in the section
so as to present the high-level argument first.

When we wish to explicitly write the transcript~$M_r$ as a function of
the inputs to Alice and Bob, say $x$ and $x[1,k-1],b$ respectively, we
write it as $M_r(x ; x[1,k-1],b)$. If~$b = x_k$, we write Bob's input
as~$x[1,k]$.

For any~$x \in \set{0,1}^n$ and~$i \in [n]$, let~$x^{(i)}$ denote the
string that equals~$x$ in all coordinates except at the~$i$th.
Since~$(X,X[1,K-1],\bar{X}_K)$ and~$(X^{(K)},X[1,K])$ are identically
distributed, $M_r^1 = M_r(X ; X[1,K-1],\bar{X}_K)$ has the same distribution
as~$M_r( X^{(K)} ; X[1,K])$. Thus, our goal is to bound
\[
\norm{ M_r(X ; X[1,K]) \, X[1,K] - M_r( X^{(K)} ; X[1,K]) \, X[1,K] }
\enspace.
\]

Later, we consider the random variables in Bob's possession when we flip
one of the bits in input~$X$ with Alice. In order to do the flip in a manner
consistent with the prefix with Bob, we only flip bits in coordinates~$>
n/2$. This gives us a bound on the above quantity when the index 
is larger than~$n/2$. Therefore we consider~$L$ uniformly and independently 
distributed in~$[n]-[n/2]$, and~$J$ be uniformly and independently 
distributed in~$[n/2]$.  We have
\begin{eqnarray}
\nonumber
\lefteqn{ \norm{ M_r(X ; X[1,K]) \, X[1,K]
                 - M_r( X^{(K)} ; X[1,K])  \, X[1,K]}}  \\
\label{eqn-bound}
    & = &  \left\|  \frac{1}{2} \left( M_r(X ; X[1,J]) \, X[1,J] + M_r(X ; X[1,L]) \, X[1,L] \right) \right. \nonumber \\
    &  & \left. \mbox{} -   \frac{1}{2} \left( M_r( X^{(J)} ; X[1,J])  \, X[1,J] + M_r( X^{(L)} ; X[1,L])  \, X[1,L]  \right)\right\|  \nonumber \\
    & \leq & \frac{1}{2}  \norm{ M_r(X ; X[1,J]) \, X[1,J] - M_r( X^{(J)} ; X[1,J])  \, X[1,J] } \nonumber \\
    & & \mbox{} +  \frac{1}{2}  \norm{M_r(X ; X[1,L]) \, X[1,L] - M_r( X^{(L)} ; X[1,L])  \, X[1,L]} \nonumber \\
            & \leq & 1 + \frac{1}{2} 
             \norm{ M_r(X ; X[1,L]) \, X[1,L]
                    - M_r( X^{(L)} ; X[1,L]) \, X[1,L] }
             \enspace,
\end{eqnarray}
and we bound the RHS from above.

Recall that our goal is to show that, on average, changing from
a~$0$-input to a~$1$-input does not perturb the message transcript by much.
For this, we begin by showing that changing Alice's input alone, or
similarly, Bob's input alone, has this kind of effect.
If the information cost of Bob is small, the
message transcript does not carry much information about~$K$ when the
inputs are drawn from~$\mu_0$. From this, we deduce that the
transcript~$M_r^0$ is (on average) nearly the same for different inputs
to Bob.

 We compare the transcript when Bob's
input index is~$J$ to when it is~$L$.
\newtheorem*{thm-pairs}{Lemma~\ref{thm-pairs}}
\begin{thm-pairs}
$
\expct_{(x,j,l) \leftarrow (X,J,L)}
\; \fh{M_r(x \,;\, x[1,j])}{ M_r(x \,;\, x[1,l])}^2 
\quad \leq \quad 8 \kappa \, c_r.
$
\end{thm-pairs}
We defer the proof to later in this section.

In the interest of readability, we abbreviate some random variables in 
the rest of the proof, as also in the intermediate lemmata.
For~$i \in [n]$, and a prefix~$x[1,i]$ of a string~$x \in \set{0,1}^n$
that will be clear from the context, let~$v_i$ denote the 
prefix~$x[1,i]$, let~$U_i$ denote the random variable~$x[1,i]\, X[i+1,n]$
(i.e.,~$X$ conditioned on having prefix~$v_i$), and let~$U'_i$ denote 
the random variable~$x[1,i-1]\, \bar{x}_i\, X[i+1,n]$ (i.e.,~$U_i$ with 
the~$i$th bit flipped).

When changing Alice's input, we would like to ensure that the prefix
held by Bob does not change. So we restrict our
attention to Bob's inputs with index~$J \in [n/2]$, and change Alice's 
input by flipping the~$L$th bit, with~$L \in [n] - [n/2]$.
If the information cost of Alice is small, $M_r^0$ does not
carry much information about~$X$, even given a
prefix. Therefore, flipping a bit outside the prefix does not perturb the
transcript by much.
\newtheorem*{thm-flip}{Lemma~\ref{thm-flip}}
\begin{thm-flip}
$\expct_{(x[1,l],j,l) \leftarrow (X[1,L],J,L)} 
\; \fh{ M_r(U_l \,;\, v_j)}{ M_r(U'_l \,;\, v_j)}^2 
\quad \leq \quad 16 \kappa\, d_{1r} \enspace.$
\suppress{
We have
\begin{align*}
\; \fh{ M_r(x[1,l]\, X[l+1,n] \,;\, x[1,j])} 
{ M_r(x[1,l-1]\, \bar{x}_l\, X[l+1,n] \,;\, x[1,j])}^2 \\
    & \leq \quad 16 \kappa\, d_{1r} \enspace.
\end{align*}
}
\end{thm-flip}
This is proven later in the section.

We now conclude the proof of Lemma~\ref{thm-close}. Since Hellinger
distance squared is jointly convex (Proposition~\ref{fact-hconvex}), 
Lemma~\ref{thm-pairs} gives us a bound on the distance between the
transcripts averaged over the choice of suffix~$x[l+1,n]$. Along with
the Jensen Inequality, we get
\begin{align}
\label{eq:1}
\expct_{(x[1,l],j,l) \leftarrow (X[1,L],J,L)} 
\; \fh{M_r(U_l \,;\, v_j)}{ M_r(U_l \,;\, v_l)} 
    & \quad \leq \quad \sqrt{8 \kappa \, c_r} \enspace.
\suppress{
\expct_{(x[1,l],j,l) \leftarrow (X[1,L],J,L)} &
\; \fh{M_r(x[1,l]\, X[l+1,n] \,;\, x[1,j])}{
M_r(x[1,l]\, X[l+1,n] \,;\, x[1,l])} \nonumber \\
    & \leq \quad \sqrt{8 \kappa \, c_r} \enspace.
}
\end{align}
Along with the Triangle Inequality, and Lemma~\ref{thm-flip}, this
implies that
\begin{align*}
\expct_{(x[1,l],j,l) \leftarrow (X[1,L],J,L)} 
\; \fh{ M_r(U_l \,;\, v_l)}{ M_r(U'_l \,;\, v_j)} 
\quad \leq \quad \sqrt{8 \kappa\, c_r} + \sqrt{16 \kappa\, d_{1r}}\,
\enspace.
\suppress{
\expct_{(x[1,l],j,l) \leftarrow (X[1,L],J,L)} &
\; \fh{ M_r(x[1,l]\, X[l+1,n] \,;\, x[1,l])}{
M_r(x[1,l-1]\,\bar{x}_l\, X[l+1,n] \,;\, x[1,j])} \\
    & \leq \quad \sqrt{8 \kappa\, c_r} + \sqrt{16 \kappa\, d_{1r}}\, \enspace.
}
\end{align*}
Using the Cut-and-Paste property of private coin communication protocols
(Proposition~\ref{fact:cutandpaste}), we conclude that simultaneously
changing Bob's input from~$x[1,j]$ to~$x[1,l]$ and flipping the~$l$th
bit of~$x$ perturbs the transcript by no more than the individual changes.
\begin{align} \label{eq:2}
\expct_{(x[1,l],j,l) \leftarrow (X[1,L],J,L)} \; 
&  \fh{ M_r(U_l \,;\, v_j)}{ M_r(U'_l \,;\, v_l)} \nonumber \\
& = \quad \expct_{(x[1,l],j,l) \leftarrow (X[1,L],J,L)} \; 
          \fh{ M_r(U_l \,;\, v_l)}{M_r(U'_l \,;\, v_j)} \nonumber \\
& \leq \quad \sqrt{8 \kappa\, c_r} + \sqrt{16 \kappa\, d_{1r}}
             \enspace.
\suppress{
& \expct_{(x[1,l],j,l) \leftarrow (X[1,L],J,L)} \; 
  \fh{ M_r(x[1,l]\, X[l+1,n] \,;\, x[1,j])}{
  M_r(x[1,l-1]\, \bar{x}_l\, X[l+1,n] \,;\, x[1,l])} \nonumber \\
& = \quad \expct_{(x[1,l],j,l) \leftarrow (X[1,L],J,L)} \; 
          \fh{ M_r(x[1,l]\, X[l+1,n] \,;\, x[1,l])}{
          M_r(x[1,l-1]\,\bar{x}_l\, X[l+1,n] \,;\, x[1,j])} \nonumber \\
& \leq \quad \sqrt{8 \kappa\, c_r} + \sqrt{16 \kappa\, d_{1r}}
             \enspace.
}
\end{align}
Combining Eq.~(\ref{eq:1}) and Eq.~(\ref{eq:2}), and using the Triangle
Inequality we get
\begin{align*}
\expct_{(x[1,l],l) \leftarrow (X[1,L],L)} 
& \; \fh{ M_r(U_l \,;\, v_l)}{  M_r(U'_l \,;\, v_l)} 
\quad \leq \quad 4\sqrt{2 \kappa\, c_r}  +  4 \sqrt{\kappa\, d_{1r}}
                  \enspace.
\suppress{
& \expct_{(x[1,l],l) \leftarrow (X[1,L],L)} 
\; \fh{ M_r(x[1,l] X[l+1,n] \,;\, x[1,l])}
       {  M_r(x[1,l-1]\,\bar{x}_l\, X[l+1,n] \,;\, x[1,l] )} \nonumber \\
& \leq \quad 4\sqrt{2 \kappa\, c_r}  +  4 \sqrt{\kappa\, d_{1r}}
                  \enspace.
}
\end{align*}
Using Proposition~\ref{fact:ellhell}, we translate this back to a bound
on~$\ell_1$ distance:
\begin{align*} 
\left\| M_r(X \,;\, X[1,L]) \, X[1,L] \right.
&  \mbox{} - \left. M_r( X^{(L)} \,;\, X[1,L]) \, X[1,L] \right\| \\ 
& \leq \quad \expct_{(x[1,l],l) \leftarrow (X[1,L],L)} 
          \norm{ M_r(U_l \,;\, v_l)
                 -  M_r(U'_l \,;\, v_l)} \\
    & \leq \quad 16 \sqrt{\kappa\, c_r} + 8 \sqrt{2 \kappa\, d_{1r}}
                  \enspace.
\suppress{
& \norm{ M_r(X \,;\, X[1,L]) \, X[1,L]
         - M_r( X^{(L)} \,;\, X[1,L]) \, X[1,L] } \\ 
& \leq \quad \expct_{(x[1,l],l) \leftarrow (X[1,L],L)} 
          \norm{ M_r(x[1,l]\, X[l+1,n] \,;\, x[1,l])
                 -  M_r(x[1,l-1]\,\bar{x}_l\, X[l+1,n] \,;\, x[1,l] )} \\
    & \leq \quad 16 \sqrt{\kappa\, c_r} + 8 \sqrt{2 \kappa\, d_{1r}}
                  \enspace.
}
\end{align*}

Lemma~\ref{thm-close} follows by combining this with Eq.~(\ref{eqn-bound}).
\end{proofof}

We return to the lemmata whose proofs we had deferred.

\begin{lemma}
\label{thm-pairs}
$
\expct_{(x,j,l) \leftarrow (X,J,L)}
\; \fh{M_r(x \,;\, x[1,j])}{ M_r(x \,;\, x[1,l])}^2 
\quad \leq \quad 8 \kappa \, c_r.
$
\end{lemma}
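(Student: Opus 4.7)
The plan is to recognize that the bound on $c_r = \rI(K : M^0_r \mid X)$ says precisely that, for most fixings of $X$, the transcript carries little information about Bob's index $K$, and hence by the Average Encoding Theorem the transcripts for different values of $k$ are on average close to their own mean. Two independent samples are then close to each other via the triangle inequality.

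More concretely, first I note that under $\mu_0$, given $X = x$, Bob's input is determined by $K$ alone (since $B = x_K$), and the conditional distribution of $K$ given $X = x$ is uniform on $[n]$. Consequently the distribution of $M^0_r$ conditioned on $X = x$ is the uniform mixture (over $k \in [n]$) of $M_r(x\,;\,x[1,k])$. Writing $\tM(x)$ for this mixture, the Average Encoding Theorem (Proposition~\ref{thm-avg}), applied for each fixed $x$ to the pair $(M^0_r \mid X=x,\, K \mid X=x)$, yields
\begin{equation*}
\expct_{x \leftarrow X} \; \expct_{k \text{ unif.\ in } [n]} \; \fh{M_r(x\,;\,x[1,k])}{\tM(x)}^2
\quad \le \quad \kappa \, \rI(K : M^0_r \mid X) \quad = \quad \kappa\, c_r.
\end{equation*}

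Second, since $J$ is uniform in $[n/2]$ and $L$ is uniform in $[n]-[n/2]$, each ranges over a subset of $[n]$ of size $n/2$, so the expectations over $j$ and $l$ are each at most twice the expectation over $k$ uniform in $[n]$. Therefore
\begin{equation*}
\expct_{(x,j) \leftarrow (X,J)} \; \fh{M_r(x\,;\,x[1,j])}{\tM(x)}^2
\;\leq\; 2 \kappa\, c_r,
\qquad
\expct_{(x,l) \leftarrow (X,L)} \; \fh{M_r(x\,;\,x[1,l])}{\tM(x)}^2
\;\leq\; 2 \kappa\, c_r.
\end{equation*}

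Third, by the triangle inequality for Hellinger distance followed by $(a+b)^2 \le 2a^2 + 2b^2$, for every $x,j,l$,
\begin{equation*}
\fh{M_r(x\,;\,x[1,j])}{M_r(x\,;\,x[1,l])}^2
\quad \leq \quad 2\,\fh{M_r(x\,;\,x[1,j])}{\tM(x)}^2
              + 2\,\fh{\tM(x)}{M_r(x\,;\,x[1,l])}^2.
\end{equation*}
Taking expectations over $(x,j,l) \leftarrow (X,J,L)$ (which are jointly independent under $\mu_0$), both terms are bounded by $2 \cdot 2\kappa\, c_r$, and summing gives the stated bound of $8\kappa\, c_r$.

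The step that requires the most care is the bookkeeping in the first step: verifying that under $\mu_0$ conditioned on $X=x$, the random variable $K$ really is uniform on $[n]$ (so that the marginal $\tM(x)$ matches the uniform mixture of $M_r(x\,;\,x[1,k])$ over $k$), and that the mutual information $\rI(K : M^0_r \mid X)$ averaged according to $X$ is exactly the quantity $c_r$. Once these identifications are made, the remainder is a routine application of the Average Encoding Theorem and the triangle inequality.
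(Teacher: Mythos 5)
Your proof is correct and matches the paper's argument step for step: define the marginal mixture $\tM_r(x)$, apply the Average Encoding Theorem conditionally on $X=x$ and average, pay a factor of~$2$ for restricting $k$ to each half of~$[n]$, and finish with the triangle inequality for Hellinger distance together with $(a+b)^2 \le 2a^2+2b^2$. The supporting observations you flag as needing care (that $K$ is uniform on $[n]$ conditioned on $X=x$ under $\mu_0$, and that the averaged conditional mutual information equals $c_r$) are exactly the ones the paper relies on.
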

\begin{proof}
Let us define a new random variable~$\tM_r$ jointly distributed with~$X$,
and independent of all other random variables, such that the joint
distribution of $X\tM_r$ is identical to the joint distribution of $XM_r^0$.
In particular, we have~$\tM_r(x) = \expct_{k \leftarrow K}
M_r (x \,;\, x[1,k])$.

By the Average Encoding Theorem, Proposition~\ref{thm-avg}, we have that
for every~$x \in \set{0,1}^n$,
\begin{eqnarray*}
\expct_{k \leftarrow K} \; \fh{M_r(x \,;\, x[1,k])}{\tM_r(x)}^2
    & \leq & \kappa \, \rI(K:M_r^0 \,|\, X = x) \enspace,
\end{eqnarray*}
where~$\kappa = \frac{\ln 2}{2}$. Averaging over~$x \leftarrow X$,
\begin{eqnarray*}
\expct_{(x,k) \leftarrow (X,K)} 
\; \fh{M_r(x \,;\, x[1,k])}{ \tM_r(x)}^2
    & \leq &  \kappa  \, \rI(K:M_r^0 \,|\, X)  \quad = \quad \kappa \, c_r \enspace.
\end{eqnarray*}
An immediate consequence is that
\begin{eqnarray*}
\expct_{(x,j) \leftarrow (X,J)} \; \fh{M_r(x \,;\, x[1,j])}{  \tM_r(x)}^2
    & \leq & 2 \, \kappa \, c_r \enspace, \qquad \text{and} \\
\expct_{(x,l) \leftarrow (X,L)} \; \fh{M_r(x \,;\, x[1,l]) }{ \tM_r(x)}^2
    & \leq & 2 \, \kappa \, c_r \enspace.
\end{eqnarray*}
By the Triangle Inequality, for any~$j \in [n/2]$, $l \in [n]-[n/2]$,
and~$x \in \set{0,1}^n$,
\begin{eqnarray*}
\lefteqn{ \fh{M_r(x \,;\, x[1,j])}{  M_r(x \,;\, x[1,l])}^2} \\
& \leq &  \left(\fh{M_r(x \,;\, x[1,j])}{  \tM_r(x)}
              + \fh{M_r(x \,;\, x[1,l])} { \tM_r(x)} \right)^2 \\
& \leq &  2 \, \fh{M_r(x \,;\, x[1,j])}{  \tM_r(x)}^2
          + 2 \, \fh{M_r(x \,;\, x[1,l])}{\tM_r(x)}^2 \enspace .
\end{eqnarray*}
Taking expectation over~$X,J,L$, we get the claimed bound.
\end{proof}

\begin{lemma}
\label{thm-flip}
$\expct_{(x[1,l],j,l) \leftarrow (X[1,L],J,L)} 
\; \fh{ M_r(U_l \,;\, v_j)}{ M_r(U'_l \,;\, v_j)}^2 
\quad \leq \quad 16 \kappa\, d_{1r} \enspace.$
\suppress{
We have
\begin{align*}
\expct_{(x[1,l],j,l) \leftarrow (X[1,L],J,L)} &
\; \fh{ M_r(x[1,l]\, X[l+1,n] \,;\, x[1,j])} 
{ M_r(x[1,l-1]\, \bar{x}_l\, X[l+1,n] \,;\, x[1,j])}^2 \\
    & \leq \quad 16 \kappa\, d_{1r} \enspace.
\end{align*}
}
\end{lemma}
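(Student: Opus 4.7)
My plan is to mirror the structure of the proof of Lemma~\ref{thm-pairs}, but apply the Average Encoding Theorem (Proposition~\ref{thm-avg}) to Alice's bit $X_l$ in place of Bob's index $K$. Intuitively, since Alice's information cost~$d_{1r}$ is small, the transcript~$M_r^0$ reveals little about any single bit~$X_l$ outside Bob's prefix; hence flipping that bit should only slightly perturb the transcript on average.

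First I would introduce, for each $l \in [n]-[n/2]$, $j \in [n/2]$, and $x[1,l-1] \in \set{0,1}^{l-1}$, the auxiliary distribution $\tM_r(x[1,l-1] \,;\, v_j)$ defined as the distribution of $M_r^0$ conditioned on $X[1,l-1]=x[1,l-1]$ and $K=j$, with $X_l$ marginalised out. Explicitly,
\[
\tM_r(x[1,l-1] \,;\, v_j)
\;\eqdef\; \tfrac{1}{2}\bigl(M_r(x[1,l-1]\, 0\, X[l+1,n]\,;\,v_j) + M_r(x[1,l-1]\, 1\, X[l+1,n]\,;\,v_j)\bigr)\enspace.
\]
Applying Proposition~\ref{thm-avg} to the jointly distributed variables $(X_l, M_r^0)$ under the conditioning $X[1,l-1]=x[1,l-1]$, $K=j$ then gives, for each fixed $j, l, x[1,l-1]$,
\[
\expct_{x_l} \fh{M_r(U_l\,;\,v_j)}{\tM_r(x[1,l-1] \,;\, v_j)}^2
\;\leq\; \kappa \cdot \rI(X_l : M_r^0 \mid X[1,l-1] = x[1,l-1],\, K=j)\enspace.
\]
Since flipping $x_l$ swaps $M_r(U_l;v_j)$ with $M_r(U'_l;v_j)$, the Triangle Inequality for the Hellinger metric together with $(a+b)^2 \leq 2(a^2+b^2)$ yields
\[
\fh{M_r(U_l\,;\,v_j)}{M_r(U'_l\,;\,v_j)}^2
\;\leq\; 4 \expct_{x_l} \fh{M_r(U_l\,;\,v_j)}{\tM_r(x[1,l-1] \,;\, v_j)}^2\enspace.
\]

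The remaining task is to control the average $\expct_{j,l} \rI(X_l : M_r^0 \mid X[1,l-1], K=j)$ by $4\,d_{1r}$. For this I would invoke the Chain Rule (Proposition~\ref{fact-chain}) on
\[
n\, d_{1r} \;=\; \rI(X : M_r^0 \mid X[1,K])
\;=\; \frac{1}{n} \sum_{k=1}^n \sum_{i=k+1}^n \rI(X_i : M_r^0 \mid X[1,i-1], K=k)\enspace,
\]
which gives $\sum_{(k,i)\,:\, i > k} \rI(X_i : M_r^0 \mid X[1,i-1], K=k) = n^2\, d_{1r}$. Every pair $(j,l)$ with $j \in [n/2]$ and $l \in [n]-[n/2]$ satisfies $l > j$, so the sum restricted to this rectangle is bounded by $n^2\, d_{1r}$; dividing by the $(n/2)^2$ pairs in the restriction gives $\expct_{j,l} \rI(X_l : M_r^0 \mid X[1,l-1], K=j) \leq 4\, d_{1r}$. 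Multiplying by the $4\kappa$ factor from the previous step produces the claimed bound of $16\kappa\, d_{1r}$. The main obstacle is essentially bookkeeping: aligning the restricted rectangle $[n/2] \times ([n]-[n/2])$ with the Chain Rule expansion, which is precisely why the argument constrains $J$ to the first half and $L$ to the second half of the index range.
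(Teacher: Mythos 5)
Your proposal is correct and follows essentially the same route as the paper's proof: apply Average Encoding to the bit $X_l$ conditioned on the prefix $x[1,l-1]$ and index $j$, combine via the Triangle Inequality to obtain the $4\kappa\,\rI(X_l : M_r^0 \mid X[1,l-1], K=j)$ bound for each $(j,l)$, and then use the Chain Rule with the restriction to the rectangle $[n/2]\times([n]-[n/2])$ to control the average. The only superficial difference is bookkeeping: the paper first restricts $K$ to $J\in[n/2]$ (picking up a factor $2$ in Eq.~(\ref{eqn-J})), then fixes $j$ and expands the Chain Rule in $l$ restricted to $l>n/2$ (picking up another factor $2$ in Eq.~(\ref{eqn-L})), whereas you expand the Chain Rule over the whole sum $\sum_{k}\sum_{i>k}$ at once and then restrict the resulting double sum to the rectangle, dividing by $(n/2)^2$; both bookkeeping schemes produce the same constant $16\kappa$.
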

\begin{proof}
This intuition behind this lemma is the same as that behind the
impossibility of ``random access
encoding''~\cite{Nayak99,AmbainisNTV02}, as we explain next.
Suppose we view the transcript as an
encoding of the bits of~$X$ not known to Bob, of which there are at
least~$n/2$. Since they are uniformly random, the net information in the
encoding about the bits is no more than the sum of the information
about the individual bits, even conditioned on the prefix. This follows by the 
superadditivity of mutual information for independent random variables
(equivalently, the Chain Rule,
Proposition~\ref{fact-chain}). This implies that, on average, the
encoding is very weakly correlated with the bits. The Average Encoding
Theorem (Proposition~\ref{thm-avg}) then implies that the
messages for two prefixes that differ in one bit are close to
each other, on average. We formalize this below.

We have
\begin{eqnarray}
\label{eqn-J}
d_{1r} n & \geq &  \rI(X : M_r^0 \,|\, X[1,K]) \nonumber \\
& = & \frac{1}{2} \; \expct_{j \leftarrow J} \,  \rI(X : M_r(X\,;\, X[1,J]) \,|\, X[1,J])  +  \frac{1}{2} \; \expct_{l \leftarrow L} \,  \rI(X : M_r(X\,;\, X[1,L]) \,|\, X[1,L]) \nonumber \\  
& \geq & \frac{1}{2} \; \expct_{j \leftarrow J} \,  \rI(X : M_r(X\,;\, X[1,J]) \,|\, X[1,J])   \enspace.  
\end{eqnarray}
Fix a sample point~$(x[1,j],j)$, with~$j \in [n/2]$.  By the Chain Rule
(Proposition~\ref{fact-chain}),
\begin{eqnarray}
\lefteqn{ \rI( X[j+1,n] : M_r(U_j \,;\, v_j) ) } \\
\nonumber
    & = & \sum_{l = j+1}^n 
          \rI( X_l : M_r(U_j \,;\, v_j) \,|\, X[j+1,l-1]) \\
\label{eqn-L}
    & \geq & \sum_{l = n/2 + 1}^n 
          \rI( X_l : M_r(U_j \,;\, v_j) \,|\, X[j+1,l-1]) 
          \enspace.
\suppress{
\nonumber
\lefteqn{ \rI( X[j+1,n] : M_r(x[1,j]\, X[j+1,n] \,;\, x[1,j]) ) } \\
\nonumber
    & = & \sum_{l = j+1}^n 
          \rI( X_l : M_r(x[1,j]\, X[j+1,n] \,;\, x[1,j]) \,|\, X[j+1,l-1]) \\
\label{eqn-L}
    & \geq & \sum_{l = n/2 + 1}^n 
          \rI( X_l : M_r(x[1,j]\, X[j+1,n] \,;\, x[1,j]) \,|\, X[j+1,l-1]) 
          \enspace.
}
\end{eqnarray}
Moreover, by the Triangle Inequality and the Average Encoding 
Theorem (Proposition~\ref{thm-avg}), for any given $x[1,l]$, 
with~$l \in [n] - [n/2]$,
\begin{align}
\label{eqn-flip}
\nonumber
& \fh{  M_r(U_l \,;\, v_j)}{M_r(U'_l \,;\, v_j) }^2 \\
\nonumber
& \leq \quad \bigl[ \; \fh{  M_r(U_l \,;\, v_j)}{M_r(U_{l-1} \,;\, v_j) } 
       + \fh{  M_r(U'_l \,;\, v_j)}{M_r(U_{l-1} \,;\, v_j) } \; \bigr]^2 \\
\nonumber
& \leq \quad 2 \; \bigl[ \; \fh{  M_r(U_l \,;\, v_j)}
        {M_r(U_{l-1} \,;\, v_j) }^2
        + \fh{  M_r(U'_l \,;\, v_j)}{M_r(U_{l-1} \,;\, v_j) }^2 \; \bigr] \\
& \leq \quad 4 \kappa \; \rI( X_l : M_r(U_{l-1} \,;\, v_j))
             \enspace .
\suppress{
\nonumber
& \fh{  M_r(x[1,l-1]\, x_l\, X[l+1,n] \,;\, x[1,j])}
     {M_r(x[1,l-1]\,\bar{x}_l\, X[l+1,n] \,;\, x[1,j]) }^2 \\
\nonumber
& \leq \quad \bigl[ \; \fh{  M_r(x[1,l-1]\, x_l\, X[l+1,n] \,;\, x[1,j])}
     {M_r(x[1,l-1]\, X_l\, X[l+1,n] \,;\, x[1,j]) }  \\
\nonumber
& \qquad  + \fh{  M_r(x[1,l-1]\, \bar{x}_l\, X[l+1,n] \,;\, x[1,j])}
     {M_r(x[1,l-1]\, X_l\, X[l+1,n] \,;\, x[1,j]) } \; \bigr]^2 \\
\nonumber
& \leq \quad 2 \; \bigl[ \; \fh{  M_r(x[1,l-1]\, x_l\, X[l+1,n] \,;\, x[1,j])}
     {M_r(x[1,l-1]\, X_l\, X[l+1,n] \,;\, x[1,j]) }^2 \\
\nonumber
& \qquad  + \fh{  M_r(x[1,l-1]\, \bar{x}_l\, X[l+1,n] \,;\, x[1,j])}
     {M_r(x[1,l-1]\, X_l\, X[l+1,n] \,;\, x[1,j]) }^2 \; \bigr] \\
& \leq \quad 4 \kappa \; \rI( X_l : M_r(x[1,l-1]\, X_l\, X[l+1,n] \,;\, x[1,j]))
             \enspace .
}
\end{align}
Combining Eqs.~(\ref{eqn-J}), (\ref{eqn-L}), and~(\ref{eqn-flip}), we
get
\begin{align*}
& \expct_{(x[1,l],j,l) \leftarrow (X[1,L],J,L)}
\; \fh{ M_r(U_l \,;\, v_j)}{ M_r(U'_l \,;\, v_j)}^2 \\
& \leq \quad  4 \kappa \; \expct_{(x[1,l-1],j,l) \leftarrow (X[1,L-1],J,L)} \;
              \rI( X_l : M_r(U_{l-1} \,;\, v_j)) \\
& = \quad  4 \kappa \; \expct_{(x[1,j],j,l) \leftarrow (X[1,J],J,L)} \; 
           \rI( X_l : M_r(U_j \,;\, v_j) \, | \, X[j+1,l-1]) \\
& \leq \quad \frac{8\kappa}{n} \; \rI( X : M_r(X \,;\, X[1,J]) \, | \, X[1,J]) 
\quad \leq \quad 16 \kappa \; d_{1r} \enspace,
\suppress{
& \expct_{(x[1,l],j,l) \leftarrow (X[1,L],J,L)}
\; \fh{ M_r(x[1,l]\, X[l+1,n] \,;\, x[1,j])} 
{ M_r(x[1,l-1]\,\bar{x}_l\, X[l+1,n] \,;\, x[1,j])}^2 
    \\
& \leq \quad  4 \kappa \; \expct_{(x[1,l-1],j,l) \leftarrow (X[1,L-1],J,L)} \, 
              \rI( X_l : M_r(x[1,l-1]\, X_l\, X[l+1,n] \,;\, x[1,j]))
    \\
& = \quad  4 \kappa \; \expct_{(x[1,j],j,l) \leftarrow (X[1,J],J,L)} \, 
           \rI( X_l : M_r(x[1,j]\, X[j+1,n] \,;\, x[1,j]) \, | \, X[j+1,l-1])
    \\
& \leq \quad \frac{8\kappa}{n} \; \rI( X : M_r(X \,;\, X[1,J]) \, | \, X[1,J]) 
\quad \leq \quad 16 \kappa \; d_{1r} \enspace,
}
\end{align*}
as claimed.
\end{proof}

\section{The connection with streaming algorithms}
\label{sec-streaming}

Streaming algorithms are algorithms of a simple form, intended to
process massive problem instances rapidly, ideally using space that is
of smaller order than the size of the input.  A {\em pass\/}
on an input $x\in\Sigma^n$, where~$\Sigma$ is some alphabet,
means that $x$ is read as an {\em input
  stream\/}~$x_1,x_2,\ldots,x_n$, which arrives sequentially, i.e.,
letter by letter in this order.
\begin{definition}[Streaming algorithm]
Fix an alphabet $\Sigma$.  A (unidirectional) $T$-pass {\em streaming 
algorithm\/}~\textbf{A} with space $s(n)$ and time $t(n)$ is an algorithm 
such that for every input stream~$x\in\Sigma^n$:
\begin{enumerate}
\item  
\textbf{A}  performs~$T$ sequential passes on~$x$ in the order~$x_1,
x_2, \dotsc, x_n$,
\item 
\textbf{A}  maintains a memory space of size~$s(n)$ bits
while reading $x$,
\item 
\textbf{A}  has running time at most~$t(n)$ per letter~$x_i$, and
\item 
\textbf{A}  has pre-processing and post-processing time at most~$t(n)$.
\end{enumerate}
We say that \textbf{A} is {\em bidirectional\/} if it is allowed to
read the input in the reverse order, after reaching the last letter.
Then the parameter~$T$ is the total number of passes in
either direction.
\end{definition}
In general, the pre- and post-processing times of a streaming algorithm
may be different, and may differ from the running time per letter. Since
the results in this section apply to streaming algorithms regardless of
their time complexity, we choose not to make this finer distinction.

We refer the reader to the text~\cite{Muthukrishnan05} for a more
thorough introduction to streaming algorithms.

Recall that in a two-party communication protocol for \AIndex, one
party, Alice, has
an~$n$-bit string~$x$, and the other party, Bob, has an integer~$k \in
[n]$, the prefix~$x[1,k-1]$ of~$x$, and a bit~$b \in \set{0,1}$. Their
goal is to compute the function~$f_n(x,(k, x[1,k-1], b)) = x_k \xor
b$, i.e., to determine whether~$b = x_k$ or not, by engaging in a
two-party communication protocol.

The relationship between streaming algorithms for $\dyck(2)$ and
communication protocols for~$f_n$ is captured by a reduction due to 
Magniez, Mathieu,
and Nayak~\cite{MagniezMN10}. The reduction was originally described 
only for one-pass streaming algorithms, but extends readily to
unidirectional multi-pass algorithms. For completeness, we include
a proof of this theorem here.
\begin{theorem}
\label{thm-mmn}
Suppose there is a randomized unidirectional streaming algorithm for $\dyck(2)$ 
with~$T$ passes that uses space~$s$ for instances of length at most~$4n^2$,
and has worst-case two-sided error~$\delta$. Then there is a two-party
communication protocol~$\Pi$ for the \AIndex\ function~$f_n$ that makes
error at most~$\delta$ on the uniform distribution~$\mu$ over its
inputs, and has information costs~$\ic^\sA_{\mu_0}(\Pi) \le sT$ for
Alice and~$\ic^\sB_{\mu_0}(\Pi) \le sT/n$ for Bob, with respect to the
uniform distribution~$\mu_0$ over~$f_n^{-1}(0)$.
\end{theorem}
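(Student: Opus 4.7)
The plan is to adapt the reduction of Magniez, Mathieu, and Nayak to $T$-pass streaming algorithms. Given a randomized $T$-pass streaming algorithm $\mathbf{A}$ for $\dyck(2)$ with space $s$ on instances of length at most $4n^2$, I will construct a two-party communication protocol $\Pi$ for $f_n$ in two steps: (i) embed an \AIndex\ instance into a $\dyck(2)$ instance, and (ii) simulate $\mathbf{A}$ over that instance to produce the messages of $\Pi$.

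For the embedding, using shared public randomness $R$, Alice and Bob jointly produce a $\dyck(2)$ instance $z$ of length at most $4n^2$, partitioned as $z = z_\sA\, z_\sB$, where $z_\sA$ is a deterministic function of $x$ and $R$ written by Alice and $z_\sB$ is a deterministic function of $(k, x[1,k-1], b)$ and $R$ written by Bob. The construction is the standard ``mountain'' embedding: $z$ is split into $n$ symmetric slots, each of length $\Theta(n)$. In each slot, Alice writes a sequence of opening parentheses derived from $x$; for the one slot selected by Bob's index~$k$, Bob's closing parentheses use $x[1,k-1]$ to close the outer levels correctly and use $\bar{p}_b$ at level $k$, so a mismatch occurs there iff $b \neq x_k$; for the remaining $n-1$ slots, Bob uses the public randomness and the known parts of the mountain so that each such slot is always well-parenthesized. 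The construction ensures $z \in \dyck(2)$ iff $f_n(x,(k,x[1,k-1],b)) = 0$, with $|z| \le 4n^2$.

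To simulate $\mathbf{A}$, in each pass Alice runs $\mathbf{A}$ on $z_\sA$ starting from the current memory state, sends the resulting $s$-bit state to Bob, who continues $\mathbf{A}$ on $z_\sB$ and sends the updated state back to begin the next pass. After $T$ passes Bob outputs $\mathbf{A}$'s verdict. The entire transcript consists of at most $T$ messages of $s$ bits from Alice and $T$ messages of $s$ bits from Bob, so $|M_\sA|, |M_\sB| \le sT$. Correctness of $\mathbf{A}$ with worst-case two-sided error $\delta$ immediately implies that $\Pi$ errs with probability at most $\delta$ on the uniform distribution $\mu$.

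It remains to bound the two information costs with respect to $\mu_0$. Alice's bound is routine: since Bob's messages are deterministic functions of his input, public coins, private coins, and Alice's earlier messages, and Bob's private coins are independent of $X$ given his view, the Chain Rule yields $\ic^\sA_{\mu_0}(\Pi) = \rI(X : M^0 \mid X[1,K]\, R) = \rI(X : M^0_\sA \mid X[1,K]\, R) \le |M^0_\sA| \le sT$. Bob's $1/n$ saving is the crux of the reduction and comes from the slot symmetry built into the embedding: since the $n-1$ ``dummy'' slots are always well-parenthesized regardless of~$K$ and are chosen from the public coins, a direct-sum / averaging argument over which of the $n$ slots carries Bob's real \AIndex\ instance yields $\ic^\sB_{\mu_0}(\Pi) = \rI(K : M^0 \mid X\, R) \le sT/n$. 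The main obstacle is engineering the $n-1$ dummy slots so that they are always well-parenthesized, require no information beyond the public coins and Alice's input, and are interchangeable with the ``real'' slot; once the mountain construction is in hand, the correctness and information-cost accounting are straightforward extensions of the one-pass argument to $T$ passes.
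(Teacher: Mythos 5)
Your high-level plan matches the paper's: embed \AIndex\ into the ``mountain'' form of a $\dyck(2)$ instance, simulate the $T$-pass streaming algorithm with Alice and Bob alternating $s$-bit memory states, and use a direct-sum argument to obtain the $1/n$ saving for Bob. The Alice bound of $sT$ is also handled correctly. However, there are two genuine gaps.

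First, the embedding as you describe it does not work. You write that Alice places openers ``derived from $x$'' in \emph{every} slot and that Bob's index $k$ ``selects'' a slot; but Bob cannot close Alice's openers in the dummy slots if they depend on $x$, since he does not know $x$ beyond a prefix. The paper instead places $n$ \emph{independent} \AIndex\ instances $(x^i,k^i,z^i)$, one per slot, so that $\dyck(2)$ membership is the OR of $n$ independent copies of $f_n$. The ``real'' slot $j$ is a parameter of the constructed protocol $\Pi_j$ (there are $n$ of them), not something Bob's input $k$ chooses. You conflate the slot index $j$ with the \AIndex\ index $k$.

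Second, and more importantly, the $sT/n$ bound for Bob is the crux of the theorem, and you assert it rather than derive it. The direct-sum calculation requires a specific asymmetric choice that you get wrong: you claim the dummy slots ``require no information beyond the public coins and Alice's input,'' but in fact for slots $i>j$ the values $(K^i,Z^i)$ must be Bob's \emph{private} coins (while $(K^i,Z^i)_{i<j}$ and all $X^i$ are public). This is precisely what makes the chain-rule identity
\[
\rI(\mathbf{K}\mathbf{Z} : \mathbf{M} \mid \mathbf{X} R)
\;=\;
\sum_{j=1}^n \rI(K^j Z^j : \mathbf{M} \mid X^j R_j)
\;=\;
\sum_{j=1}^n \ic^{\sB}_{\mu_0}(\Pi_j)
\]
hold, where $R_j = (R,(X^i)_{i\neq j},(K^i,Z^i)_{i<j})$: conditioning $\Pi_j$'s Bob-cost on exactly this set matches the $j$-th chain-rule term. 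If you instead made $(K^i,Z^i)_{i>j}$ public as you suggest, the conditioning would not match the chain-rule decomposition and the telescoping sum would break. Only after establishing the sum is at most $sT$ (since $\mathbf{M}$ has total length $sT$) can you average and select the best $j_0$; ``the'' protocol $\Pi$ in the theorem is $\Pi_{j_0}$ for that particular $j_0$, not a single protocol that works for all slot choices. You should spell out this calculation, since it is not a ``straightforward extension'' but the main content of the theorem.
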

\begin{proof}
For any string $z = z_1 \dotsb z_n \in\set{a,b}^n$, let $\overline{z}$ 
denote the matching string~$\overline{z_n} \, \overline{z_{n-1}} \,
\dotsb \overline{z_1}$
corresponding to $z$. Let~$z[i,j]$ denote the substring~$z_i z_{i+1}
\dotsb z_j$ if~$1 \leq i \leq j \leq n$, and the empty string~$\epsilon$
otherwise. We abbreviate~$z[i,i]$ as~$z[i]$ if~$1 \le i \le n$.

We focus on a subset of instances for $\dyck(2)$ defined as follows.
Let~$n$ be a positive integer. Consider strings of the form
\begin{eqnarray}
\label{eqn-instance}
w &=&   x^1 \, \overline{y^1} \, \overline{z^1} \, z^1 \, y^1 \ 
        x^2 \, \overline{y^2} \, \overline{z^2} \, z^2 \, y^2 \ 
        \dotsb \ x^n \, \overline{y^n} \, \overline{z^n} \, z^n \, y^n \ 
        \overline{x^n}  \ \dotsb\ \overline{x^{2}} \ \overline{x^1}
        \enspace,
\end{eqnarray}
where for every $i$, $x^i\in \{ 0,1\}^n$, $y^i={x^i[n-k^i+2,n]}$ for some
$k^i\in\{1,2,\ldots,n\}$, and $z^i\in \{ a,b\}$. The string $w$ is in
$\dyck(2)$ if and only if, for every $i$, $z^i=x^i[n-k^i+1]$. Note that
these instances have length in the interval~$[2n(n+1), 4n^2]$.
Figure~\ref{fig-instance} depicts an instance of this form.
\begin{figure}[ht]
\centerline{\includegraphics[width=300pt]{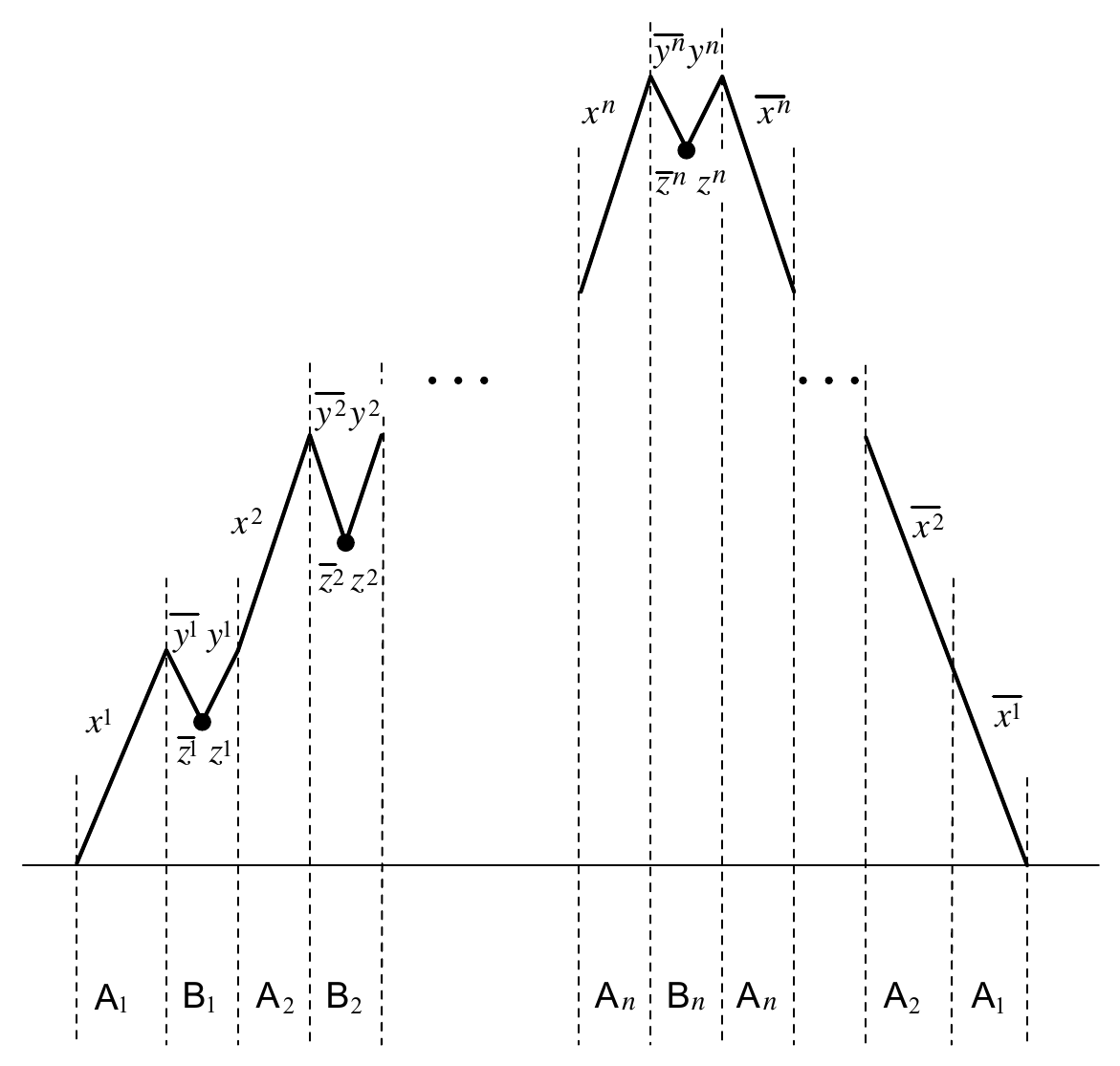}}
\caption{ \label{fig-instance}
An instance of the form described in Eq.~(\ref{eqn-instance}). A line
segment with positive slope denotes a string over~$\set{a,b}$, and a
segment with negative slope denotes a string over~$\set{\overline{a},
\overline{b}}$. A solid dot depicts a pair of the form~$\overline{z} z$
for some~$z \in \set{a,b}$. The entire string is distributed amongst~$2n$
players~$\sA_1, \sB_1, \sA_2, \sB_2,  \dotsc, \sA_n, \sB_n$ in
a communication protocol for~$\Asc(n)$ as shown.  }
\end{figure}

Intuitively, recognizing strings of the form~$w$ is difficult in
one pass with space~$\order(n)$.
After reading $x^i$, the streaming algorithm does not have enough space to 
store this string so as to be able to check the bit at unknown 
index~$(n-k_i+1)$. Moreover, after reading $\overline{y_n}$ it does not 
have enough space to store information about all indices~$k^1,k^2,\dotsc,
k^n$. When it reads $\overline{x^n} \dotsb \overline{x^2} \, \overline{x^1}$
it therefore misses out on its second chance to check whether~$z^i 
=x^i[n-k^i+1]$ for every $i$. When the algorithm is allowed a larger
number of passes~$T$ in the same direction, it may adopt a more
sophisticated strategy. Nevertheless, the same intuition carries over 
with a tighter bound of~$\order(n/T)$ on the space.

We observe that a space~$s$ streaming algorithm gives rise to a multiparty
communication protocol for the problem $\Asc(n)$, which is the logical OR
of~$n$ independent instances of the \AIndex\ function~$f_n$.
In more detail, in the problem~$\Asc(n)$ there are~$2n$ players
$\sA_1,\sA_2,\dotsc,\sA_n$ and
$\sB_1,\sB_2,\ldots,\sB_n$. Player~$\sA_i$ is given~$x^i\in\{0,1\}^n$,
player~$\sB_i$ is given $k^i\in [n]$, a bit~$z^i$, and the
prefix~$x^i[1, k^i-1]$ of $x^i$. Let $\mathbf{x}=(x^1,x^2,\ldots,x^n)$,
$\mathbf{k}=(k^1,k^2,\ldots,k^n)$, and $\mathbf{z}=(z^1,z^2,\ldots,z^n)$.

The goal of the communication protocol is to compute
\[
F_n(\mathbf{x},\mathbf{k},\mathbf{z})
    \quad = \quad \bigvee_{i=1}^n f_n(x^i,k^i,z^i)
    \quad = \quad \bigvee_{i=1}^n (x^i[k^i] \xor z^i) \enspace,
\]
which is $0$ if $x^i[k^i]=z^i$ for all $i$, and $1$ otherwise.
The communication between the~$2n$ parties is required to be~$T$ sequential 
iterations of communication in the following order, for some~$T \ge 1$:
\begin{equation}
\label{eqn-commn}
\sA_1 \rightarrow \sB_1 \rightarrow
\sA_2 \rightarrow \sB_2 \rightarrow
\dotsb
\sA_n \rightarrow \sB_n \rightarrow
\sA_n \rightarrow \sA_{n-1} \rightarrow
\dotsb \rightarrow \sA_2 \rightarrow \sA_1 \enspace.
\end{equation}
In other words, for~$t = 1, 2, \dotsc, T$,
\begin{itemize}
\item[\ --] for $i$ from $1$ to $n-1$, player~$\sA_i$ sends message
  $M_{\sA_i, t}$ to $\sB_i$, then $\sB_i$ sends message~$M_{\sB_i,t}$ to
$\sA_{i+1}$,
\item[\ --] $\sA_n$ sends message $M_{\sA_n, t}$ to $B_n$,
\item[\ --] $B_n$ sends message $M_{B_n,t}$ to $\sA_{n}$,
\item[\ --] for $i$ from $n$ down to 2, $\sA_i$ sends message
  $M'_{\sA_i,t}$ to $\sA_{i-1}$.
\end{itemize}
At the end of the~$T$ iterations, $\sA_1$ computes the output.

There is a one-to-one correspondence between inputs to $\dyck(2)$ of the
form in  Eq.~(\ref{eqn-instance}) and the inputs to~$\Asc(n)$.
This arises from a partition of the word among~$2n$ players as depicted in 
Figure~\ref{fig-instance}. For ease of notation, 
the strings~$x^i$ in~$\Asc(n)$ are taken to be the ones in $\dyck(2)$
with the bits \emph{in reverse order\/}. This switches the suffixes~$y^i$ 
with prefixes of the same length. 

The following is immediate.
\begin{lemma}
A unidirectional $T$-pass streaming algorithm for~$\dyck(2)$ with 
space~$s$ implies a communication protocol for~$\Asc(n)$ with~$T$
iterations of communication as above, in which every message is of
length~$s$. Moreover, on any input, the probability of error of the 
protocol is the same as that of the algorithm.
\end{lemma}
\begin{proof}
In each of the~$T$ iterations, a player simulates the streaming 
algorithm on his/her part of the input, and sends the length~$s$
workspace to the next player in the sequence. The final
player $\sA_1$ gives the output of the algorithm as that of the
protocol.
\end{proof}

We prove a {\em direct sum\/} result that captures the relationship of 
$\Asc(n)$ to solving~$n$ instances of the more ``primitive''
problem \AIndex. 
\suppress{
Recall that in the $\AIndex$ problem~$f_n$, Alice
has an $n$-bit string $x\in\{0,1\}^n$, and Bob has an integer
$k\in [n]$, a bit $c$ and the prefix $x[1, k-1]$ of $x$.  The goal is
to compute the Boolean function~$f_n(x,k,c)= (x[k] \xor c)$ which is~$0$
if $x[k]=c$, and $1$ otherwise. 
}
The direct sum result is proven using the superadditivity of mutual
information for inputs~$(x^i,k^i,z^i)$ picked independently from the
uniform distribution~$\mu_0$ over~$f_n^{-1}(0)$. The use of this 
``easy'' distribution collapses the function~$\Asc(n)$ to an instance 
of \AIndex\ in any chosen coordinate. The direct sum result allows us to
choose a coordinate with small information cost, which proves the
theorem.

Consider an instance~$(\mathbf{X}, \mathbf{K}, \mathbf{Z})$ of \Asc$(n)$
distributed according to~$\mu_0^n$
over~$(\{0,1\}^n\times[n]\times\{0,1\})^n$, where
$\mathbf{X}=(X^1,X^2,\ldots,X^n)$, $\mathbf{K}=(K^1,K^2,\ldots,K^n)$
and $\mathbf{Z}=(Z^1,Z^2,\ldots,Z^n)$.

Let $\tPi$ be a public-coin randomized protocol for $\Asc(n)$ derived
from a unidirectional~$T$-pass streaming algorithm for $\dyck(2)$.
Assume it has worst-case error~$\delta$, and that each message 
is of length at most~$s$. For each~$j \in [n]$, we construct a
protocol~$\Pi_{j}$ as follows for the \AIndex\ function~$f_n$.
Let~$(x,k,c)$ be the input for~\AIndex.
\begin{enumerate}
\item Alice sets $\sA_j$'s input~$x^j$ to her input~$x$.
\item Bob sets $\sB_j$'s input~$(k^j,x^j[1,k^j-1],z^j)$ to his
  input~$(k,x[1,k-1],c)$.
\item Alice and Bob generate, using public coins, $X^i$ uniformly at
random from~$\set{0,1}^n$, independently for all~$i > j$, and~$(X^i,K^i,Z^i)$ distributed according to~$\mu_0$, independently for
all~$i < j$.
\item Bob generates~$K^i$ uniformly and independently for~$i > j$, 
  using private coins.  Then Bob sets $Z^i=X^i[k^i]$ for $i>j$, so
  that $(X^i,K^i,Z^i)$ are distributed according to $\mu_0$, independently
for all $i > j$.
\item Alice and Bob simulate the protocol~$\tPi$ by executing the roles
of players~$(\sA_i,\sB_i)_{i = 1}^n$ as follows. In the~$t$th
iteration of communication in the order described in
Eq.~(\ref{eqn-commn}),
\begin{enumerate}
\item Alice runs $\tPi$ until she generates the message~$M_{\sA_j,t}$ from
  player~$\sA_j$. She sends this message to Bob.
\item Bob continues running $\tPi$ until he generates the message
 $M_{\sB_n,t}$ from player~$\sB_n$.  He sends this message to Alice.
\item Alice completes the rest of the~$t$th iteration of $\tPi$ until she
generates the message~$M'_{\sA_2,t}$ from player~$\sA_2$, and moves to
the next iteration of~$\tPi$ (if any).
\end{enumerate}
At the end of the~$T$th iteration, Alice completes the rest of the
protocol~$\tPi$ and produces as output for~$\Pi_j$, the output of 
player $\sA_1$ in~$\tPi$.
\end{enumerate}
By definition of the distribution~$\mu_0$, we have~$f_n(X^i,K^i,Z^i) =
0$ for all~$i \neq j$.  So $F_n(\mathbf{X},\mathbf{K},\mathbf{Z}) =
f_n(x,k,c)$, and each protocol $\Pi_j$ computes the function~$f_n$, i.e.,
solves \AIndex, with worst-case error at most~$\delta$.

Note that in the simulation of~$\tPi$ by Alice and Bob above, the random
variables~$(X^i,K^i,Z^i)$ for~$i < j$ are used only by Alice, and could 
have been generated by Alice using private coins. Making these random
variables public does not affect the correctness of~$\Pi_j$, but turns
out to be convenient in deriving the direct sum result.

Let~$R$ denote the public coins used in the protocol~$\tPi$.
Let~$\mathbf{M}$ denote the sequence of~$T$ random
variables~$M_{\sB_n,1} M_{\sB_n,2} \dotsb M_{\sB_n,T}$, viz., the
messages sent by~$\sB_n$ over all the iterations.  By the Chain Rule 
(Proposition~\ref{fact-chain}),
\begin{eqnarray*}
\rI(\mathbf{K}\mathbf{Z}: \mathbf{M} \;|\;  \mathbf{X} R)
    & = & \sum_{j = 1}^n \rI({K}^{j} Z^j : \mathbf{M} \;|\; 
          \mathbf{X} R \, {K}^{1}Z^1 \dotsb {K}^{j-1} Z^{j-1}) \enspace.
\end{eqnarray*}

Let $R_j=(R,(X^i)_{j\neq i},(K^i,Z^i)_{i < j})$. These are all the
public random coins used in the protocol~$\Pi_j$, and any further
random coins are used only by Bob privately to generate~$(K^i,Z^i)_{i > j}$.
In particular, Alice does not use any private coins and her messages are 
(deterministic) functions of~$X^j R_j$ and the messages received from
Bob. Thus, for all~$j$
\begin{eqnarray*}
\ic^{\sB}_{\mu_0}(\Pi_{j}) 
    & = & \rI({K}^{j} Z^j : \mathbf{M} \;|\;  X^j R_j) \\
    & = & \rI({K}^{j} Z^j : \mathbf{M} \;|\;  \mathbf{X} R \, {K}^{1}Z^1
              \dotsb {K}^{j-1} Z^{j-1}) \enspace,
\end{eqnarray*}
and we have the direct sum result
\begin{eqnarray*}
\sum_{j = 1}^n \ic^{\sB}_{\mu_0}(\Pi_{j})
    & = & \rI(\mathbf{K}\mathbf{Z} : \mathbf{M} \;|\;  \mathbf{X} R)
          \enspace.
\end{eqnarray*}
Furthermore, $\mathbf{M}$ has length at most~$sT$, so that
\begin{eqnarray*}
\sum_{j = 1}^n \ic^{\sB}_{\mu_0}(\Pi_{j})
    & \le & sT \enspace,
\end{eqnarray*}
and there is a~$j_0 \in [n]$ such that~$\ic^{\sB}_{\mu_0}(\Pi_{j_0})
\le sT/n$. We also have, by the Chain Rule (Proposition~\ref{fact-chain}),
\begin{eqnarray}
\nonumber
\ic^{\sA}_{\mu_0}(\Pi_{j_0})
    & = & \rI( X^{j_0} : M_{\sA_{j_0},1} M_{\sA_{j_0},2} \dotsb
          M_{\sA_{j_0},T} \, \mathbf{M} \;|\;  
          K^{j_0} Z^{j_0} \, R_{j_0}) \\ 
\nonumber
    & = & \sum_{t = 1}^T \left[ \rI( X^{j_0} : M_{\sA_{j_0},t} \;|\;
          K^{j_0} Z^{j_0} \, R_{j_0} M_{\sA_{j_0},1} M_{\sB_n,1}
          \dotsb M_{\sA_{j_0},t-1} M_{\sB_n,t-1} ) \right. \\
\nonumber
    &   & \left. \mbox{} + \rI( X^{j_0} : M_{\sB_n,t} \;|\; 
          K^{j_0} Z^{j_0} \, R_{j_0} M_{\sA_{j_0},1} M_{\sB_n,1}
          \dotsb M_{\sA_{j_0},t-1} M_{\sB_n,t-1} M_{\sA_{j_0},t}) \right] \\
\label{eqn-ica}
    & = & \sum_{t = 1}^T \rI( X^{j_0} : M_{\sA_{j_0},t} \;|\;
          K^{j_0} Z^{j_0} \, R_{j_0} M_{\sA_{j_0},1} M_{\sB_n,1}
          \dotsb M_{\sA_{j_0},t-1} M_{\sB_n,t-1} ) \enspace,
\end{eqnarray}
since Bob's~$t$th message is independent of Alice's input, conditioned on
his input, the public randomness, and the transcript until Alice's~$t$th
message. Since the length of each message~$M_{\sA_{j_0},t}$ is bounded
by~$s$, Eq~(\ref{eqn-ica}) implies
\begin{eqnarray*}
\ic^{\sA}_{\mu_0}(\Pi_{j_0}) & \le & sT \enspace.
\end{eqnarray*}
The protocol~$\Pi_{j_0}$ is the protocol claimed by the theorem.
\end{proof}

The information cost trade-off in Theorem~\ref{thm-main} implies that
any streaming algorithm that makes a ``small'' number of passes over the
input requires a ``large'' amount of space.
\begin{corollary}
\label{thm-dyck}
Any randomized unidirectional $T$-pass streaming algorithm 
for $\dyck(2)$ that has worst-case two-sided error~$\delta < 1/4$
uses space at least
\[
\frac{\floor{\sqrt{N}/2}}{T} \times \frac{1}{3 + 2 \sqrt{2} }
\left[ 
\frac{1-4\delta}{4\sqrt{\ln 2}}
                 - \left( \frac{\rH(2\delta)}{\floor{\sqrt{N}/2}}
                   \right)^{1/2}
\right]^2
\]
on instances of length~$N$.
\end{corollary}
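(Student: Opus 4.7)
The plan is to chain the reduction of Theorem~\ref{thm-mmn} with the information cost trade-off of Theorem~\ref{thm-main}, and then solve for the space parameter~$s$.

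First I would set~$n = \floor{\sqrt{N}/2}$ so that the hard instances for $\dyck(2)$ described in Eq.~(\ref{eqn-instance}), which have length at most~$4n^2$, fit within the length-$N$ input the streaming algorithm is designed to handle (one can pad with matched parentheses if needed). Suppose, towards the bound, that \textbf{A} is a unidirectional~$T$-pass randomized streaming algorithm with space~$s$ and worst-case error~$\delta < 1/4$. Applying Theorem~\ref{thm-mmn} to \textbf{A} at this value of~$n$ produces a two-party communication protocol~$\Pi$ for the \AIndex\ function~$f_n$ with error at most~$\delta$ on the uniform distribution~$\mu$, and with information costs bounded by $\ic^{\sA}_{\mu_0}(\Pi) \le sT$ and $\ic^{\sB}_{\mu_0}(\Pi) \le sT/n$.

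Next I would substitute these two bounds into Theorem~\ref{thm-main}, which asserts
\[
\left[\frac{\ic^{\sA}_{\mu_0}(\Pi)}{n}\right]^{1/2} + \left[2 \ic^{\sB}_{\mu_0}(\Pi)\right]^{1/2}
\quad \geq \quad \frac{1-4\delta}{4\sqrt{\ln 2}} - \left[\frac{\rH(2\delta)}{n}\right]^{1/2} \enspace .
\]
Both terms on the left become at most $\sqrt{sT/n}$ and $\sqrt{2sT/n}$ respectively, so $(1+\sqrt{2})\sqrt{sT/n}$ dominates the left-hand side. Squaring and using $(1+\sqrt{2})^2 = 3+2\sqrt{2}$, I would conclude
\[
s \quad \geq \quad \frac{n}{T} \cdot \frac{1}{3+2\sqrt{2}} \left[\frac{1-4\delta}{4\sqrt{\ln 2}} - \left(\frac{\rH(2\delta)}{n}\right)^{1/2}\right]^2 \enspace ,
\]
which is precisely the claimed bound after substituting $n = \floor{\sqrt{N}/2}$.

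The only subtlety is that Theorem~\ref{thm-main} is stated for even~$n$; if $\floor{\sqrt{N}/2}$ happens to be odd, I would invoke the analogous statement for odd~$n$ mentioned in the paragraph preceding Theorem~\ref{thm-main}, or equivalently replace~$n$ by the nearest even integer (which affects the constant only by a factor of~$1 + o(1)$). Beyond this, the argument is a routine substitution; the real content was established in Theorems~\ref{thm-main} and~\ref{thm-mmn}, and no further communication-complexity machinery is needed here.
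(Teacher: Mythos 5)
Your proof is correct and follows exactly the approach the paper intends: chain Theorem~\ref{thm-mmn} with Theorem~\ref{thm-main}, set~$n = \floor{\sqrt{N}/2}$ so that~$4n^2 \le N$, bound the left side of the trade-off by~$(1+\sqrt{2})\sqrt{sT/n}$, and square (noting~$(1+\sqrt{2})^2 = 3+2\sqrt{2}$). Your remark about the odd-$n$ case is also exactly how the paper handles it, via the sentence immediately preceding Theorem~\ref{thm-main}.
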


\section{Quantum information cost of Augmented Index}
\label{sec-quantum}

We now turn to quantum communication. We present the necessary background
on quantum information theory in Section~\ref{sec-qinfo-theory}, and discuss
quantum protocols and information cost in Section~\ref{sec-qcommn}.
In Section~\ref{sec-qlb},
we show how the notion of average encoding may be applied also to quantum 
protocols for \AIndex. The analysis of quantum protocols for \AIndex\
involves a number of additional additional subtleties, which are also 
described along the way.

\subsection{Quantum information theory basics} 
\label{sec-qinfo-theory}

We continue the use of capital letters to denote random variables. We
see these as special cases of quantum states, which are trace one 
positive semi-definite matrices. Indeed, random variables may be viewed as 
quantum states that are diagonal in a canonical basis. Quantum states 
are also denoted by capital letters~$P,Q$, etc.

The trace distance~$\trnorm{A-B}$ between two quantum states~$A,B$
over the same Hilbert space is the metric induced by the trace
norm~$\trnorm{M} = \trace \sqrt{ M^\adjoint M }$. The fidelity between the
two states is defined as $\rF(A,B) =  \trnorm{ \sqrt{A} \sqrt{B} }$. 
The Bures distance~$\fh{A}{B}$ between the states is a metric arising 
from fidelity, and is defined as
\[
\fh{A}{B} \quad = \quad  \left[ 1 -  \rF(A,B)\right]^{1/2} \quad = \quad 
    \left[ 1 - \trnorm{ \sqrt{A} \sqrt{B} } \right]^{1/2} 
    \enspace.
\]
This metric generalizes Hellinger distance to quantum states; when~$A,B$
are random variables, Bures distance coincides with Hellinger distance.
For pure states $\ket{\psi_1}, \ket{\psi_2}$ we use
$\fh{\ket{\psi_1}}{\ket{\psi_2}}$ as shorthand for
$\fh{\ket{\psi_1}\bra{\psi_1}}{\ket{\psi_2}\bra{\psi_2}} $.  Bures
distance is related to trace distance in the following manner (see,
e.g., Lemma~II.6 in Ref.~\cite{KlauckNTZ07}):
\begin{proposition}
\label{lem-trbures} 
Let $P,Q$ be quantum states over the same Hilbert space. Then
\[
\fh{P}{Q}^2 
    \quad \leq \quad \frac{1}{2} \trnorm{P-Q} 
    \quad \leq \quad \sqrt{2} \; \fh{P}{Q} \enspace.
\]
\end{proposition}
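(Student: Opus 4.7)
The proposition is the quantum Fuchs--van de Graaf inequality in disguise, since by definition $\fh{P}{Q}^2 = 1 - \rF(P,Q)$. My plan is to establish the equivalent statement
\[
1 - \rF(P,Q) \quad \leq \quad \tfrac{1}{2}\trnorm{P-Q} \quad \leq \quad \sqrt{1-\rF(P,Q)^2}
\]
and then recover the stated form by the elementary observation $\sqrt{1-\rF^2} = \sqrt{(1-\rF)(1+\rF)} \leq \sqrt{2(1-\rF)} = \sqrt{2}\,\fh{P}{Q}$, using $\rF \leq 1$.

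For the upper bound on $\tfrac{1}{2}\trnorm{P-Q}$, I would invoke Uhlmann's theorem, which guarantees that there exist purifications $\ket{\psi}, \ket{\phi}$ of $P, Q$ in a common enlarged Hilbert space satisfying $|\braket{\psi}{\phi}| = \rF(P,Q)$. For any two pure states, a direct calculation in the (at most) two-dimensional subspace they span yields $\tfrac{1}{2}\trnorm{\ketbra{\psi}{\psi}-\ketbra{\phi}{\phi}} = \sqrt{1-|\braket{\psi}{\phi}|^2}$. The desired bound on $\tfrac{1}{2}\trnorm{P-Q}$ then follows from the monotonicity of trace distance under the partial trace that maps the purifications to $P$ and $Q$.

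For the lower bound $1 - \rF(P,Q) \leq \tfrac{1}{2}\trnorm{P-Q}$, my plan is to reduce to the classical statement already proven as Proposition~\ref{fact:ellhell}. I would appeal to Fuchs' variational expression $\rF(P,Q) = \min_{\{E_i\}} \sum_i \sqrt{\trace(E_i P)\,\trace(E_i Q)}$, where the minimum runs over POVMs. Fixing an optimal POVM and denoting the outcome distributions by $\hat P, \hat Q$, this gives $1 - \rF(P,Q) = 1 - \sum_i \sqrt{\hat P(i)\hat Q(i)} = \fh{\hat P}{\hat Q}^2$. Proposition~\ref{fact:ellhell} then yields $\fh{\hat P}{\hat Q}^2 \leq \tfrac{1}{2}\norm{\hat P - \hat Q}$, and monotonicity of $\ell_1$ distance under the measurement channel gives $\norm{\hat P - \hat Q} \leq \trnorm{P-Q}$, completing the chain.

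The main obstacle is not the final assembly, which is largely symbol pushing, but the reliance on two nontrivial quantum-information-theoretic tools, namely Uhlmann's theorem and Fuchs' dual formula for fidelity. Rather than reprove these, I would cite them from standard references and present the argument as an application. This mirrors the classical proof of Proposition~\ref{fact:ellhell} in structure, which is exactly why the statement is a clean quantum analogue.
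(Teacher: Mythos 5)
The paper does not prove this proposition; it simply cites Lemma~II.6 of Ref.~\cite{KlauckNTZ07}, so there is no in-paper argument to compare against. Your proposed derivation is correct and self-contained. Both halves check out: for the right inequality, Uhlmann's theorem supplies purifications achieving $|\braket{\psi}{\phi}| = \rF(P,Q)$, the pure-state trace distance is $\sqrt{1-|\braket{\psi}{\phi}|^2}$, and monotonicity of trace distance under the partial trace gives $\frac{1}{2}\trnorm{P-Q} \le \sqrt{1-\rF(P,Q)^2}$, from which $\sqrt{1-\rF^2} \le \sqrt{2(1-\rF)} = \sqrt{2}\,\fh{P}{Q}$ follows. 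For the left inequality, the Fuchs measurement-optimality characterization of fidelity turns $1-\rF(P,Q)$ into the classical squared Hellinger distance $\fh{\hat P}{\hat Q}^2$ of the outcome distributions under an optimal POVM, Proposition~\ref{fact:ellhell} bounds this by $\frac{1}{2}\norm{\hat P-\hat Q}$, and monotonicity of trace distance under the measurement channel gives $\norm{\hat P-\hat Q} \le \trnorm{P-Q}$. This is a standard and clean route to the quantum Fuchs--van de Graaf bounds, and reducing the lower bound to the already-established classical Proposition~\ref{fact:ellhell} keeps the argument well aligned with the structure of the paper. The reliance on Uhlmann's theorem and Fuchs' variational formula is appropriate to cite rather than reprove; both are textbook results (e.g., Nielsen and Chuang) and neither requires more than finite dimension, which is the setting here.
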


In the following, let~$(p_x),(q_y)$ be distributions over the finite
sample spaces~$\cS, \cS'$, respectively.

The Bures distance satisfies the following property.
\begin{proposition} 
\label{lem-convexity} 
Let $P_x,Q_x$ be quantum states over the same finite Hilbert space for
each~$x \in \cS$. Let~$P = \sum_{x \in \cS} p_x \density{x} \tensor
P_x$, and~$Q = \sum_{x \in \cS} p_x \density{x} \tensor Q_x$.
Then
\[
\fh{P}{Q}^2 \quad = \quad \sum_{x \in \cS} p_x \, \fh{P_x}{Q_x}^2
\enspace.
\]
\end{proposition}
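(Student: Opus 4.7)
The plan is to exploit the block-diagonal structure of $P$ and $Q$ with respect to the classical register indexing $\cS$, reducing the fidelity of the joint state to a convex combination of fidelities of the blocks. Concretely, I would start from the definition $\fh{P}{Q}^2 = 1 - \rF(P,Q) = 1 - \trnorm{\sqrt{P}\sqrt{Q}}$, and observe that because $P = \sum_{x} p_x \density{x} \tensor P_x$ is block-diagonal in the first register, its positive square root decomposes as $\sqrt{P} = \sum_{x} \sqrt{p_x}\, \density{x} \tensor \sqrt{P_x}$, and similarly for $Q$.

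Next I would multiply these out. Since the operators $\density{x}$ are mutually orthogonal projectors, the cross terms vanish and
\[
\sqrt{P}\sqrt{Q} \quad = \quad \sum_{x \in \cS} p_x \, \density{x} \tensor \sqrt{P_x}\sqrt{Q_x} \enspace.
\]
This operator is itself block-diagonal in the classical register, so its trace norm is additive over blocks:
\[
\trnorm{\sqrt{P}\sqrt{Q}} \quad = \quad \sum_{x \in \cS} p_x \, \trnorm{\sqrt{P_x}\sqrt{Q_x}} \quad = \quad \sum_{x \in \cS} p_x \, \rF(P_x, Q_x) \enspace.
\]

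Finally, using $\sum_x p_x = 1$, I would convert back from fidelity to squared Bures distance:
\[
\fh{P}{Q}^2 \quad = \quad 1 - \sum_{x \in \cS} p_x \, \rF(P_x, Q_x) \quad = \quad \sum_{x \in \cS} p_x \bigl( 1 - \rF(P_x, Q_x) \bigr) \quad = \quad \sum_{x \in \cS} p_x \, \fh{P_x}{Q_x}^2 \enspace,
\]
which is the claimed equality. There is no real obstacle here; the only step worth checking carefully is the additivity of trace norm on block-diagonal operators, which follows immediately from the fact that each block $\sqrt{P_x}\sqrt{Q_x}$ acts on an orthogonal subspace, so the singular values of the whole operator are the union (with multiplicities, weighted by $p_x$) of the singular values of the individual blocks. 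Note that the classical-register structure is essential: the corresponding statement for arbitrary convex combinations would only hold as an inequality (joint concavity of fidelity), which would recover a quantum analogue of Proposition~\ref{fact-hconvex} rather than the equality above.
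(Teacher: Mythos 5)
Your proof is correct and is exactly the direct "verify from the definition of Bures distance" route that the paper itself points to (it offers no written proof, only remarking that the claim follows readily from the definition or from strong concavity of fidelity). The computation of $\sqrt{P}\sqrt{Q}$ via block-diagonality, the additivity of the trace norm over orthogonal blocks, and the conversion back using $\sum_x p_x = 1$ are all correct and complete.
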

This may be verified readily by the definition of the Bures distance,
but may also be derived as an immediate consequence of the strong 
concavity property of fidelity~\cite[Theorem~9.7, p.~414]{NielsenC00}.

The Local Transition Theorem due to Uhlmann~\cite{NielsenC00} helps us
find purifications of quantum states that achieve the Bures distance
between them.
\begin{proposition}[Local Transition Theorem]
\label{fact:localtrans}
Let $\ket{\psi_1}$ and $\ket{\psi_2}$ be two pure states in a tensor
product~$\mathcal{H}_1 \otimes \mathcal{H}_2$ of Hilbert spaces. Then
there exists a unitary operator $U$ on $\mathcal{H}_1$ such that
\[
\fh{(U \otimes \identity_{\mathcal{H}_2}) \, \ket{\psi_1} } 
{\ket{\psi_2} } \quad = \quad 
    \fh{ \trace_{\mathcal{H}_1} \ket{\psi_1} \bra{\psi_1} }
       {\trace_{\mathcal{H}_1} \ket{\psi_2} \bra{\psi_2}} \enspace .
\]
\end{proposition}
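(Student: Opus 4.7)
My plan is to prove this by combining Schmidt decomposition with the polar decomposition of an associated operator on $\mathcal{H}_1$ --- this is the standard route to Uhlmann's theorem. I would start by writing Schmidt decompositions $\ket{\psi_i} = \sum_j \sqrt{\lambda_{i,j}}\, \ket{a_{i,j}}\,\ket{b_{i,j}}$ for $i = 1,2$, where the $\ket{a_{i,j}}$ are orthonormal in $\mathcal{H}_1$ and the $\ket{b_{i,j}}$ are orthonormal in $\mathcal{H}_2$. Extending the $\ket{a_{i,j}}$ to orthonormal bases of $\mathcal{H}_1$ (padding with zero Schmidt coefficients where needed) lets me treat both pure states uniformly without loss of generality. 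The reduced states on $\mathcal{H}_2$ are then $\rho_i \eqdef \trace_{\mathcal{H}_1} \density{\psi_i} = \sum_j \lambda_{i,j}\, \density{b_{i,j}}$, and $\sqrt{\rho_i}$ is diagonal in the $\ket{b_{i,j}}$ basis with eigenvalues $\sqrt{\lambda_{i,j}}$.

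The next step is to express, for an arbitrary unitary $U$ on $\mathcal{H}_1$, the overlap $\bra{\psi_2}(U \tensor \identity_{\mathcal{H}_2}) \ket{\psi_1}$ in the form $\trace(U A)$ for a specific operator $A$ on $\mathcal{H}_1$ built from the Schmidt data,
\[
A \quad = \quad \sum_{j,k} \sqrt{\lambda_{1,j}\,\lambda_{2,k}}\,
   \braket{b_{2,k}}{b_{1,j}}\, \ketbra{a_{1,j}}{a_{2,k}} \enspace.
\]
Using the matrix-norm identity $\max_{U \text{ unitary}} |\trace(U A)| = \trnorm{A}$, which is saturated by taking $U$ from the polar decomposition of $A$, I obtain an explicit optimal choice of $U$. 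A short calculation comparing $A^\adjoint A$ with $\sqrt{\rho_2}\sqrt{\rho_1}\sqrt{\rho_1}\sqrt{\rho_2}$ through the unitary change of basis between $\set{\ket{b_{1,j}}}$ and $\set{\ket{b_{2,k}}}$ in $\mathcal{H}_2$ shows that the singular values of $A$ coincide with those of $\sqrt{\rho_1}\sqrt{\rho_2}$, so $\trnorm{A} = \trnorm{\sqrt{\rho_1}\sqrt{\rho_2}} = \rF(\rho_1,\rho_2)$.

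Finally, for two pure states one has $\fh{\ket{\phi_1}}{\ket{\phi_2}}^2 = 1 - \size{\braket{\phi_1}{\phi_2}}$ (a global phase in the inner product can always be absorbed into $U$ without affecting its unitarity). Combined with the optimal choice of $U$ above, this yields
\[
\fh{(U \tensor \identity_{\mathcal{H}_2})\ket{\psi_1}}{\ket{\psi_2}}^2
\quad = \quad 1 - \rF(\rho_1,\rho_2)
\quad = \quad \fh{\rho_1}{\rho_2}^2 \enspace,
\]
which is the claimed equality. The main obstacle in this plan is essentially bookkeeping around Schmidt-rank mismatches and the case where $\dim \mathcal{H}_1$ strictly exceeds the Schmidt ranks of $\ket{\psi_1}$ and $\ket{\psi_2}$; the padding step handles this cleanly, since the extra Schmidt coefficients vanish and $U$ may be extended arbitrarily on the corresponding subspace of $\mathcal{H}_1$ without affecting the trace computation.
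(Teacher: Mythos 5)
Your proof is correct, and it is the standard argument for Uhlmann's theorem (Schmidt decomposition, reduction of the overlap to a trace against a unitary, polar decomposition, and the identity $\max_U |\trace(UA)| = \trnorm{A}$). The paper itself does not prove this proposition---it is cited directly from Nielsen and Chuang---so your derivation reproduces the textbook proof rather than anything specific to the paper.

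One very small point of precision: the passage from $A^\adjoint A$ to $\sqrt{\rho_2}\,\rho_1\sqrt{\rho_2}$ involves, in addition to identifying the orthonormal families $\set{\ket{a_{2,k}}}$ and $\set{\ket{b_{2,k}}}$, a complex conjugation of the matrix entries; this is a transpose (not a unitary conjugation) of a positive semi-definite matrix, which preserves the spectrum, so the conclusion about singular values is still correct. The handling of Schmidt-rank mismatches by padding with zero coefficients and extending $U$ arbitrarily on the orthogonal complement is the right way to deal with degeneracies.
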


We rely on a number of standard results from quantum information
theory in this work.  For a comprehensive introduction to the subject,
we refer the reader to a text such as~\cite{NielsenC00}.

Let~$\rS(P)$ denote the von Neumann entropy of the quantum state~$P$,
and~$\rI(P:Q)$ denote the mutual information between the two parts of
a joint quantum state~$PQ$.

For a joint quantum state~$XQ = \sum_{x \in \cS} p_x \density{x}
\tensor Q_x$ we define the conditional von Neumann entropy as~$\rS(Q
\,|\, X) = \sum_{x \in \cS} p_x \, \rS(Q_x)$. Similarly, for a joint
state~$XPQ = \sum_{x \in \cS} p_x \density{x} \tensor (P Q)_x$,
where~$(PQ)_x$ is a joint state for each~$x \in \cS$, we define the
conditional mutual information as
\[
\rI(P:Q \,|\, X) \quad = \quad 
    \rS(P \,|\, X) + \rS(Q \,|\, X) - \rS(PQ \,|\, X)
\enspace.
\]

The chain rule for mutual information states:
\begin{proposition}[Chain rule] 
\label{fact-chain-quantum}
Let $XYQ = \sum_{x \in \cS, y \in \cS'} p_x q_y \density{xy} \tensor
Q_{xy}$ be a joint quantum state. Then
$$ \rI(XY:Q)  \quad = \quad  \rI(X:Q) + \rI(Y : Q \,|\, X)  \enspace . $$
\end{proposition}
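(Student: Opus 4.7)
The plan is to establish the quantum chain rule as a purely algebraic identity, mirroring the proof of the classical chain rule (Proposition~\ref{fact-chain}). The key observation that makes the argument clean is that $X$ and $Y$ are classical, so the joint states $XQ$, $XYQ$, and $XY$ are block-diagonal with respect to the canonical basis on the classical registers, and the familiar ``entropy equals classical entropy plus expected conditional entropy'' decomposition goes through verbatim with von Neumann entropy in place of Shannon entropy.

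First, I would write out all three quantities from their definitions:
\[
\rI(XY:Q) = \rS(XY) + \rS(Q) - \rS(XYQ), \quad
\rI(X:Q) = \rS(X) + \rS(Q) - \rS(XQ),
\]
and $\rI(Y:Q \,|\, X) = \rS(Y \,|\, X) + \rS(Q \,|\, X) - \rS(YQ \,|\, X)$, with the conditional von Neumann entropies as defined just above the proposition.

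Next, I would establish three decomposition identities that exploit the classical structure of the $X$ (and $XY$) register: $\rS(XQ) = \rS(X) + \rS(Q \,|\, X)$, $\rS(XYQ) = \rS(X) + \rS(YQ \,|\, X)$, and $\rS(XY) = \rS(X) + \rS(Y \,|\, X)$. Each follows from the fact that for a classical-quantum state of the form $\sum_x p_x \density{x} \tensor \sigma_x$, the spectrum is precisely $\{p_x \lambda_{x,i}\}$ where $(\lambda_{x,i})_i$ are the eigenvalues of $\sigma_x$; summing $-p_x \lambda_{x,i} \log (p_x \lambda_{x,i})$ splits into $\rH((p_x)) + \sum_x p_x \rS(\sigma_x)$, which by definition is $\rS(X) + \rS(\cdot \,|\, X)$. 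Applied to $\sigma_x = \sum_y q_y Q_{xy}$, $\sigma_x = \sum_y q_y \density{y} \tensor Q_{xy}$, and $\sigma_x = \sum_y q_y \density{y}$ respectively, these give the three identities.

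Finally, substituting these decompositions into $\rI(X:Q) + \rI(Y:Q \,|\, X)$ and simplifying, the $\rS(X)$ terms cancel, $\rS(Q \,|\, X)$ combines with $\rS(X)$ from the other summand to reproduce $\rS(XQ)$'s role, and after collecting terms one obtains exactly $\rS(XY) + \rS(Q) - \rS(XYQ) = \rI(XY:Q)$. There is no genuine obstacle here; the only point that merits attention is the block-diagonal entropy decomposition, which does not hold for general (non-classical) registers but is valid for the classical $X$ under consideration. Note that independence of $X$ and $Y$ (i.e., the product form $p_x q_y$) is not actually needed for the identity, though it is the setting in which the proposition is stated.
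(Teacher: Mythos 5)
Your proposal is correct and takes essentially the same route as the paper: the paper's proof is a one-line remark that the claim "follows directly from the identity $\rS(XQ) = \rS(X) + \rS(Q\,|\,X)$ for joint states of the form $\sum_x p_x \density{x} \tensor Q_x$," and you have simply unpacked that remark by applying this identity to the three classical-quantum states $XQ$, $XYQ$, and $XY$ and carrying out the cancellation.
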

It follows directly from the identity~$\rS(XQ) = \rS(X) + \rS(Q|X)$ for
joint states~$XQ$ of the form~$XQ = \sum_{x \in \cS} p_x \density{x}
\tensor Q_x$.

The Average Encoding Theorem~\cite{KlauckNTZ07,JainRS03b} also holds
for quantum states. (In fact, it was first formulated in the context of
quantum communication.)
\begin{proposition}[Average encoding theorem]
\label{thm-avg-quantum} 
Let $XQ = \sum_{x \in \cS} p_x \density{x} \tensor Q_x$ be a joint
quantum state. Then,
\[
\expct_{x \leftarrow X} \; \fh{Q_x}{Q}^2 \quad \leq \quad \kappa \;
\rI(X:Q) \enspace,
\]
where~$\kappa$ is the constant~$\frac{\ln 2}{2}$.
\end{proposition}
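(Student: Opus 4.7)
The plan is to derive the theorem from two facts about quantum relative entropy $\rS(\cdot \| \cdot)$. The first is the Holevo-style identity
\[
\rI(X : Q) \quad = \quad \sum_{x \in \cS} p_x \, \rS(Q_x \| Q) \enspace,
\]
and the second is a pointwise bound $\fh{P}{Q}^2 \leq \kappa \, \rS(P \| Q)$ valid for any two states $P, Q$ on the same Hilbert space. Granting both, applying the pointwise bound to $(P, Q) = (Q_x, Q)$, taking expectation over $x \leftarrow X$, and substituting the identity yields the claim in one line.

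The Holevo identity is a direct computation. Since the joint state $XQ$ is block-diagonal in $x$, one has $\rS(XQ) = \rH(p) + \sum_x p_x \rS(Q_x)$, so $\rI(X:Q) = \rS(Q) - \sum_x p_x \rS(Q_x)$. Expanding $\rS(Q_x \| Q) = -\rS(Q_x) - \trace(Q_x \log_2 Q)$, averaging over $x$, and collapsing $\sum_x p_x Q_x = Q$ gives exactly $\rS(Q) - \sum_x p_x \rS(Q_x)$, matching the mutual information.

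The substantive step is the pointwise bound, which I would split into two sub-inequalities. For the first, $\rS(P \| Q) \geq -2 \log_2 \rF(P, Q)$, I would invoke the Fuchs--Caves theorem: there is a POVM whose outcome distributions $p, q$ on $P, Q$ satisfy $\sum_i \sqrt{p_i q_i} = \rF(P, Q)$. Monotonicity of relative entropy under measurement gives $\rS(P \| Q) \geq \rS(p \| q)$, and a Jensen-inequality estimate applied to $\log_2 \sum_i p_i \sqrt{q_i/p_i}$ (using concavity of $\log$) yields $\rS(p \| q) \geq -2 \log_2 \sum_i \sqrt{p_i q_i}$. For the second, applying $-\ln x \geq 1 - x$ at $x = \rF(P, Q) \in (0, 1]$ gives
\[
-2 \log_2 \rF(P, Q) \quad \geq \quad \frac{2}{\ln 2} \, (1 - \rF(P, Q)) \quad = \quad \frac{1}{\kappa} \, \fh{P}{Q}^2 \enspace.
\]
Chaining the two sub-inequalities produces $\fh{P}{Q}^2 \leq \kappa \, \rS(P \| Q)$.

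The main obstacle is the first sub-inequality $\rS(P \| Q) \geq -2 \log_2 \rF(P, Q)$, which is where the quantum-specific content enters, via Fuchs--Caves and the data-processing inequality for quantum relative entropy. It is essential to route through fidelity in this way: a Pinsker-style bound $\trnorm{P-Q}^2 \leq 2 \ln 2 \cdot \rS(P \| Q)$ combined with Proposition~\ref{lem-trbures} only yields $\fh{P}{Q}^4 \leq \kappa \, \rS(P \| Q)$, which is strictly weaker than the theorem.
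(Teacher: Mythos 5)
The paper does not prove this proposition; it cites it to Klauck--Nayak--Ta-Shma--Zuckerman~\cite{KlauckNTZ07} and Jain--Radhakrishnan--Sen~\cite{JainRS03b}. Your argument is correct, and it is in fact essentially the proof given in~\cite{KlauckNTZ07}: the Holevo-style identity $\rI(X:Q) = \sum_x p_x \rS(Q_x \| Q)$ for cq-states is a direct computation (and your derivation of it is right, including the support-containment $\support(Q_x) \subseteq \support(Q)$ that keeps the relative entropies finite), and the pointwise bound $\fh{P}{Q}^2 \leq \kappa\,\rS(P\|Q)$ is precisely the fidelity--relative-entropy inequality established there. Your route to that pointwise bound — Fuchs--Caves to reduce to a classical measurement distribution with matching fidelity, data-processing for relative entropy, Jensen to get $\rS(p\|q) \geq -2\log_2 \rF(p,q)$, and then $-\ln x \geq 1-x$ — is valid at every step; note that Fuchs--Caves (existence of a POVM \emph{achieving} the quantum fidelity) is genuinely needed here, since monotonicity of fidelity under channels runs the wrong way. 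Your closing remark that the trace-norm Pinsker route combined with Proposition~\ref{lem-trbures} only yields a fourth-power bound is also correct and is exactly the reason the literature routes through fidelity rather than trace distance.
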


\subsection{Quantum communication and information cost}
\label{sec-qcommn}

We briefly describe the model of two-party quantum communication,
\emph{{\`a} la\/} Yao~\cite{Yao93}. We only consider protocols with 
classical inputs and outputs. For the basic elements of quantum
computation, we refer the reader to a text such as~\cite{NielsenC00}.

Informally, two ``players'', Alice and Bob, hold some number of qubits.
When the protocol starts, Alice holds a classical input represented by 
a bit string~$x \in \cX$ and similarly Bob holds~$y \in \cY$.
The qubits in the workspace of the two parties are initialized to a 
state~$\ket{\Phi}$ that is independent of the inputs~$x,y$, and may be
entangled across the parties.
The protocol consists of some number~$t \ge 1$ of rounds of message
exchange, in which the two players ``play'' alternately. Any party
may be the first to play. Suppose it is Alice's turn to play. She applies
a unitary operator to her workspace qubits, which depends on her 
input~$x$ and the round. Then, Alice sends some of her workspace
qubits to Bob. In the next round, Bob's local computation thus
involves some qubits previously in Alice's control.
At the end of the~$t$ rounds of message exchange, the player to
receive the last message, say Bob, observes the qubits in his possession
according to a measurement that may depend on his input~$y$. The
measurement outcome is considered to be the output of the protocol.

More formally, a two-party quantum communication protocol~$\Pi$ is
specified as follows. The protocol uses some~$N$ qubits, for some
positive integer~$N$, so that the associated state space
is~$(\complex^2)^{\tensor N}$. We view this space as a tensor product
space~$\cA \tensor \cH_{\sA, i} \tensor \cH_{\sB,i} \tensor \cB$, for
each~$i = 0, 1, \dotsc, t$, with the initial factorization given by~$i =
0$, and the factorization at the end of the~$j$th round given ~$i = j$.
This factorization reflects the ownership of the qubits. The space~$\cA$
contains Alice's input, $\cB$ contains Bob's input, and the
spaces~$\cH_{\sA,i}$ and~$\cH_{\sB,i}$ correspond to Alice's and Bob's 
workspace qubits at the end of round~$i$, respectively.

The qubits in space~$\cA$ are initialized to~$\ket{x}$,
and those in~$\cB$ are initialized to~$\ket{y}$. The qubits in the
space~$\cH_{\sA, 0} \tensor\cH_{\sB,0}$ are initialized
to a possibly entangled state~$\ket{\Phi}$ that is independent of the
inputs.  The initial joint state is thus~$\ket{x} \tensor \ket{\Phi} 
\tensor \ket{y}$.

The protocol specifies the number~$t$ of messages sent, and the player
that sends the first message. Suppose it is Alice's turn to play in 
round~$i$, with~$i \ge 1$. The workspace of the two players just
before the round factors as~$\cH_{\sA,i-1} \tensor \cH_{\sB,i-1}$. Alice
applies a unitary operator~$V_{i,x}$ to the qubits in~$\cH_{\sA,i-1}$.
Note that her operator depends on her input~$x$ and the round.  (Later,
we imagine running the protocol on superpositions of inputs. In this 
case, we think of Alice as applying the unitary~$V_i = \sum_x
\density{x} \tensor V_{i,x}$ to the qubits in the space~$\cA \tensor
\cH_{\sA,i-1}$.) Then, Alice sends some of her qubits, corresponding to
the space~$\cM_i$, to Bob. That is, the space~$\cH_{\sA,i-1}$ factors 
as~$\cH_{\sA,i} \tensor \cM_i$, and~$\cH_{\sB,i} = \cM_i \tensor
\cH_{\sB,i-1}$. 

After the~$t$th message is sent, the recipient, say Bob, observes the 
qubits corresponding to~$\cH_{\sB,t}$ according to a POVM (positive operator valued measurement) that depends
on his input~$y$. The output of the protocol is the measurement outcome,
and we denote the corresponding random variable by~$\Pi(x,y)$.
Figure~\ref{fig-protocol} depicts such a two-party protocol.

\begin{figure}[ht]
\centerline{\includegraphics[width=250pt]{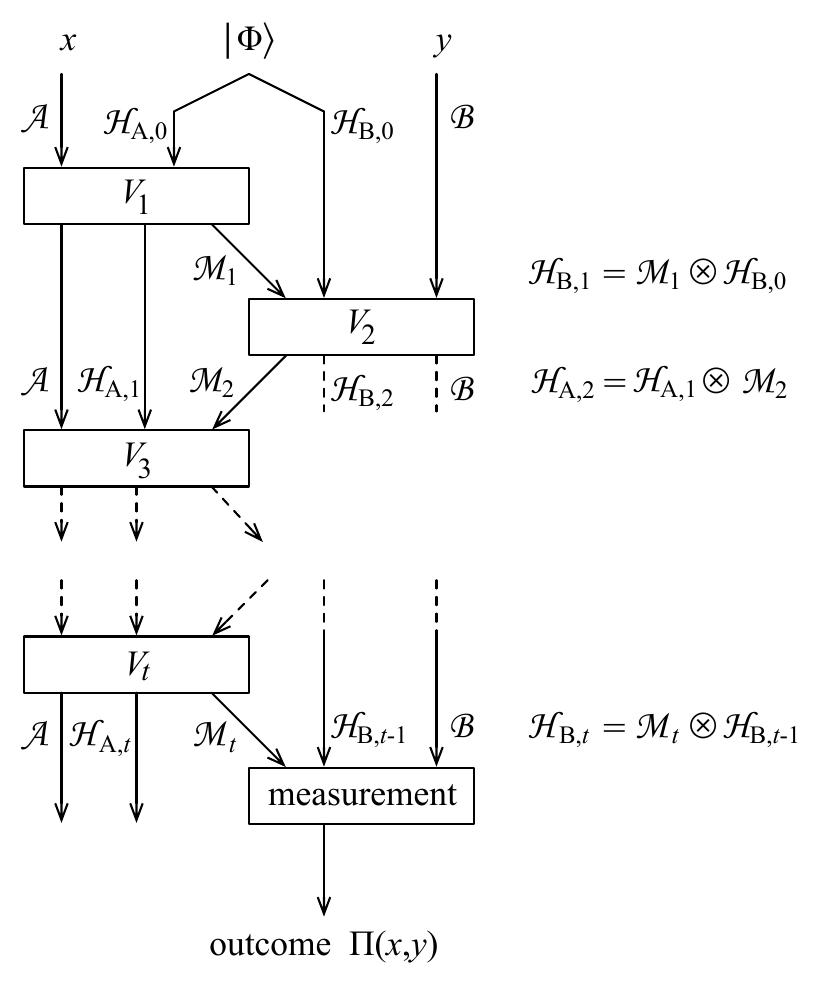}}
\caption{ \label{fig-protocol}
A quantum two-party communication protocol with~$t$ messages,
inputs~$x,y$ and shared initial state~$\ket{\Phi}$.}
\end{figure}

We emphasize that the input qubits in the protocol are \emph{read only\/},
and that there are no intermediate measurements. A more general protocol 
may be transformed into this form by appealing to standard techniques in
quantum computation~\cite{BernsteinV97}.

In this article, we are concerned with protocols designed to compute a
bi-variate Boolean function~$f : \cX \times \cY \rightarrow \set{0,1}$.
As for classical protocols, the probability of correctness 
(or \emph{success\/}) of a protocol on input~$x,y$ is~$\Pr[\Pi(x,y) =
f(x,y)]$. We consider inputs drawn
from a joint distribution~$XY$, in which case the success probability
is~$\Pr[\Pi(X,Y) = f(X,Y)]$. The probability of the complementary event
is called the \emph{error\/} of the protocol on the distribution~$XY$.

As in the classical case,  there is no
canonical measure of quantum information leaked by a protocol, and this 
notion is a topic of active research.  The choice of the measure is 
driven by a motivating application and the ease with which we can analyze
it. We typically strike a balance between these opposing forces.

A significant difference between classical and quantum information
costs arises because the no cloning principle~\cite[p.~532]{NielsenC00}
prevents the two parties from keeping a copy of the messages. A natural
notion of a transcript that encapsulates the history of a quantum
protocol is instead the sequence of the joint states after each
message exchange. Correspondingly, the notion of information cost is
also different from the one in the classical case.

Consider a quantum communication protocol~$\Pi$ with a total 
of~$t$ messages, beginning with Alice and alternating with Bob.
We emphasize that the input qubits in~$\Pi$ are read-only.
The first player is assumed to be Alice solely to eliminate
awkwardness in defining and referring to quantum information cost.
The assumption may be removed without affecting the results in this
article. Alternatively, if Bob starts, we may modify the
protocol so that Alice sends a single qubit in a fixed state,
say~$\ket{0}$, at the beginning. This does not affect the information
cost, but increases the number of messages by one.

Let~$\lambda$ be a probability distribution over~$\cX \times \cY$, 
and let random variables~$XY$ be distributed according to~$\lambda$.
Let~$P_i Q_i$ denote the joint state of Alice and Bob's workspace
\emph{immediately after\/} the~$i$th message is sent, in a
protocol~$\Pi$ when we start with the inputs~$XY$. In analogy with the
classical case, we may define the quantum information cost of~$\Pi$ 
for Alice with respect to~$\lambda$ as
\begin{equation}
\label{eqn-qic1}
\sum_{\textrm{odd } i \in [t]} \rI(X : Q_i \,|\, Y) \enspace,
\end{equation}
and similarly for Bob as 
\begin{equation}
\label{eqn-qic2}
\sum_{\textrm{even } i \in [t]} \rI( Y : P_i \,|\, X) \enspace.
\end{equation}
A similar definition has been considered by Jain, Radhakrishnan, 
and Sen~\cite{JainRS03b}.  
This appears to be a natural definition; it captures the amount of 
information about the other party's input that is not already contained 
in her state. It also allows us to relate quantum streaming algorithms for
$\dyck(2)$ that use small space, to two-party protocols for \AIndex\ with
small quantum information cost. (The reduction described in
Section~\ref{sec-streaming} extends to quantum algorithms with minor
modifications.) However, we are not able to prove an information cost
trade-off for \AIndex\ with this definition.

The tension between applicability and ease of analysis is rather acute
in our case. This leads us to consider the information contained in the
messages when the input qubits are initialized to an appropriate
\emph{superposition\/}. This information is in general more than 
that contained in the messages when we have the corresponding 
\emph{distribution\/} over inputs. The former measure may sometimes
capture the information revealed by a party in a quantum communication 
protocol more accurately (see, e.g., Ref.~\cite{JainRS09}). The 
resulting notion also seems to be necessary for the proof of the 
information cost trade-off we present. 

Defining quantum information cost with superpositions over inputs,
corresponding to arbitrary non-product distributions, comes with its 
own set of complications. A comprehensive discussion of such measures 
is beyond the scope of this article. We focus on distributions~$\lambda$
over the input space~$\cX \times \cY$ with~$\cY = \cY_1 \times \cY_2$, and
the following limited type of dependence. Let~$X,Y_1$ be independent 
random variables taking values in~$\cX, \cY_1$, respectively, 
and~$Y_2 = s(X,Y_1) \in \cY_2$, where~$s$ is some function of the 
first two random variables. Moreover, the function~$s$ is such
that the conditional random variables~$X|(Y_2 = v)$ and~$Y_1|(Y_2 = v)$
are also independent, for any~$v$ with~$\Pr[Y_2 = v] \neq 0$.
Then~$\lambda$ is the distribution of~$X Y_1 Y_2$. In other words, Alice
is given some input~$X$, Bob an independent input~$Y_1$, and also a joint
function~$Y_2 = s(X,Y_1)$ of the two. Moreover, their inputs~$X,Y_1$
remain independent when conditioned on any given value of~$Y_2$.
Such distributions include product distributions as well as
distributions for problems in which the two communicating parties may
share a portion of the input, as in the case of \AIndex. (The
correspondence for \AIndex\ is that~$X$ is uniformly distributed 
over~$\set{0,1}^n$, $Y_1$ is the index~$K$ that is uniformly distributed 
over~$[n]$, and~$Y_2 = s(X,Y_1) = X[1,K]$.)

The final point of difference between the notions of classical and quantum 
information cost we consider comes from the dependence described above in the
distribution~$\lambda$. Recall that under this distribution~$\lambda$,
Bob's input~$Y_1$ is independent of~$X$ and that Bob additionally gets~$Y_2 =
s(X,Y_1)$. In the classical case, Alice may have information about~$Y_2$
due to its dependence on~$X$, but does not have any information about~$Y_1$,
i.e., $\rI(X:Y_1) = 0$.  When the input registers are initialized with a 
superposition corresponding to~$\lambda$, however, Alice may \emph{gain\/}
information about Bob's input~$Y_1$ without any communication between 
the parties: we may have~$\rI(\hX:\hY_1) > 0$,
where~$\hX \hY_1 \hY_2$ are in state~$\sum_{x \in \cX, y \in \cY} 
\sqrt{\lambda(x,y)\,} \ket{x,y}$.

To illustrate this phenomenon, consider the following example. 
Let~$X$ be uniformly distributed over~$\set{0,1}^n$, $Y_1$ be an
index~$K$ that is uniformly distributed over~$[n]$, and~$Y_2 = X_K$,
i.e., the~$K$th bit of~$X$. We have~$\rI(X:Y_1) = 0$.
Let~$\hX \hY_1 \hY_2$ be initialized to the
state
\[
\frac{1}{\sqrt{n 2^n}} \sum_{x \in \set{0,1}^n, k \in [n]} \ket{x,k,x_k}
\enspace.
\]
Suppose we measure the qubits holding~$\hY_1$ in the 
basis~$(\ket{i})_{i \in [n]}$ and recover~$Y_1$. By monotonicity of mutual 
information under quantum 
operations~\cite[Theorem~11.15, p.~522]{NielsenC00}, we have~$\rI(\hX : \hY_1)
\geq \rI(\hX : Y_1)$.  The reduced state of~$\hX Y_1$ is
\[
\frac{1}{n} \sum_{k \in [n]} 
\density{u}^{\tensor (k-1)} \tensor \frac{\identity}{2} \tensor 
\density{u}^{\tensor (n-k)} \tensor \density{k}
\enspace,
\]
where~$\ket{u} = (\ket{0} + \ket{1})/\sqrt{2}$.  By conjugating~$\hX$ 
by the~$n$-qubit Hadamard operation, we see that the state is equivalent to
\[
\frac{1}{n} \sum_{k \in [n]}    
\density{0}^{\tensor (k-1)} \tensor \frac{\identity}{2} \tensor
\density{0}^{\tensor (n-k)} \tensor \density{k}
\enspace.
\]
A straightforward calculation now shows that~$\rI(\hX:Y_1) = \log_2 n$.
So~$\rI(\hX : \hY_1) \geq \log_2 n$, whereas~$\rI(X : Y_1) = 0$.

This phenomenon also occurs in the case of \AIndex, due to the 
prefix shared by the two parties. To quantify the information leaked 
\emph{by the protocol\/}, rather than \emph{the preparation of the initial
state in a superposition\/}, we view the protocol differently. We imagine 
that there is a single quantum register that carries the superposition 
corresponding to~$X$, and that Bob's unitary operations are controlled
appropriately by this register. In other words, his transformation
in the~$i$th round is of the form
\[
V_i \quad = \quad \sum_{x, y_1} \density{x} 
\tensor \density{y_1} \tensor V_{i, y_1 s(x,y_1)} \enspace,
\]
where the qubits holding~$x$ are with Alice.
Bob's information cost is then measured with respect to all the qubits 
with Alice.

We are now in a position to define the measure of quantum information
cost for two-party protocols that we analyze.
Let~$\lambda$ be a probability distribution over~$\cX \times \cY$ of the
type described above, and let~$\hX \hY_1$ denote the corresponding
superposition~$\sum_{x \in \cX, y_1 \in \cY_1} \sqrt{\lambda(x,y_1)\,}
\ket{x,y_1}$ over inputs.  Let~$\hX P_i Q_i \hY_{1}$ denote the joint 
state of Alice and Bob's input and workspace qubits \emph{immediately 
after\/} the~$i$th message is sent, in a protocol~$\Pi$ when we start 
with the input qubits in state~$\hX \hY_1$.
Note that the input qubits may get entangled with the message qubits during
the protocol. As the state of the input qubits we refer to will be clear 
from the context, we do not label it with the message number~$i$. 
The quantum information cost of~$\Pi$ for Bob with respect 
to~$\lambda$ is then defined as
\begin{eqnarray*}
\qic^\sB_{\lambda}(\Pi)
    & = & \sum_{\textrm{even } i \in [t]} \rI( \hY_{1} : \hX P_i) \enspace.
\end{eqnarray*}
In this cost, we measure the information about~$\hY_1$ contained in 
Alice's quantum state, while disregarding~$Y_2 = s(X,Y_1)$ (which is 
not available to Alice).

In Alice's cost, we would like to measure 
the information about~$\hX$ in Bob's quantum state, given access
to~$Y_2$. We model this as follows. We imagine an additional register 
that we label~$Y_2$. We copy~$s(X,Y_1)$ into this register and measure 
the qubits in the standard basis. The initial state of the
registers~$\hX \hY_1 Y_2$ is then
\[
\sum_{y_2 \in \cY_2}
\sum_{\substack{x,x' \in \cX \\ y_1, y_1' \in \cY_1 : \\
s(x,y_1) = s(x',y_1') = y_2}}
\sqrt{\lambda(x,y_1) \, \lambda(x',y_1')} \;
\ketbra{x,y_1}{x',y_1'} \tensor \density{y_2} \enspace.
\]
The joint state~$\hX P_i Q_i \hY_{1} Y_2$ of Alice and Bob's
input and workspace qubits, immediately after the~$i$th message is sent,
is correspondingly affected. We define Alice's information cost as
\begin{eqnarray*}
\qic^\sA_{\lambda}(\Pi) & = & \sum_{\textrm{odd } i \in [t]}
          \rI(\hX : Q_i \hY_{1} \,|\, Y_2) \enspace.
\end{eqnarray*}
The inclusion of the register holding~$Y_2$ precisely captures
the distribution of inputs in the communication protocol. The 
artificial construct described before, of substituting this with
suitable read-only access to Alice's input qubits (for executing
Bob's unitary transformations), however, is more appropriate for 
the proof of the quantum information cost trade-off.

The above notion corresponds to a hybrid of ``internal'' and 
``external information cost''~\cite{BarakBCR13}.  For product 
distributions (when~$Y_2$ is trivial), each term of this notion 
reduces precisely to the amount of (quantum) information 
available to a party about the other's input. 

In the rest of Section~\ref{sec-quantum}, we use a convention similar 
to the one above:
a symbol such as~$Z$ without a hat denotes the random variable resulting
from an imagined measurement, in the computational basis, of a 
sequence of qubits initialized to a superposition. The state of the 
qubits prior to the measurement is denoted by the symbol with a hat, 
e.g.,~$\hZ$.

Measuring any part of a quantum system in general affects the state of
the remaining qubits. Thus the symbol~$\hX$ used in the
expressions for Alice's and Bob's information cost denotes potentially 
different states. In the analysis that we present for \AIndex, we 
imagine measurements only of parts of Alice's and Bob's inputs in the 
computational basis. In that case, we denote the resulting state of 
the qubits without a hat. Thus the state we mean will be clear from 
the context.

\subsection{The quantum information cost trade-off}
\label{sec-qlb}

In this section, we derive an analogue of the information trade-off
result established in Section~\ref{sec-main} for quantum communication
protocols for \AIndex.

We first specialize the notion of quantum information cost to the
\AIndex\ function~$f_n$, and simplify it further. This allows us derive
a stronger information cost trade-off than with the original definition.
Let~$(X, K, B)$
be random variables distributed according to~$\mu$, the uniform
distribution over~$\{0,1\}^n \times [n] \times \set{0,1}$. Let~$\mu_0$
denote the distribution~$\mu$ conditioned upon~$X_K = B$, i.e., when
the inputs are chosen uniformly from the set of~$0$s of~$f_n$. We are
interested in the quantum information cost of a protocol~$\Pi$ for
\AIndex\ under the distribution~$\mu_0$, for the two parties.

As explained in Section~\ref{sec-qcommn}, we adopt the following 
convention with respect to the inputs for \AIndex.  Alice is given 
the input~$x$. We imagine that Bob is given~$k,b$, and
\emph{access\/} to the prefix~$x[1,k-1]$, rather than a copy of these
bits. When we restrict to the distribution~$\mu_0$, we assume he has
read-only access to~$x[1,k]$. This means that in any round~$i$ of the
protocol in which Bob plays, his local unitary operation~$V_i$ is
controlled by the qubits with Alice that hold the prefix.
It is important to bear in mind the qubits on which the unitary 
operations of the protocol act non-trivially, i.e., do not equal 
the identity.  In particular, in Lemma~\ref{lem:key}, we use the 
commutativity of the unitary operations used
in the protocol and the corresponding unitary operations given
by Lemmata~\ref{thm-pairs-quantum} and~\ref{thm-flip-quantum}. See, for
example, the paragraph before Eq.~(\ref{eqn-flip-r1}).

Suppose we have a quantum protocol~$\Pi$ for \AIndex\ with a total
of~$t$ messages.  Without loss of generality (see Section~\ref{sec-qcommn}),
we assume that Alice sends the first message, and alternates with Bob
thereafter.

Let~$\hX P_i Q_i \hK \hB$ denote the joint state of Alice and Bob's workspace in the
protocol~$\Pi$ immediately after the~$i$th message is sent, when we 
start with uniform
superpositions~$\hX$ over strings~$x \in \set{0,1}^n$, $\hK$ over~$[n]$,
and~$\hB$ over~$\set{0,1}$ (this corresponds to distribution~$\mu$). 
Let~$\hX^0 P_i^0 Q_i^0 \hK^0 \hB^0$ denote the analogous joint state corresponding
to~$\mu_0$, where we assume that Bob is given read-only access to the
register containing~$x_k$, rather than a copy of this bit. The quantum
information cost of~$\Pi$ for Alice and Bob with respect to~$\mu_0$ is
then
\begin{eqnarray*}
\qic^\sA_{\mu_0}(\Pi)
    & = & \sum_{\textrm{odd } i \in [t]}
          \rI(\hX^0[K+1,n] : Q_i^0 \hK^0 \,|\, X[1,K] ) \enspace, 
          \qquad \textrm{and} \\
\qic^\sB_{\mu_0}(\Pi)
    & = & \sum_{\textrm{even } i \in [t]} \rI( \hK^0 : \hX^0 P_i^0) \enspace.
\end{eqnarray*}
Due to the
\suppress{
concavity of von Neumann entropy~\cite[Section~11.3.5, p.~516]{NielsenC00},
and the
}
monotonicity of mutual information under quantum
operations~\cite[Theorem~11.15, p.~522]{NielsenC00}, for each~$i = 1,
\dotsc, t$ we have
\begin{eqnarray*}
\rI(X : Q_i^0 \,|\, X[1,K] ) 
    & \le & \rI( \hX^0[K+1,n] : Q_i^0 \hK^0 \,|\, X[1,K] )
            \enspace, \qquad \textrm{and} \\
\rI( K : \hX^0 P_i^0) & \le & \rI( \hK^0 : \hX^0 P_i^0) \enspace,
\end{eqnarray*}
where the symbols without a hat denote random variables resulting from
an imagined measurement of the corresponding qubits in the computational 
basis.  (We drop the superscript~`$0$' on these random variables, as their 
marginals are the same as under the distribution~$\mu$.)
The trade-off we prove also holds for the potentially smaller 
quantities on the left side above. In order to state the theorem in the
strongest possible terms, we define another measure of information cost
as follows:
\begin{eqnarray*}
\tqic^\sA_{\mu_0}(\Pi)
    & = & \sum_{\textrm{odd } i \in [t]}
          \rI(X : Q_i^0 \,|\, X[1,K] ) \enspace, \qquad \textrm{and} \\
\tqic^\sB_{\mu_0}(\Pi)
    & = & \sum_{\textrm{even } i \in [t]} \rI( K : \hX^0 P_i^0) \enspace.
\end{eqnarray*}

The intuition behind the lower bound on quantum information cost is
the same as that in the classical case. Namely, starting from an input
pair on which the function evaluates to~$0$, if the information cost
of any one party is low and we carefully change her input, the other
party's share of the state does not change much. Assume for simplicity
that Alice produces the output of the protocol. We show that
even when we simultaneously change both parts of the input, resulting
in a~$1$-input of the function, the perturbation to Alice's final
state is also correspondingly small. This implies that
the two information costs cannot be small simultaneously. For more
intuition into the main lemmata in this proof, we refer the reader to the
analogous steps in the classical case. In the final piece of the argument 
for the quantum case, the Local Transition Theorem and a hybrid
argument take the place of the Cut-and-Paste Lemma. Unlike the latter,
these are applied on a message-by-message basis, \emph{{\`a} la\/}
Jain, Radhakrishnan, and Sen~\cite{JainRS03b}, and leads to a
dependence of the information cost trade-off on the number of messages
in the protocol. 

The next theorem executes this argument for even~$n$. A similar result
also holds for odd~$n$, and may be inferred from the proof for the
even case. As explained in the previous section, the assumption that
Alice sends the first message is not necessary. 

\begin{theorem}
\label{thm-main-quantum}
Let~$\Pi$ be any quantum two-party communication protocol for the
\AIndex\ function~$f_n$ with~$n$ even, Alice starting
and alternating with Bob for a total of~$t \ge 1$ messages.
If~$\Pi$ makes error at most~$\eps \in [0, 1/4]$ on the
uniform distribution~$\mu$ over inputs, then
\[
2 \left[ \frac{\tqic_{\mu_0}^\sA(\Pi)}{n} \right]^{1/2}
+ \left[ 2\cdot \tqic_{\mu_0}^\sB(\Pi) \right]^{1/2}
    \quad \geq \quad \frac{1-4\eps}{4\sqrt{\kappa\, t}} \enspace,
\]
where~$\mu_0$ is the uniform distribution over~$f_n^{-1}(0)$.
\end{theorem}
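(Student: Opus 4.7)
The plan is to adapt the classical argument of Theorem~\ref{thm-main} to the quantum setting, replacing the Cut-and-Paste Lemma (for which there is no known quantum analogue) by a message-by-message hybrid argument driven by the Local Transition Theorem (Proposition~\ref{fact:localtrans}), following Jain, Radhakrishnan, and Sen~\cite{JainRS03b}. Fix a protocol~$\Pi$ for~$f_n$ with~$t$ messages, starting with Alice. Let~$\ket{\psi_i(x,k,b)}$ denote the overall pure state on all input and workspace registers immediately after message~$i$ when the classical inputs are~$(x,k,b)$; the read-only-input convention of Section~\ref{sec-qcommn} ensures that Bob's unitaries are coherently controlled by Alice's prefix register, which is essential to the hybrid. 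As in Lemma~\ref{thm-close}, the ultimate target is the average Bures distance between~$\ket{\psi_t(x, l, x_l)}$ (a~$0$-input) and~$\ket{\psi_t(x^{(l)}, l, x_l)}$ (the corresponding~$1$-input), with~$L$ uniform in~$[n]-[n/2]$ and an auxiliary index~$J$ uniform in~$[n/2]$. Since the final measurement is a quantum channel, Proposition~\ref{lem-trbures} and monotonicity of trace distance convert the error assumption into a lower bound of order~$1-2\eps$ on this average Bures distance.

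To bound the target from above, I use the three-step bridge
\[
(x, l, x_l) \;\rightsquigarrow\; (x, j, x_j) \;\rightsquigarrow\; (x^{(l)}, j, x_j) \;\rightsquigarrow\; (x^{(l)}, l, x_l) \enspace,
\]
mirroring the classical proof. For each transition I run a message-by-message hybrid. When only Bob's input changes, at each of Bob's (even) messages I invoke the Local Transition Theorem to obtain a unitary acting only on Bob's registers that aligns the two pure states; the residual Bures distance is then precisely the Bures distance between Alice's reduced states after that message. Between Bob's messages the states evolve under Alice-local operations (including controls that merely read Alice's input register), which preserve this Alice-side Bures distance, and the triangle inequality bounds the overall Bures distance by the sum over Bob's messages of these per-message Bures distances. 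The flip transition is handled symmetrically, via Alice's (odd) messages and Bures distances between Bob's reduced states.

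The per-message Bures distances are controlled by quantum analogues of Lemmata~\ref{thm-pairs} and~\ref{thm-flip}, proved using the Average Encoding Theorem (Proposition~\ref{thm-avg-quantum}) and, for the flip bound, the Chain Rule (Proposition~\ref{fact-chain-quantum}). For each even~$i$, the Average Encoding Theorem applied to the classical-quantum state~$K\,\hX^0 P_i^0$, together with the triangle inequality, bounds $\expct_{j,l}\,\fh{(\hX^0 P_i^0)_j}{(\hX^0 P_i^0)_l}^2$ by $\Order(\kappa)\cdot\rI(K:\hX^0 P_i^0)$, which sums over even~$i$ to $\Order(\kappa)\cdot\tqic^\sB_{\mu_0}(\Pi)$. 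For each odd~$i$, the Average Encoding Theorem combined with the Chain Rule decomposition of~$\rI(X:Q_i^0\mid X[1,K])$ along the bits of~$X$ and the restriction of the flipped coordinate to~$[n]-[n/2]$ yields a bound of $\Order(\kappa/n)\cdot\rI(X:Q_i^0\mid X[1,K])$ on the expected squared Bures distance between Bob's reduced states for~$x$ versus~$x^{(l)}$, summing to $\Order(\kappa/n)\cdot\tqic^\sA_{\mu_0}(\Pi)$ over odd~$i$. Cauchy--Schwarz then converts each sum of squared Bures distances into a sum of Bures distances with an extra factor~$\sqrt{t}$, which is precisely the $\sqrt{t}$ in the denominator of the theorem; rearranging against the lower bound from the error yields the announced inequality.

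The main obstacle, and the principal departure from the classical proof, is the message-by-message use of the Local Transition Theorem. One must ensure that the Uhlmann correction unitaries at each step act on registers disjoint from those touched by subsequent protocol operations and subsequent correction unitaries, so that all these unitaries commute, the telescoping is valid, and the reduced states entering each average-encoding bound are exactly those controlled by the defined information costs. The read-only-input convention and the definitions of~$\tqic^\sA_{\mu_0}$ and~$\tqic^\sB_{\mu_0}$ in Section~\ref{sec-qcommn} are tailored to make this commutation structure clean; they are also the reason the trade-off in Theorem~\ref{thm-main-quantum} necessarily degrades with the number of messages~$t$, in contrast to the $t$-free classical bound of Theorem~\ref{thm-main}.
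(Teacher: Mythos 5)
Your proposal correctly identifies the main technical ingredients (the Local Transition Theorem for Uhlmann corrections, Average Encoding plus the Chain Rule for per-message bounds, and the Cauchy--Schwarz step that produces the $\sqrt{t}$). However, the overall hybrid structure you sketch does not match the paper's, and I believe it leaves a genuine gap.

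You set the target as the Bures distance between the states on $(x,l,x_l)$ and $(x^{(l)},l,x_l)$, which have the \emph{same} Bob input but \emph{different} Alice input, and propose to bridge via $(x,l,x_l)\rightsquigarrow(x,j,x_j)\rightsquigarrow(x^{(l)},j,x_j)\rightsquigarrow(x^{(l)},l,x_l)$, running a separate Local-Transition-plus-hybrid for each leg. The problem is that these three legs do not compose. The flip leg $(x,j,x_j)\rightsquigarrow(x^{(l)},j,x_j)$ is bounded by Average Encoding on \emph{Bob's} reduced state, and the Uhlmann correction it produces acts on \emph{Alice's} side; when $t$ is even and Alice produces the output, the quantity you ultimately need is the distance between \emph{Alice's} reduced states, and an Alice-side correction does not disappear under that partial trace. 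Moreover the Uhlmann unitaries for the three legs are chosen independently (one on Bob's side, one on Alice's, one on Bob's), so there is no triangle inequality on reduced states that adds them up, and the claim that all the correction unitaries commute with later protocol operations is not true in general: once message qubits change hands, later protocol unitaries can act on the registers where an earlier correction was applied. The paper's Eq.~(\ref{eqn-bound-quantum}) sidesteps the first difficulty by reformulating the target so that only Bob's index changes (comparing $X^{(L)}P_t(X^{(L)};X[1,J])$ with $X^{(L)}P_t(X^{(L)};X[1,L])$, i.e.\ run ``10'' versus run ``11''), so that the top-level alignment is on Bob's side and therefore invisible to Alice's reduced state; and Lemma~\ref{lem:key} replaces your three separate hybrids with a \emph{single} recursion over rounds that, at round $r$, uses exactly two fresh corrections $U_{r-1}$ and $U_r$ (on disjoint sides, hence commuting) together with the inductive hypothesis, thereby never composing corrections from non-adjacent rounds and never needing a bound for a transition that cannot be controlled by the $\mu_0$-information cost. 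That recursive lemma is the heart of the proof; without it, or an equivalent device, the three-leg bridge does not close.
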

\begin{proof}
Consider a protocol~$\Pi$ as in the statement of the theorem.  Let the
inputs be given by random variables~$X,K,B$, drawn from the
distribution~$\mu$, let~$d \eqdef \tqic_{\mu_0}^\sA(\Pi)/n$, and let~$c
\eqdef \tqic_{\mu_0}^\sB(\Pi)$.

Let~$\hX P_i Q_i K B$ be the joint state of the registers used in the
protocol, when the inputs are initialized with a uniform
superposition~$\hX$ over~$x \in \set{0,1}^n$ and random
variables~$K,B$, immediately after the~$i$th message in the
protocol. Let~$d_i = \tfrac{1}{n} \, \rI( X : Q_i^0 \,|\, X[1,K])$ for
odd~$i \in [t]$, and $c_i = \rI( K : \hX^0 P_i^0)$ for even~$i \in
[t]$. So~$d = \sum_{\textrm{odd } i \in [t]} d_i$ and~$c =
\sum_{\textrm{even } i \in [t]} c_i$.

We prove the theorem assuming that Alice computes the output of the 
protocol, i.e., $t$ is even. The proof when Bob computes the output is
similar; we point out the main differences along the way.
If~$t$ is even, we show that the state~$X P_t^0$ is close in
trace distance to the state~$X P_t^1$, where~$X P_t^1$
denotes the reduced state~$X P_t$ conditioned on the function value
being~$1$, i.e., when~$B = {\bar{X}_K}$. (Note that~$X$ is the classical
random variable corresponding to the superposition~$\hX$.)
\begin{lemma}
\label{thm-close-quantum}
For even~$t$,
$\trnorm{X P_t^0 - X P_t^1} 
    \quad \leq \quad 1 + 4 \sqrt{\kappa\, t}
                     \left[ 2 \sqrt{d} + \sqrt{2 c} \right]$, ~~
where~$\kappa = \frac{\ln 2}{2}$.
\end{lemma}
If~$t$ is odd, i.e., Bob computes the output of the protocol, we
show the same bound on
\[
\trnorm{Q_t^0\, X[1,K] - Q_t^1\, X[1,K-1]\, \bar{X}_K} \enspace.
\]
Since the protocol identifies the two states~$X P_t^0$ and~$X P_t^1$, with
average error~$\eps$, and trace distance is monotonic under quantum
operations~\cite[Theorem~9.2, p.~406]{NielsenC00}, we have
\[
\trnorm{X P_t^0 - X P_t^1} \quad \geq \quad 2(1-2\eps) \enspace.
\]
The theorem follows.
\end{proof}

We now prove the core of the theorem, i.e., that if Alice computes the
output, her final state
for the~$0$ and~$1$ inputs are close to each other in distribution.

\begin{proofof}{Lemma~\ref{thm-close-quantum}} 
{}
When we wish to explicitly write a state, say~$P_i$, as a function of
the inputs to Alice and Bob, say $x$ and $x[1,k-1], b$ respectively,
we write it as $P_i(x ; x[1,k-1],b)$. If~$b = x_k$, we write Bob's
input as~$x[1,k]$.

As before, for any~$x \in \set{0,1}^n$ and~$i \in [n]$, we
let~$x^{(i)}$ denote the string that equals~$x$ in all coordinates
except at the~$i$th. Note that~$P_t^1 = P_t(X ; X[1,K-1],\bar{X}_K)$
is the same mixed state as~$P_t( X^{(K)} ; X[1,K])$, since~$X$
and~$X^{(K)}$ are identically distributed. Thus, our goal is to bound
\[
\trnorm{ X P_t(X ; X[1,K]) - X^{(K)} P_t( X^{(K)} ; X[1,K]) }
\enspace.
\]

For reasons similar to those the classical case and new ones arising
from our proof (an explanation for which is included below), we consider 
the trace distance between the first
term above with~$K \in [n/2]$ and the second term with~$K \in [n] -
[n/2]$. (Recall that in the classical case, we restricted ourselves
to~$K \in [n]-[n/2]$ in both terms.)  Let~$J$ be uniformly and
independently distributed in~$[n/2]$, and let~$L$ be uniformly and
independently distributed in~$[n]-[n/2]$. Then
\begin{eqnarray}
\nonumber
\lefteqn{ \trnorm{ X P_t(X ; X[1,K]) - X^{(K)} P_t( X^{(K)} ; X[1,K]) } } \\
\nonumber
    & = & \left\| \frac{1}{2} \left( X P_t(X ; X[1,J]) + X P_t(X ; X[1,L])
          \right) \right. \\
\nonumber
    &   & \mbox{} - \left. \frac{1}{2} \left( X^{(J)} P_t( X^{(J)} ; X[1,J])
          + X^{(L)} P_t( X^{(L)} ; X[1,L]) \right) \right\|_{\mathrm{tr}} \\
\nonumber
    & \leq & 1 + \frac{1}{2} 
             \norm{ X P_t(X ; X[1,J]) - X^{(L)} P_t( X^{(L)} ; X[1,L]) }
             \\
\label{eqn-bound-quantum}
    & = & 1 + \frac{1}{2} 
             \norm{ X^{(L)} P_t(X^{(L)} ; X[1,J])
                    - X^{(L)} P_t( X^{(L)} ; X[1,L]) }
             \enspace,
\end{eqnarray}
where we use the fact that~$X$ and~$X^{(L)}$ are identically
distributed, even given the prefix~$X[1,J]$, and that the states~$X
P_t(X ; X[1,J])$ and~$X^{(L)} P_t(X^{(L)} ; X[1,J])$ are therefore identical.
So it suffices to bound the RHS above. If~$t$ is odd, we instead bound
\begin{eqnarray}
\nonumber
\lefteqn{ \trnorm{ Q_t(X ; X[1,K]) X[1,K] 
                   - Q_t( X^{(K)} ; X[1,K]) X[1,K]} } \\
\label{eqn-bound-quantum2}
    & \leq & 1 + \frac{1}{2} 
             \trnorm{ Q_t(X ; X[1,L]) X[1,L] 
                      - Q_t( X^{(L)} ; X[1,L]) X[1,L]}
             \enspace.
\end{eqnarray}
The expression for odd~$t$, Eq.~(\ref{eqn-bound-quantum2}), is similar 
to the one we had in the classical 
case: we focus on the case~$K \in [n] - [n/2]$ alone.

For every~$j \in [n/2], l \in [n] - [n/2]$ and $z \in \set{0,1}^l$, we
consider four runs of the protocol~$\Pi$. The inputs to Alice and Bob
in the four runs are summarized in the table below. Only the first~$l$
bits of Alice's input are specified. In all four runs, the
last~$(n-l)$ input bits of Alice are initialized to a uniform
superposition over all~$(n-l)$-bit strings. The final column gives the
notation for the (pure) state corresponding to the
registers~$\hX[l+1,n]\, P_i Q_i$, which constitute the last~$(n-l)$
inputs bits of Alice, her workspace, and that of Bob, immediately
after the~$i$th message has been sent, $i \in [t]$.

\begin{center}
\begin{tabular}{|c|l|l|l|}
\hline
Run & Alice's input $x[1,l]$ & Bob's input $k,x[1,k-1],b$ & State \\
\hline
00 & $z$ & $j, z[1,j-1], z_j$ & $\ket{\phi_i(z,j)}$ \\
01 & $z$ & $l, z[1,l-1], z_l$ & $\ket{\phi_i(z,l)}$ \\
10 & $z^{(l)}$ & $j, z[1,j-1], z_j$ & $\ket{\phi_i(z^{(l)},j)}$ \\
11 & $z^{(l)}$ & $l, z[1,l-1], z_l$ & $\ket{\phi_i(z^{(l)},l)}$ \\
\hline
\end{tabular}
\end{center}
The two bits in the ``Run'' column indicate whether Alice's~$l$th bit 
has been flipped, and whether we have switched~$j$ to~$l$. A~``1''
indicates a switch. Note that for the first three kinds of inputs, the
function value is~$0$, and for the last it is~$1$.

When Bob's information cost is low, it follows that the final state on 
inputs of type~``00'' is close to the final state on inputs of type~``01''
(Lemma~\ref{thm-pairs-quantum}). We show a similar closeness between 
the final state on inputs of type~``10'' and that on inputs of type~``11''.
This explains the choice made in Eq.~(\ref{eqn-bound-quantum}) when 
Alice produces the output of the protocol. For similar reasons, when Bob
produces the output of the protocol, we compare the final state of the 
protocol on inputs of type~``01'' with that on inputs of type~``11'',
as in Eq.~(\ref{eqn-bound-quantum2}).

As the first step,
we compare the intermediate protocol states in the above four runs,
when we flip the~$l$th input bit of Alice, and when we switch Bob's
input from~$j$ to~$l$ (along with the corresponding prefix). We show
that the switch results in a perturbation to reduced state of the
other party that is related to the information contained about the bit
or the index (as in the classical case). To quantify this
perturbation, define
\begin{eqnarray*}
h_i(j,l,z) & = & \fh{ Q_i( z X[l+1,n] ; z[1,j]) }
                    { Q_i( z^{(l)} X[l+1,n] ; z[1,j]) }
                 \enspace,
\end{eqnarray*}
for every odd~$i \in [t]$. This is the perturbation in Bob's reduced state
when we flip the~$l$th bit of Alice input, when Bob has index~$j$. 
\suppress{
  The state~$Q_i$ above results from running the protocol with a
  superposition over strings. However, Since tracing out the register
  with~$\hX[l+1,n]$ has the same effect as measuring it in the
  standard basis, we write~$Q_i$ as a function of the random
  variable~$X$.
}
Define
\begin{eqnarray*}
h_i(j,l,z) & = & \fh{ \hX[l+1,n] \, P_i(z \hX[l+1,n]; z[1,j]) }
                    { \hX[l+1,n] \, P_i(z \hX[l+1,n]; z[1,l]) }
                 \enspace,
\end{eqnarray*}
for every even~$i \in [t]$. This is the perturbation in Alice's reduced 
state when we switch Bob's index from~$j$ to~$l$.  In the above states,
$P_i$ is entangled with the qubits holding~$\hX$, and is written as a 
function of~$\hX[l+1,n]$ to emphasize this.

The number of qubits Alice and Bob have during the protocol changes
with every message. To maintain simplicity of notation, we denote the
identity operator in any round on the register holding~$\hX[l+1,n]$
and Alice's workspace qubits by~$\identity_\sA$ and the identity
operator on Bob's workspace qubits by~$\identity_\sB$.

We begin by showing that changing Bob's input alone from~$j$ to~$l$
while keeping Alice's input fixed at~$z \hX[l+1,n]$, does not perturb
Alice's reduced state in any round of communication by much, provided
the corresponding information cost of Bob is small. By the Local
Transition Theorem, we then see that Bob may apply a unitary operation
to his qubits alone to bring the protocol states close to each other.
\newtheorem*{thm-pairs-quantum}{Lemma~\ref{thm-pairs-quantum}}
\begin{thm-pairs-quantum}
For every even~$i \in [t]$, there is a unitary operator~$U_i$
that depends upon~$j,l,z$, acts on Bob's workspace qubits alone (i.e.,
on the register holding state~$Q_i$), and is such that
\[
\fh{ \, (\identity_\sA \tensor U_i) \, \ket{\phi_i(z,j)}}
   { \ket{\phi_i(z,l)}} 
     \quad = \quad h_i(j,l,z) \enspace.
\]
Moreover,
\[
\expct_{(j',l',z') \leftarrow (J,L,X[1,L])} \; h_i(j',l',z')
\quad \leq \quad \sqrt{8 \kappa\, c_i} \enspace.
\]
\end{thm-pairs-quantum}
The proof is presented later in this section.

Next, we show that if the information cost of Alice is small, Bob's
state~$Q_i^0$ does not carry much information about~$X$, even given a
prefix. Therefore, flipping a bit outside the prefix does not perturb
Bob's state by much, and there is a unitary operation on Alice's
qubits which brings the joint states close to each other.
\newtheorem*{thm-flip-quantum}{Lemma~\ref{thm-flip-quantum}}
\begin{thm-flip-quantum}
For every odd~$i \in [t]$, there is a unitary operator~$U_i$ that
depends upon~$j,l,z$, acts on the qubits holding~$\hX[l+1,n]$ and
Alice's workspace qubits (the register holding state~$P_i$), and is
such that
\[
\fh{ \, (U_i \tensor \identity_\sB) \, \ket{\phi_i(z,j)}}
   { \ket{\phi_i(z^{(l)},j)}} 
     \quad = \quad h_i(j,l,z) \enspace.
\]
Moreover,
\[
\expct_{(j',l',z') \leftarrow (J,L,X[1,L])} \; h_i(j',l',z') 
    \quad \leq \quad 4 \sqrt{\kappa\, d_i} \enspace.
\]
\end{thm-flip-quantum}
This is proven later in the section.

There is no quantum counterpart to the Cut-and-Paste lemma, so that
unlike in the classical case, the above two lemmata are by themselves
not sufficient to conclude the theorem. Instead, we combine these with
a hybrid argument to show that switching from chosen~$0$-inputs of \AIndex\ of
the type~``10'' (as defined above) to corresponding~$1$-inputs of type~``11''
does not affect the final state by ``much''.
\newtheorem*{lem:key}{Lemma~\ref{lem:key}}
\begin{lem:key}
Let~$(U_i)_{i \in [t]}$, be the unitary operators given by
Lemmata~\ref{thm-pairs-quantum} and~\ref{thm-flip-quantum}.
For every odd~$r \in [t]$, 
\begin{eqnarray*}
\fh{ (U_r \otimes \identity_\sB) \, \ket{\phi_r(z,l)} }
             { \ket{\phi_r(z^{(l)}, l)} }
    & \leq  & h_r(j,l,z) + 2 \sum_{i = 1}^{r-1}  h_i(j,l,z)
            \enspace .
\end{eqnarray*}
For every even~$r \in [t]$,
\begin{eqnarray*}
\fh { ( \identity_\sA \otimes U_r) \ket{\phi_r(z^{(l)}, j)} }
        { \ket{\phi_r(z^{(l)}, l)} }
    & \leq & h_r(j,l,z) + 2 \sum_{i = 1}^{r-1}  h_i(j,l,z)
            \enspace .
\end{eqnarray*}
\end{lem:key}
This is proved later in this section.

Recall that~$t$ is even. We have
\suppress{
By the Triangle Inequality, the monotonicity of the trace distance under
quantum operations~\cite[Theorem~9.2, p.~406]{NielsenC00},
the relationship between trace and Bures distance 
(Proposition~\ref{lem-trbures}), Lemmata~\ref{lem:key},
\ref{thm-pairs-quantum} and~\ref{thm-flip-quantum},
}
\begin{eqnarray*}
\lefteqn{
\trnorm{ X^{(L)} P_t(X^{(L)} ; X[1,J]) - X^{(L)} P_t( X^{(L)} ; X[1,L]) }
} \\
    & \leq & \expct_{(j,l,z) \leftarrow (J,L,X[1,L])}
             \trnorm{ X[l+1,n]\, P_t(z^{(l)} X[l+1,n] ; z[1,j]) 
                      - X[l+1,n]\, P_t( z^{(l)} X[l+1,n] ; z[1,l]) } \\
    &      & \qquad \textrm{(by the Triangle Inequality)} \\
    & \leq & \expct_{(j,l,z) \leftarrow (J,L,X[1,L])}
             \trnorm{ \hX[l+1,n]\, P_t(z^{(l)} \hX[l+1,n] ; z[1,j]) 
                      - \hX[l+1,n]\, P_t( z^{(l)} \hX[l+1,n] ; z[1,l]) } \\
    &      & \qquad \textrm{(by the monotonicity of trace
             distance under quantum
             operations~\cite[Theorem~9.2, p.~406]{NielsenC00})} \\
    & \leq & 2\sqrt{2} ~ \expct_{(j,l,z) \leftarrow (J,L,X[1,L])} \;
             \fh{ \hX[l+1,n]\, P_t(z^{(l)} \hX[l+1,n] ; z[1,j]) }
                { \hX[l+1,n]\, P_t(z^{(l)} \hX[l+1,n] ; z[1,l]) } \\
    &      & \qquad \textrm{(by Proposition~\ref{lem-trbures})} \\
    & \leq &  2\sqrt{2} ~ \expct_{(j,l,z) \leftarrow (J,L,X[1,L])} \;
              \fh{ ( \identity_\sA \otimes U_t) \ket{\phi_t(z^{(l)}, j)} }
                 { \ket{\phi_t(z^{(l)}, l)} } \\
    &      & \qquad \textrm{(by monotonicity of Bures distance under quantum
operations~\cite[Theorem~9.6, p.~414]{NielsenC00})} \\
    & \leq &  4 \sqrt{2} ~ \expct_{(j,l,z) \leftarrow (J,L,X[1,L])}
              \sum_{i = 1}^{t}  h_i(j,l,z) 
             \qquad \textrm{(by Lemma~\ref{lem:key})}  \\
    & \leq & 4 \sqrt{2} \left[
             \sum_{\textrm{odd } i \in [t]} 4 \sqrt{\kappa\, d_i}
             + \sum_{\textrm{even } i \in [t]} 2 \sqrt{2 \kappa\, c_i}
             \right] 
             \qquad \textrm{(by Lemmata~\ref{thm-pairs-quantum} 
             and~\ref{thm-flip-quantum})} \\
    & \leq & 8 \sqrt{\kappa\, t} \left[ 2 \sqrt{d} + \sqrt{2 c} \right]
             \enspace.
             \qquad \textrm{(by the Jensen Inequality)}
\end{eqnarray*}
In deriving the fourth inequality above, we used the fact that the states
here are purification of the states in the previous inequality.
This gives us a bound on the RHS of Eq.~(\ref{eqn-bound-quantum}), and
concludes the proof of Lemma~\ref{thm-close-quantum}.
\end{proofof}

We turn to the deferred proofs.
\begin{lemma}
\label{thm-pairs-quantum}
For every even~$i \in [t]$, there is a unitary operator~$U_i$
that depends upon~$j,l,z$, acts on Bob's workspace qubits alone (i.e.,
on the register holding state~$Q_i$), and is such that
\[
\fh{ \, (\identity_\sA \tensor U_i) \, \ket{\phi_i(z,j)}}
   { \ket{\phi_i(z,l)}} 
     \quad = \quad h_i(j,l,z) \enspace.
\]
Moreover,
\[
\expct_{(j',l',z') \leftarrow (J,L,X[1,L])} \; h_i(j',l',z')
\quad \leq \quad \sqrt{8 \kappa\, c_i} \enspace.
\]
\end{lemma}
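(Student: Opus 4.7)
My plan mirrors the classical Lemma~\ref{thm-pairs}, with the Local Transition Theorem standing in for the transparent handling of classical transcripts there. First, for the existence of~$U_i$, I invoke Proposition~\ref{fact:localtrans}. Both~$\ket{\phi_i(z,j)}$ and~$\ket{\phi_i(z,l)}$ live on the registers~$\hX[l+1,n]\,P_i\,Q_i$; tracing out the~$Q_i$ register yields exactly the reduced states~$\hX[l+1,n]\,P_i(z\hX[l+1,n]\,;\,z[1,j])$ and~$\hX[l+1,n]\,P_i(z\hX[l+1,n]\,;\,z[1,l])$, whose Bures distance is by definition~$h_i(j,l,z)$. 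Uhlmann's theorem therefore delivers a unitary~$U_i$ acting on~$Q_i$ alone that realises this distance between the two purifications.

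For the expectation bound, write~$\rho_k$ for the state of the registers~$\hX^0 P_i^0$ conditioned on~$K = k$ and set~$\bar\rho \eqdef \expct_{k \leftarrow K} \rho_k = \hX^0 P_i^0$. Applying the quantum Average Encoding Theorem (Proposition~\ref{thm-avg-quantum}) to the classical-quantum state~$K(\hX^0 P_i^0)$ yields~$\expct_k \fh{\rho_k}{\bar\rho}^2 \leq \kappa\, c_i$. Fix~$l \in [n]-[n/2]$ and consider (hypothetically) measuring the first~$l$ qubits of the~$\hX$ register in the computational basis. Because all unitaries in the protocol act diagonally in the computational basis of~$\hX$ (the input register is read-only for both parties), the marginal of~$\hX[1,l]$ in the standard basis remains maximally mixed throughout the protocol; the measurement outcome~$z$ is therefore uniform on~$\{0,1\}^l$ independently of~$k$. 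Writing~$\sigma^{(l)}_{k,z}$ and~$\bar\sigma^{(l)}_z \eqdef \expct_k \sigma^{(l)}_{k,z}$ for the resulting post-measurement states on~$\hX[l+1,n]\,P_i$, Proposition~\ref{lem-convexity} combined with the monotonicity of Bures distance under the measurement channel gives
\[
\expct_z \; \fh{\sigma^{(l)}_{k,z}}{\bar\sigma^{(l)}_z}^2 \quad \leq \quad \fh{\rho_k}{\bar\rho}^2 \enspace.
\]
For~$k \leq l$, the state~$\sigma^{(l)}_{k,z}$ coincides with~$\hX[l+1,n]\,P_i(z\hX[l+1,n]\,;\,z[1,k])$---precisely the states appearing in~$h_i$.

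Restricting the uniform distribution on~$k \in [n]$ to the half~$[n/2]$ at most doubles the expectation, and the same is true when we set~$k = l$ and then sample~$l$ uniformly from~$[n]-[n/2]$. Combining these two restrictions with the displayed inequality yields
\[
\expct_{(j,l,z)\leftarrow(J,L,X[1,L])} \fh{\sigma^{(l)}_{j,z}}{\bar\sigma^{(l)}_z}^2 \;\leq\; 2\kappa\, c_i, \qquad \expct_{(l,z)\leftarrow(L,X[1,L])} \fh{\sigma^{(l)}_{l,z}}{\bar\sigma^{(l)}_z}^2 \;\leq\; 2\kappa\, c_i \enspace.
\]
The Triangle Inequality for Bures distance together with the elementary bound~$(a+b)^2 \leq 2(a^2 + b^2)$ then gives~$\expct_{(j,l,z)} h_i(j,l,z)^2 \leq 8\kappa\, c_i$, and Jensen's Inequality delivers the advertised~$\expct h_i \leq \sqrt{8\kappa\, c_i}$. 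The main obstacle I anticipate is the bookkeeping around the measurement step: one has to verify carefully that the standard-basis-controlled operations used throughout the protocol preserve the maximally-mixed marginal of~$\hX[1,l]$, so that the two classical-quantum states produced by the measurement share the uniform classical marginal required to invoke Proposition~\ref{lem-convexity}.
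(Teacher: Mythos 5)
Your proof is correct and follows essentially the same route as the paper: the Local Transition Theorem supplies the unitary~$U_i$ realising~$h_i(j,l,z)$, and the expectation bound comes from the quantum Average Encoding Theorem combined with the monotonicity of Bures distance under the computational-basis measurement of~$\hX[1,l]$, Proposition~\ref{lem-convexity}, the triangle inequality, $(a+b)^2 \le 2(a^2+b^2)$, and Jensen. The only cosmetic difference is the ordering: you measure the~$\hX$ prefix before invoking the triangle inequality, while the paper applies the triangle inequality at the level of the unmeasured states~$\hX P_i(\hX ; \hX[1,k])$ vs.~$\hX\tP_i$ and only then measures and decomposes; the bounds are identical. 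The bookkeeping concern you flag at the end --- that the post-measurement classical marginal on~$\hX[1,l]$ is uniform and the same for every~$k$, so that Proposition~\ref{lem-convexity} applies --- is indeed the point that makes the argument go through, and it holds precisely because the input registers are read-only (Bob's unitaries act as controlled operations diagonal in the~$\hX$ computational basis), which keeps the diagonal of~$\hX[1,l]$'s reduced state uniform throughout the protocol.
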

\begin{proof}
Note that~$\hX[l+1,n]\, P_i(z \hX[l+1,n]; z[1,k])$ for~$k \leq l$ is
the reduced state of~$\ket{\phi(z,k)}$ with Bob's workspace (i.e., the
register holding state~$Q_i$) traced out.  By the Local Transition
Theorem, Proposition~\ref{fact:localtrans}, there is a unitary
operator~$U_i$ that depends upon~$j,l,z$, acts on Bob's workspace
qubits alone, and is such that
\[
\fh{ \, (\identity_\sA \tensor U_i) \, \ket{\phi_i(z,j)}}
   { \ket{\phi_i(z,l)}}
     \quad = \quad h_i(j,l,z) \enspace.
\]

We show that this distance is bounded on average.
Consider the quantum state~$\hX \tP_i$ which is the reduced state of all
quantum registers except Bob's workspace and his input~$K$. We denote
by~$\hX P_i(\hX; \hX[1,k])$ this state for a fixed index~$k$, so that
\[
\hX \tP_i \quad = \quad \frac{1}{n} \sum_{k = 1}^{n} 
                        \hX P_i(\hX; \hX[1,k]) \enspace.
\]
By the Average Encoding Theorem, Proposition~\ref{thm-avg-quantum}, 
\[
\expct_{k \leftarrow K} \; \fh{ \hX P_i(\hX \,;\, \hX[1,k])}{
  \hX \tP_i}^2 \quad \leq \quad \kappa\, c_i \enspace,
\]
where~$\kappa = \frac{\ln 2}{2}$.
An immediate consequence is that
\begin{eqnarray*}
\expct_{j' \leftarrow J} \; \fh{ \hX P_i( \hX \,;\, \hX[1,j']) }{ \hX \tP_i}^2
    & \leq & 2 \, \kappa \, c_i \enspace, \qquad \text{and} \\
\expct_{l' \leftarrow L} \; \fh{ \hX P_i( \hX \,;\, \hX[1,l']) }{ \hX \tP_i}^2
    & \leq & 2 \, \kappa \, c_i \enspace.
\end{eqnarray*}
By the Triangle Inequality, for any~$j' \in [n/2]$, $l' \in [n]-[n/2]$,
\begin{eqnarray*}
\lefteqn{ \fh{ \hX P_i(\hX \,;\, \hX[1,j'])}{  \hX P_i(\hX \,;\, \hX[1,l'])}^2} \\
& \leq &  \left(\fh{ \hX P_i(\hX \,;\, \hX[1,j'])}{ \hX \tP_i}
              + \fh{ \hX P_i(\hX \,;\, \hX[1,l'])}{ \hX \tP_i} \right)^2 \\
& \leq &  2 \, \fh{ \hX P_i(\hX \,;\, \hX[1,j'])}{  \hX \tP_i}^2
          + 2 \, \fh{ \hX P_i(\hX \,;\, \hX[1,l'])}{ \hX \tP_i}^2 \enspace .
\end{eqnarray*}
Since Bures distance is monotonic under quantum 
operations~\cite[Theorem~9.6, p.~414]{NielsenC00}, measuring the
first~$l'$ qubits of~$\hX$ yields
\begin{align*}
\lefteqn{ {\mathfrak h}\! \left( X[1,l']\, \hX[l'+1,n]\, 
               P_i(X[1,l']\, \hX[l'+1,n] \,;\, X[1,j']) \, , \right. }
          \\
    & \qquad  \left. X[1,l']\, \hX[l'+1,n]\, 
               P_i(X[1,l']\, \hX[l'+1,n] \,;\, X[1,l']) \right)^2  \\
    & \leq \quad 2 \, \fh{ \hX P_i(X \,;\, X[1,j'])}{ \hX \tP_i}^2
          + 2 \, \fh{ \hX P_i(X \,;\, X[1,l'])}{ \hX \tP_i}^2 \enspace ,
\end{align*}
where~$X[1,l']$ denotes the classical random variable resulting from 
the measurement of~$\hX[1,l']$.
Moreover, by Proposition~\ref{lem-convexity}, the left hand side above
is equal to
\[
    \expct_{z' \leftarrow X[1,l']} \;
          \fh{ \hX[l'+1,n]\, P_i(z' \hX[l'+1,n] \,;\, z'[1,j'])}
             { \hX[l'+1,n]\, P_i(z' \hX[l'+1,n] \,;\, z'[1,l'])}^2
          \enspace .
\]
Taking expectation over~$(j',l') \leftarrow (J,L)$, and invoking the
Jensen inequality, we get the claimed bound.
\end{proof}

\begin{lemma}
\label{thm-flip-quantum}
For every odd~$i \in [t]$, there is a unitary operator~$U_i$ that
depends upon~$j,l,z$, acts on the qubits holding~$\hX[l+1,n]$ and
Alice's workspace qubits (the register holding state~$P_i$), and is
such that
\[
\fh{ \, (U_i \tensor \identity_\sB) \, \ket{\phi_i(z,j)}}
   { \ket{\phi_i(z^{(l)},j)}} 
     \quad = \quad h_i(j,l,z) \enspace.
\]
Moreover,
\[
\expct_{(j',l',z') \leftarrow (J,L,X[1,L])} \; h_i(j',l',z') 
    \quad \leq \quad 4 \sqrt{\kappa\, d_i} \enspace.
\]
\end{lemma}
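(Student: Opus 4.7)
The plan is to mirror the classical Lemma~\ref{thm-flip}, with Bures distance in place of Hellinger distance and the quantum Average Encoding Theorem replacing its classical counterpart. The two claims of the lemma are essentially independent and I would handle them separately.

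For the first claim, I would appeal directly to Uhlmann's theorem (the Local Transition Theorem, Proposition~\ref{fact:localtrans}). By definition, $h_i(j,l,z)$ is the Bures distance between the reduced states on Bob's register of the two pure states $\ket{\phi_i(z,j)}$ and $\ket{\phi_i(z^{(l)},j)}$. Uhlmann's theorem guarantees a unitary acting on the complementary system---which is exactly the register holding $\hX[l+1,n]$ together with Alice's workspace---that achieves this distance between the pure states. This is the desired~$U_i$.

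For the second claim, I would follow the three-step skeleton of the classical proof. First, use that $K$ is uniform over $[n]$ to unfold
\[
d_i n \quad \geq \quad \tfrac{1}{2}\; \expct_{j \leftarrow J}\; \rI\bigl(X[j+1,n] : Q_i(X; X[1,j])\,\bigl|\,X[1,j]\bigr),
\]
keeping only the $J$-half of the expectation. Second, for a fixed sample $(x[1,j], j)$ with $j \in [n/2]$, apply the quantum Chain Rule (Proposition~\ref{fact-chain-quantum}) to the classical--quantum state indexed by the classical register $X[j+1,n]$, and lower-bound the chain-rule sum by restricting to indices $l \in [n/2+1,n]$, obtaining a sum of conditional mutual informations $\rI(X_l : Q_i(U_j;v_j) \mid X[j+1,l-1])$. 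Third, for each such $l$ and fixed $x[1,l-1]$, invoke the quantum Average Encoding Theorem (Proposition~\ref{thm-avg-quantum}) on the classical--quantum state formed by the single bit $X_l$ and Bob's reduced state, then combine with the Triangle Inequality for Bures distance and the inequality $(a+b)^2 \le 2(a^2+b^2)$ to conclude
\[
h_i(j,l,z)^2 \quad \leq \quad 4\kappa \, \rI\bigl(X_l : Q_i(U_{l-1};v_j)\bigr).
\]
Averaging this over $(X[1,L],J,L)$, chaining with the previous two inequalities, and finishing with the Jensen Inequality to convert the bound on $\expct h_i^2 \leq 16\kappa d_i$ into one on $\expct h_i$, gives the promised $4\sqrt{\kappa\, d_i}$.

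The main obstacle is purely bookkeeping: one must verify that each identity used in the classical case still makes sense when the ``transcript'' is replaced by a quantum register~$Q_i$ entangled with Alice's side. This works cleanly here because the registers on which we condition or average---$X[1,j]$, $X[j+1,l-1]$, and the single bit $X_l$---are all classical (having been implicitly measured in the computational basis in the definition of $\tqic^{\sA}_{\mu_0}$), so the joint state is classical--quantum and the quantum Chain Rule and Average Encoding Theorem apply verbatim. The identification of the averaged state in the Average Encoding Theorem with a reduced state of the joint classical--quantum system is justified by Proposition~\ref{lem-convexity}. No Cut-and-Paste style argument is needed for this lemma; the only non-trivial quantum ingredient is Uhlmann's theorem, which enters solely in the first claim.
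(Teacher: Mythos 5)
Your proposal matches the paper's proof of Lemma~\ref{thm-flip-quantum} step for step: the Local Transition Theorem gives the unitary $U_i$ achieving the Bures distance to Bob's reduced state, and the bound on $\expct h_i$ is obtained exactly as in Eqs.~(\ref{eqn-J-quantum})--(\ref{eqn-flip-quantum}) via the factor-two loss from restricting $K$ to $[n/2]$, the quantum Chain Rule truncated to indices $l' > n/2$, and the quantum Average Encoding Theorem with the Triangle Inequality, followed by Jensen. The remarks about cq-states justifying the quantum information-theoretic tools are accurate, though the reference to Proposition~\ref{lem-convexity} is not actually needed in this lemma (it appears in the proof of Lemma~\ref{thm-pairs-quantum} instead).
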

\begin{proof}
{}
Note that~$Q_i(zX[l+1,n]; z[1,k])$ for~$k \leq l$ is
the reduced state of~$\ket{\phi(z,k)}$ with the register holding~$\hX$
and Alice's workspace (the register holding state~$P_i$) traced out.
By the Local Transition Theorem, Proposition~\ref{fact:localtrans},
there is a unitary operator~$U_i$ that depends upon~$j,l,z$, acts on
the registers holding~$\hX[l+1,n]\, P_i$ alone, and is such that
\[
\fh{ \, (U_i \tensor \identity_\sB) \, \ket{\phi_i(z,j)}}
   { \ket{\phi_i(z^{(l)},j)}}
     \quad = \quad h_i(j,l,z) \enspace.
\]

Since~$Q_i^0 = Q_i(X\, ; \, X[1,K])$, we have
\begin{eqnarray}
\label{eqn-J-quantum}
\rI(X : Q_i(X\,;\, X[1,J]) \,|\, X[1,J]) 
    & \leq & 2 \; \rI(X : Q_i^0 \,|\, X[1,K]) \quad = \quad 2 d_i n \enspace.  
\end{eqnarray}
Fix~$j' \in [n/2]$ and~$z'' \in \set{0,1}^{j'}$. By the Chain Rule,
Proposition~\ref{fact-chain-quantum},
\begin{eqnarray}
\nonumber
\lefteqn{ \rI( X[j'+1,n] : Q_i(z'' X[j'+1,n] \,;\, z'') ) } \\
\nonumber
    & = & \sum_{l' = j'+1}^n 
          \rI( X_{l'} : Q_i( z'' X[j'+1,n] \,;\, z'')
               \,|\, X[j'+1,l'-1]) \\
\label{eqn-L-quantum}
    & \geq & \sum_{l' = n/2 +1}^n 
          \rI( X_{l'} : Q_i( z'' X[j'+1,n] \,;\, z'')
               \,|\, X[j'+1,l'-1]) 
          \enspace.
\end{eqnarray}
Moreover by the Triangle Inequality,
and the Average Encoding Theorem (Proposition~\ref{thm-avg-quantum}),
for any given $l' \in [n]-[n/2]$ and~$z'
\in \set{0,1}^{l'}$,
\begin{align}
\label{eqn-flip-quantum}
\nonumber
& \fh{  Q_i(z' X[l'+1,n] \,;\, z'[1,j'])}
     { Q_i(z'^{(l')} X[l'+1,n] \,;\, z'[1,j']) } \\
\nonumber
& \leq \quad \fh{  Q_i(z' X[l'+1,n] \,;\, z'[1,j'])}
     { Q_i(z'[1,l'-1] \, X_{l'} X[l'+1,n] \,;\, z'[1,j']) } \\
\nonumber
&  \qquad \mbox{} + \fh{  Q_i(z'^{(l')} X[l'+1,n] \,;\, z'[1,j'])}
     { Q_i(z'[1,l'-1] \, X_{l'} X[l'+1,n] \,;\, z'[1,j']) } \\
& \leq \quad \left[ \; 4 \kappa \;
       \rI( X_{l'} : Q_i(z'[1,l'-1]\, X_{l'} \, X[l'+1,n] \,;\, z'[1,j']))
        \; \right]^{1/2}    \enspace .
\end{align}
Combining Eqs.~(\ref{eqn-J-quantum}), (\ref{eqn-L-quantum}),
and~(\ref{eqn-flip-quantum}), we get
\begin{align*}
& \expct_{(j',l',z') \leftarrow (J,L,X[1,L])} \;
\fh{ Q_i(z' X[l'+1,n] \,;\, z'[1,j'])} 
   { Q_i(z'^{(l')} X[l'+1,n] \,;\, z'[1,j'])}^2 \\
& \leq \quad  4 \kappa \; \expct_{(j',l',z') \leftarrow (J,L,X[1,L])} \, 
              \rI( X_{l'} : Q_i(z'[1,l'-1]\, X_{l'} X[l'+1,n] \,;\, z'[1,j']))
    \\
& = \quad  4 \kappa \; \expct_{(j',l',z'') \leftarrow (J,L,X[1,J])} \, 
           \rI( X_{l'} : Q_i(z''\, X[j'+1,n] \,;\, z'')
               \, | \, X[j'+1,l'-1]) \\
& \leq \quad \frac{8\kappa}{n} \; \rI( X : Q_i(X \,;\, X[1,J]) \, | \, X[1,J]) 
\quad \leq \quad 16 \kappa \; d_i \enspace,
\end{align*}
as claimed.
\end{proof}

\begin{lemma}
\label{lem:key}
Let~$(U_i)_{i \in [t]}$, be the unitary operators given by
Lemmata~\ref{thm-pairs-quantum} and~\ref{thm-flip-quantum}.
For every odd~$r \in [t]$, 
\begin{eqnarray*}
\fh{ (U_r \otimes \identity_\sB) \, \ket{\phi_r(z,l)} }
             { \ket{\phi_r(z^{(l)}, l)} }
    & \leq  & h_r(j,l,z) + 2 \sum_{i = 1}^{r-1}  h_i(j,l,z)
            \enspace .
\end{eqnarray*}
For every even~$r \in [t]$,
\begin{eqnarray*}
\fh { ( \identity_\sA \otimes U_r) \ket{\phi_r(z^{(l)}, j)} }
        { \ket{\phi_r(z^{(l)}, l)} }
    & \leq & h_r(j,l,z) + 2 \sum_{i = 1}^{r-1}  h_i(j,l,z)
            \enspace .
\end{eqnarray*}
\end{lemma}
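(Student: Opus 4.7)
My plan is to prove both statements of Lemma~\ref{lem:key} by simultaneous induction on $r$, handling each parity by appealing to the induction hypothesis for the opposite parity at round $r-1$. First I would dispatch the base case $r = 1$ (odd): because round $1$ is Alice's, her round-$1$ unitary does not depend on Bob's index, so $\ket{\phi_1(z, l)} = \ket{\phi_1(z, j)}$ and $\ket{\phi_1(z^{(l)}, l)} = \ket{\phi_1(z^{(l)}, j)}$; Lemma~\ref{thm-flip-quantum} then yields $\fh{(U_1 \otimes I) \ket{\phi_1(z, l)}}{\ket{\phi_1(z^{(l)}, l)}} = h_1$, which matches the claimed bound $h_1 + 2 \sum_{i < 1} h_i = h_1$.

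The inductive step will rest on two structural properties that I first verify. First, \emph{commutativity}: the odd-$i$ correction $U_i$ from Lemma~\ref{thm-flip-quantum} acts only on the register $\hX[l+1, n]$ and Alice's workspace, which are disjoint from Bob's qubits and from the prefix register $\hX[1, l]$ that Bob reads, so $U_i$ commutes with every Bob-side unitary used in the protocol; symmetrically, the even-$i$ correction $U_i$ acts on Bob's workspace alone and commutes with every Alice-side unitary. Second, \emph{extension}: if $U_i$ bridges two purifications at round $i$ with Bures distance $d$, then applying the common round-$(i+1)$ unitary of the opposite party to both sides and commuting it past $U_i$ shows that the same $U_i$ bridges the corresponding purifications at round $i+1$ with the same $d$.

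For odd $r \geq 3$, I would apply the triangle inequality with intermediate states $(U_r \otimes U_{r-1}) \ket{\phi_r(z, j)}$ and $(I \otimes U_{r-1}) \ket{\phi_r(z^{(l)}, j)}$, where $U_r$ and $U_{r-1}$ are the correction unitaries from Lemmata~\ref{thm-flip-quantum} and~\ref{thm-pairs-quantum} respectively. After cancelling $U_r$ the first summand reduces to $\fh{\ket{\phi_r(z, l)}}{(I \otimes U_{r-1}) \ket{\phi_r(z, j)}} \leq h_{r-1}$, by extending Lemma~\ref{thm-pairs-quantum} at round $r-1$ via Alice's round-$r$ unitary; after cancelling $U_{r-1}$ the second summand reduces to $\fh{(U_r \otimes I) \ket{\phi_r(z, j)}}{\ket{\phi_r(z^{(l)}, j)}} = h_r$ directly by Lemma~\ref{thm-flip-quantum}; and the third summand $\fh{(I \otimes U_{r-1}) \ket{\phi_r(z^{(l)}, j)}}{\ket{\phi_r(z^{(l)}, l)}}$ is bounded by $h_{r-1} + 2 \sum_{i < r - 1} h_i$ by extending the induction hypothesis for even $r - 1$ via Alice's round-$r$ unitary. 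Summing gives $h_r + 2 h_{r-1} + 2 \sum_{i < r-1} h_i = h_r + 2 \sum_{i < r} h_i$. For even $r$ I would use the symmetric decomposition with intermediates $(I \otimes U_r)(U_{r-1} \otimes I) \ket{\phi_r(z, j)}$ and $(U_{r-1} \otimes I) \ket{\phi_r(z, l)}$; the first two summands go through analogously using that Bob's unitary at index $j$ agrees in the $z$- and $z^{(l)}$-worlds (since $j < l$). The hard part will be the third summand $\fh{(U_{r-1} \otimes I) \ket{\phi_r(z, l)}}{\ket{\phi_r(z^{(l)}, l)}}$: Bob's round-$r$ unitary at index $l$ reads Alice's prefix at position $l$, so extending the odd-$(r-1)$ induction hypothesis to round $r$ forces the two sides to use different Bob-unitaries and Bures distance is not directly preserved. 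I expect this discrepancy to be absorbed by the specific $U_r$ of Lemma~\ref{thm-pairs-quantum}, which is chosen via Uhlmann's theorem to optimally bridge the Alice-marginals at round $r$; matching these marginals should reconcile the two Bob-unitaries and yield the same final bound $h_r + 2 \sum_{i < r} h_i$.
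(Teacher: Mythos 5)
Your proposal follows essentially the same route as the paper's proof: simultaneous induction on $r$, the identical base case at $r=1$ via Lemma~\ref{thm-flip-quantum}, and triangle-inequality decompositions using the same intermediate states $(U_{r-1} \tensor \identity \tensor U_r)\ket{\phi_r(z,j)}$ and $(U_{r-1} \tensor \identity_\sB)\ket{\phi_r(z,l)}$ for even $r$ (and the symmetric ones for odd $r$, which the paper omits but you write out). The extension-to-round-$r$ step via commutativity of the correction unitaries with the protocol's unitaries is also exactly the paper's device.

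The one genuine issue in your write-up is the worry about the third summand for even $r$. You claim that ``Bob's round-$r$ unitary at index $l$ reads Alice's prefix at position $l$,'' so that in the $z$- and $z^{(l)}$-worlds his controlled unitaries would differ. This rests on a misreading of the model. In the protocol that Lemma~\ref{lem:key} analyzes, Bob holds his own registers for $k$ and $b$, and his local unitary at index $l$ is controlled on those registers together with Alice's qubits holding only $\hX[1,l-1]$ --- it never reads Alice's $l$-th qubit. In both run~01 and run~11, Alice's first $l-1$ qubits are $z[1,l-1]$ (since $z^{(l)}$ differs from $z$ only at position $l$) and Bob's own bit is $z_l$, so the controlled unitary $V_{r,z[1,l]}$ is literally the same operator in the two worlds. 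Bures distance is therefore preserved when this common unitary is applied to both sides, and the induction hypothesis (\ref{eqn-01-11}) extends to round $r$ without incident. The ``read-only access to $x[1,k]$'' convention is a shorthand valid only on $\mu_0$, where $b = x_k$; it cannot be applied to run~11, which is a $1$-input, so it is not the operative model here. Your speculative fallback --- that a mismatch would be ``absorbed'' by the Uhlmann unitary $U_r$ of Lemma~\ref{thm-pairs-quantum} --- would not work if it were needed: $U_r$ is determined by the $z$-inputs alone (runs~00 and~01), carries no information about $z^{(l)}$, and has already been spent on bounding the middle summand. Once the model is read correctly, the concern evaporates and your argument matches the paper's.
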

\begin{proof}
We prove the lemma by induction over~$r \in [t]$. The base case is~$r =
1$. By the convention we have adopted, Alice sends the first message.
Since the joint state immediately after the first message is independent
of Bob's input, we have
\[
\ket{\phi_1(z,l)} \quad =  \quad \ket{\phi_1(z,j)}
\qquad \textrm{and} \qquad
\ket{\phi_1(z^{(l)},l)} \quad =  \quad \ket{\phi_1(z^{(l)},j)} \enspace.
\]
That is, the state on the input of type~``01'' equals that on the
input of type~``00''. The same holds for inputs of type~``11'' 
and~``10''. Along with Lemma~\ref{thm-flip-quantum} we get
\begin{eqnarray*}
\lefteqn{
\fh{ (U_1 \otimes \identity_\sB) \, \ket{\phi_1(z,l)} } 
{ \ket{\phi_1(z^{(l)}, l)} }
} \\
    & = & \fh{ (U_1 \otimes \identity_\sB) \, \ket{\phi_1(z,j)} } 
                  { \ket{\phi_1(z^{(l)}, j)} }
    \quad = \quad h_1(j,l,z) \enspace.
\end{eqnarray*}
In other words, the state on the input of type~``01'' is, up to a
unitary operation on Alice's part, ``close'' to that on the input of
type~``11''.

\begin{figure}[ht]
\centerline{\includegraphics[width=450pt]{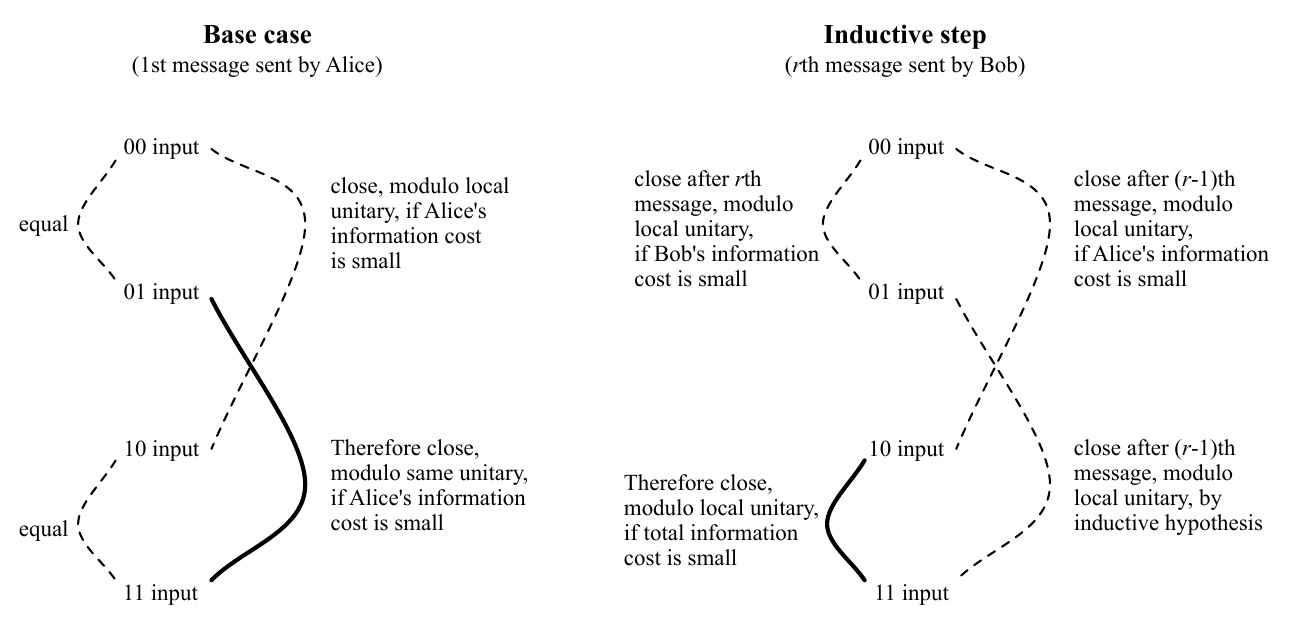}}
\caption{ \label{fig-proof}
The relationship between states at intermediate 
stages of the protocol, as described in the proof of Lemma~\ref{lem:key}.
}
\end{figure}

We prove that the lemma holds for~$r$, assuming that it holds 
for~$r - 1 \in [t]$. The argument here follows the same intuition as in
the base case, but is more involved because the analogous equalities 
need not hold.
However, the first pair of states may be shown to be close to each 
other, modulo a local unitary operator, by virtue of Bob's low 
information cost. The second pair are assumed to be close, again 
modulo a local unitary operator, by the inductive hypothesis. A careful 
hybrid argument then gives us the claimed bound. Figure~\ref{fig-proof}
depicts this schematically.

There are two cases: $r$ is odd, or~$r$ is even. We conduct the argument 
in the second case, when~$r$ is even.  The argument for~$r$ odd is 
similar, and is omitted.

By our convention, Bob sends the even numbered messages, including
the~$r$th message.
By Lemma~\ref{thm-pairs-quantum}, the states on the inputs of
type~``$00$'' and~``$01$'' are ``close'' up to the local
unitary~$U_r$, i.e., 
\begin{eqnarray}
\label{eqn-pairs-r}
\fh { \, ( \identity_\sA \otimes U_r) \, \ket{\phi_r(z, j)} }
        { \ket{\phi_r(z, l)} }
    & = & h_r(j,l,z) \enspace. 
\end{eqnarray}
Similarly, by Lemma~\ref{thm-flip-quantum}, the states \emph{before\/}
the~$r$th message on the inputs of type~``$00$'' and~``$10$'' are 
``close'' up to the local unitary~$U_{r-1}$, i.e.,
\begin{eqnarray}
\label{eqn-00-10}
\fh{ (U_{r-1} \otimes \identity_\sB) \, \ket{\phi_{r-1}(z,j)} }
             { \ket{\phi_{r-1}(z^{(l)}, j)} }
    & = & h_{r-1}(j,l,z) \enspace.
\end{eqnarray}
By the induction hypothesis, we also have the following relationship
between the states on inputs of type~``$01$'' and~``$11$'':
\begin{eqnarray}
\label{eqn-01-11}
\fh{ (U_{r-1} \otimes \identity_\sB) \, \ket{\phi_{r-1}(z,l)} }
             { \ket{\phi_{r-1}(z^{(l)}, l)} }
    & \leq & h_{r-1}(j,l,z) + 2 \sum_{i = 1}^{r-2}
                   h_i(j,l,z) \enspace .
\end{eqnarray}
Now
\begin{eqnarray*}
\ket{\phi_r(z,l)}
    & = & (\identity_\sA \tensor V_{r, z[1,l]} ) \,
                  \ket{\phi_{r-1}(z,l)} 
                  \enspace, \qquad \textrm{and} \\
\ket{\phi_r(z^{(l)},l)}
    & = & (\identity_\sA \tensor V_{r, z[1,l]} )  \,
                  \ket{\phi_{r-1}(z^{(l)},l)} \enspace,
\end{eqnarray*}
where~$V_{r, z[1,l]}$ is the unitary operator that Bob applies on his
part of the state (i.e., on the register holding state~$Q_{r-1}$
before sending the~$r$th message. Note that~$V_{r,z[1,l]}$
commutes with~$U_{r-1}$, as they act on disjoint sets
of qubits. Since the Bures distance is invariant under unitary operators,
Eq.~(\ref{eqn-00-10}) gives us
\begin{eqnarray}
\label{eqn-flip-r1}
\fh{ (U_{r-1} \otimes \identity_\sB) \, \ket{\phi_r}(z,j)}
             { \ket{\phi_r(z^{(l)}, j)} }
    & = & h_{r-1}(j,l,z) \enspace, 
\end{eqnarray}
and Eq.~(\ref{eqn-01-11}) gives us
\begin{eqnarray}
\label{eqn-flip-r2}
\fh{ (U_{r-1} \otimes \identity_\sB) \, \ket{\phi_r(z,l)} }
             { \ket{\phi_r(z^{(l)}, l)} }
    & \leq & h_{r-1}(j,l,z) + 2 \sum_{i = 1}^{r-2} h_i(j,l,z) \enspace .
\end{eqnarray}
By the Triangle Inequality, Eqs.~(\ref{eqn-pairs-r}), (\ref{eqn-flip-r1}),
and~(\ref{eqn-flip-r2}), and the observation that~$U_{r-1}$
and~$U_r$ act on disjoint sets of qubits, we get
\begin{eqnarray*}
\lefteqn{
\fh{ ( \identity_\sA \otimes U_r) \, \ket{\phi_r(z^{(l)}, j)} }
        { \ket{\phi_r(z^{(l)}, l)} }
} \\
    & \le & \fh{(\identity_\sA \otimes U_r) \, \ket{\phi_r(z^{(l)}, j)}}
           {(U_{r-1} \tensor \identity \otimes U_r) \, \ket{\phi_r(z, j)}}
            \\
    &     & \mbox{} + \fh{(U_{r-1} \tensor \identity \otimes U_r) \, 
            \ket{\phi_r(z, j)}}{ \ket{\phi_r(z^{(l)}, l)} } \\
    & =   & h_{r-1}(j,l,z) + \fh{(U_{r-1} \tensor \identity \otimes U_r) \, 
            \ket{\phi_r(z, j)}}{ \ket{\phi_r(z^{(l)}, l)} } \\
    & \le &  h_{r-1}(j,l,z)
             + \fh{(U_{r-1} \tensor \identity \otimes U_r) \, 
            \ket{\phi_r(z, j)}}{ (U_{r-1} \tensor \identity_\sB)  \, 
            \ket{\phi_r(z, l)}} \\
    &     & \mbox{} + \fh{(U_{r-1} \tensor \identity_\sB) \, 
            \ket{\phi_r(z, l)}}{ \ket{\phi_r(z^{(l)}, l)} } \\
    & \le & h_{r-1}(j,l,z) + h_r(j,l,z) 
            + \fh{(U_{r-1} \tensor \identity_\sB) \, 
            \ket{\phi_r(z, l)}}{ \ket{\phi_r(z^{(l)}, l)} } \\
    & \le & h_{r}(j,l,z) + 2 \sum_{i = 1}^{r-1} h_i(j,l,z) \enspace .
\end{eqnarray*}
(The identity operators without a subscript in this derivation act on
the space of the~$r$th message.)  This completes the induction step.
\end{proof}

\section{Concluding remarks}
\label{sec-final}

The main focus of this article is the amount of information two parties
necessarily reveal about their inputs in the process of the computing a
function in a distributed manner. The function of interest is \AIndex,
a natural variant of the \Index\ function that is ubiquitous in
communication complexity. We show that in
any randomized communication protocol that computes this function 
correctly with constant error on the uniform distribution (a ``hard'' 
distribution), either Alice reveals~$\Omega(n)$ information about 
her $n$-bit input, or Bob reveals~$\Omega(1)$ information about his
$(\log n)$-bit input, even when the inputs are drawn from the uniform 
distribution over inputs which evaluate to~$0$. 
At first glance, a trade-off under a distribution on inputs on which 
the function value is \emph{known in advance\/} may appear to be 
counter-intuitive. This is a consequence of the correctness of the
protocol on the hard distribution. Such a phenomenon was first
demonstrated by Bar-Yossef, Jayram, Kumar, and 
Sivakumar~\cite{Bar-YossefJKS04}.

The motivation for this work comes from the study of tasks that may 
be accomplished with a few sequential scans of massive data,
using significantly smaller memory, i.e., through streaming algorithms.
The above result has implications for the space required by streaming 
algorithms for \dyck(2), the problem of checking the syntax of a 
parenthesized expression. It implies that for this problem, we need 
space~$\sqrt{n}/T$ on inputs of length~$n$, when allowed~$T$ unidirectional 
passes over the input.

The proof of the information cost trade-off showcases a modular and
conceptually simple technique involving the Average Encoding Theorem and the
Cut-and-Paste Lemma. Originally developed to analyse properties of
quantum protocols, Average Encoding has been used more widely in
classical complexity theory. For instance, it has been used to derive
lower bounds for data structures~\cite{SenV08}, and can be 
used to derive the ``Disguising Distribution Lemma''~\cite{Drucker12},
which has applications for instance compression. The technique developed
in this article has also been adapted by Fran{\c c}ois and
Magniez to prove space lower bounds for the problem of checking priority
queues with time stamps in the streaming model~\cite{FrancoisM13}. We
expect that these tools have yet more applications in information
processing.

A few recent works show how simple \emph{quantum\/} streaming
algorithms may use exponentially smaller amount of space as compared
with classical ones~\cite{LeGall09,GavinskyKKRW08}. We ask if there is
similar advantage in solving a natural and important problem such as
\dyck(2). We make partial progress in this direction, by establishing 
a \emph{quantum\/} information cost trade-off for \AIndex. We show that
in quantum protocols that compute \AIndex\ correctly with constant 
error on the uniform distribution, either Alice reveals~$\Omega(n/t)$
information, or Bob reveals~$\Omega(1/t)$ information, where~$t$ is 
the number of messages in the protocol, even when the inputs are 
drawn from the aforementioned easy distribution.

The quantum information cost trade-off by itself does not imply a space
lower bound for streaming quantum algorithms. The reduction from
streaming algorithms for $\dyck(2)$ with small space to quantum 
two-party protocols for \AIndex\ breaks down for the notion of
information cost we adopt. We conjecture a trade-off similar to
Theorem~\ref{thm-main-quantum} for the notion of information cost
in Eqs.~(\ref{eqn-qic1}) and~(\ref{eqn-qic2}). We leave the resolution 
of this conjecture as an intriguing open problem.


\end{document}